\title{(Quantum) Min-Entropy Resources}
\author[1,2]{Christopher Portmann\email{chportma@ethz.ch}}
\affil[1]{Institute for Theoretical Physics, ETH Zurich, 8093 Zurich, Switzerland.}
\affil[2]{Department of Computer Science, ETH Zurich, 8092 Zurich, Switzerland.}
\date{\today}
\newcommand{\C}{\complex}
\newcommand{\ball}[2]{\left(#2\right)^{#1}}
\newcommand{\BB}{\text{BB84}} 
\newcommand{\qkd}{\text{qkd}}
\newcommand{\dis}{\text{dis}}
\newcommand{\ec}{\text{ec}}
\newcommand{\pa}{\text{pa}}
\newcommand{\noise}{\text{noise}}
\newcommand{\adv}{\text{adv}}
\DeclareMathOperator{\synd}{synd}
\DeclareMathOperator{\corr}{corr}
\newcommand{\textcorr}{\text{corr}}
\newcommand{\verif}{\text{verif}}
\newcommand{\mac}{\mathsf{mac}}
\newcommand{\sss}{\mathsf{ss}}
\newcommand{\simE}{\mathsf{sim}_E}
\newcommand{\cj}{Choi\-/Ja\-mi\-o\l{}\-kow\-ski\xspace}
\newcommand{\cjrep}{Choi\-/Ja\-mi\-o\l{}\-kow\-ski representation\xspace}
\newcommand{\vacuum}{\ket{\Omega}}
\newcommand{\T}{\cT}
\newcommand{\LtwoOp}{\ell^2}
\newcommand{\Ltwo}[2]{#2 \tensor \LtwoOp\brs{#1}}
\newcommand{\LtwoTC}{\Ltwo{\T}{\C^d}}
\newcommand{\fock}[2][@]{\mathcal{F}\ifthenelse{\equal{#1}{@}}{}{_{#1}}\brs{#2}}
\newcommand{\tcop}[1]{\mathfrak{T}\brs{#1}}
\begin{document}

\maketitle

\begin{abstract}
  We model (interactive) resources that provide Alice with a string
  $X$ and a guarantee that any Eve interacting with her interface of
  the resource obtains a (quantum) system $E$ such that the
  conditional (smooth) min\-/entropy of $X$ given $E$ is lower bounded
  by some $k$, $\HminSmooth{\delta}{X|E} \geq k$. This (abstract)
  resource specification encompasses any setting that results in the
  honest players holding such a string (or aborting). For example, it
  could be constructed from, e.g., noisy channels, quantum key
  distribution (QKD), or a violation of Bell inequalities, which all
  may be used to derive bounds on the min\-/entropy of $X$.

  As a first application, we use this min\-/entropy resource to
  modularize key distribution (KD) schemes by dividing them in two
  parts, which may be analyzed separately. In the first part, a KD
  protocol constructs a min\-/entropy resource given the (physical)
  resources available in the specific setting considered. In the
  second, it distills secret key from the min\-/entropy resource \---
  i.e., it constructs a secret key resource. We prove security for a
  generic key distillation protocol that may use any min\-/entropy
  resource. Since the notion of resource construction is composable
  \--- security of a composed protocol follows from the security of
  its parts \--- this reduces proving security of a KD protocol
  (e.g., QKD) to proving that it constructs a min\-/entropy resource.

  As a second application, we provide a composable security proof for
  the recent Fehr-Salvail protocol [EUROCRYPT 2017] that authenticates
  classical messages with a quantum message authentication code
  (Q-MAC), and recycles all the key upon successfully verifying the
  authenticity of the message. This protocol uses (and recycles) a
  non\-/uniform key, which we model as consuming and constructing a
  min\-/entropy resource.
\end{abstract}
  % We use this min\-/entropy resource in conjunction with the Abstract
  % Cryptography (AC) framework to prove several new
  % results. Information\-/theoretic key distribution protocols may
  % generally be decomposed in two parts. The first constructs a
  % min\-/entropy resource given the (physical) resources available in
  % the specific setting considered. The second distills secret key from
  % the min\-/entropy resource. We perform this key distillation in a
  % generic way for any min\-/entropy resource. As a direct consequence,
  % it is sufficient to prove that a key distribution protocol (e.g.,
  % QKD) constructs a min\-/entropy resource. It then follows from the
  % composition theorem of AC that composing such a protocol with the
  % key distillation analyzed here generates secret key. As a second
  % application, we provide a composable security proof for the recent
  % Fehr-Salvail protocol [EUROCRYPT 2017] that authenticates classical
  % messages with a quantum message authentication code (Q-MAC), and
  % recycles all the key upon successfully verifying the authenticity of
  % the message. This protocol uses (and recycles) a non\-/uniform key,
  % which we model as consuming and constructing a min\-/entropy
  % resource.

% {\bf Keywords: } Key words?

% \clearpage
% \phantomsection
% \pdfbookmark[1]{\contentsname}{sec:toc}
% \tableofcontents
% \newpage
% \phantomsection
% \pdfbookmark[1]{\listfigurename}{sec:lof}
% \listoffigures
% \clearpage

\section{Introduction}
\label{sec:intro}

When formalizing information processing systems, one models the
information that matters, e.g., the input and output values of the
system. A concrete implementation is much more complex, e.g., values
are output at certain times, encoded in some physical form. If the
exact physical encoding or timing information is not considered
relevant, this is generally not modeled as part of the abstract
representation of the information processing system. Such a model can
be described as a \emph{specification} of a system, since it specifies
certain behaviors that the system must have, but leaves many
(irrelevant) parameters unspecified.

The level of specification of information processing systems found in
the literature is often fixed, e.g., the Universal Composability (UC)
framework~\cite{Can01,Can13} provides a concrete model of systems in
terms of interactive Turing machines, and all objects in the framework
must be specified in the given language (or something equivalent) with
the given level of specificity. Using such a framework, one can
neither model more specific objects (e.g., a protocol run at a
specific location in time and space), nor less specific objects (e.g.,
a non\-/deterministic behavior), and even less compose such objects
with each other.

The Abstract Cryptography (AC) framework~\cite{MR11} uses a top\-/down
approach to model cryptographic security, which does not have this
limitation. The more traditional bottom\-/up approach starts with a
concrete model of information processing systems, then defines how
these systems communicate, how scheduling is performed, how parties
are corrupted, etc. Instead of this, AC starts on the most abstract
level. It models abstract objects satisfying certain axioms that are
needed to make cryptographic statements. These objects can then be
instantiated with, e.g., different models of computation,
communication, scheduling, etc. In particular, as done in \cite{MR16},
AC can be instantiated with a model of \emph{resource specifications},
where two resources satisfy a relation $\aR \subset \aS,$ if $\aR$ is
more specific than $\aS$, i.e., any concrete system satisfying the
specifications of $\aR$ also satisfies those of $\aS$. Such resource
specifications can be composed and a security statements about the
composed system may be derived from the framework.

\subsection{Contributions}

% Our first contribution is to further develop the model of resource
% specifications from \cite{MR11,MR16}. We generalize the notion of
% specification construction to incorporate errors $\eps$, and then
% prove that sequential and parallel composition of two constructions
% with errors $\eps$ and $\delta$ results in a new construction with
% error at most $\eps+\delta$.

Our main contribution is to use the concept of resource
specifications~\cite{MR11,MR16,del15,dRKR15} to model min\-/entropy
resources, i.e., systems that provide honest player(s) with a random
string $X$ and an adversary with a (possibly quantum) system $E$ such
that the joint state $\rho_{XE}$ has bounded conditional (smooth)
min\-/entropy,\footnote{A formal definition of smooth min\-/entropy is
  provided in \appendixref{app:basic.min-entropy}.}
$\HminSmooth[\rho]{\delta}{X|E} \geq k$. The same techniques could be
used for other entropy measures. The reason we choose the smooth
min\-/entropy, is that it has many applications in cryptography, e.g.,
it caracterizes the maximum amount of secret key that can be extracted
from a source~\cite{Ren05}.

Specifications are essential in modeling min\-/entropy resources,
since we generally do not know (and do not care) how the joint state
$\rho_{XE}$ was generated \--- whether the adversary can influence it
or not, whether this is done with one round or multiple rounds of
interaction. Any system which interacts arbitrarily with the
adversary, but provides a random string $X$ to the honest player(s)
with a guaranteed lower bound on its conditional min\-/entropy
satisfies the specification, and can be used by protocols requiring
such a bound.

We then present two applications in which these min\-/entropy
specifications are needed to model the security of cryptographic
schemes. In the first, we show how to distill secret key from any
min\-/entropy resource shared between two players. The protocol is
standard, it uses error correction and privacy amplification to
construct a secret key resource. But the proof is not restricted to a
certain context or specific source of randomness, it is valid for any
min\-/entropy resource, e.g., whether it is constructed by noisy
classical channels~\cite{Mau93,AC93,RW05}, standard quantum key
distribution (QKD)~\cite{BB84,Ren05,TLGR12,HT12,TL15} or even
untrusted (but non\-/communicating)
devices~\cite{BHK05,PABGMS09,VV14,AFRV16}. We discuss this application
further in \secref{sec:intro.distil}.

Our second application is to provide a composable security proof for a
recent authentication protocol by Fehr and Salvail~\cite{FS17a,FS17b},
that encodes classical messages in quantum states. Due to the quantum
properties of these states, all of the secret key used in the protocol
can be recycled once the recipient has confirmed the authenticity of
the message. But a small loss of entropy occurs in the key, so it
cannot be modeled as a standard (uniform) key
resource~\cite{PR14}. Instead, we model the shared key as a
min\-/entropy specification, and prove that a slightly modified
version~\cite{FS17b} of the original Fehr-Salvail
protocol~\cite{FS17a} uses a min\-/entropy resource and an insecure
channel to construct an authentic channel and a new min\-/entropy
resource (the recycled key). This result is discussed further in
\secref{sec:intro.FS}.

\subsection{Distilling Secret Key}
\label{sec:intro.distil}

Secret key distillation from a joint probability distribution
$P_{XYE}$ between three players, Alice, Bob, and Eve, was proposed
independently by Maurer~\cite{Mau93} and by Ahlswede and
Csisz\'ar~\cite{AC93}. This problem has since been generalized to many
different contexts, e.g., in a finite setting~\cite{RW05}, when $E$ is
quantum and the tripartite state $\rho_{XYE}$ has been obtained from a
QKD protocol~\cite{Ren05,TLGR12,HT12,TL15}, or in a
device\-/independent setting, where the devices generating
$\rho_{XYE}$ are untrusted~\cite{VV14,AFRV16}. The protocols always
follow the same pattern: if needed, one performs some \emph{advantage
  distillation} to increase the correlations between the honest
players, then one performs \emph{error correction} so that they share
the same strings, and finally \emph{privacy amplification} to extract
a secret key. The different security proofs found in the literature
that one can obtain secret key given a bound on the correlations of
the state $\rho_{XYE}$ are very similar, with (small) variations to
account for the changes in
context. % This is particularly obvious when one looks at proofs in the
% literature for one kind of protocol, e.g., there is typically a large
% overlap between different QKD security proofs; they all perform
% similar error correction and privacy amplification steps with minor
% changes to adapt to each setting.

This overlap between different works can be avoided by employing a
modular (composable) approach: a task is divided in different parts,
and the security of each part is proven separately. Thus, if one part
is changed, only that piece requires a new security proof; the new
piece can then be seamlessly plugged into the other parts. The AC
framework~\cite{MR11,MR16} formalizes this as a resource theory: a
cryptographic protocol uses some resource specification $\aR$ to
construct some other resource specification $\aS$. Different protocols
may be used to construct $\aS$ in different ways. The next part of a
cryptosystem may use $\aS$ to construct $\aT$, and is oblivious to how
$\aS$ was constructed. The piece constructing $\aS$ may be changed at
will, without affecting anything else.

More concretely, we prove in this work that a min\-/entropy resource
may be used to construct a secret key resource \--- using the
aforementioned steps of error correction and privacy amplification. As
a consequence, information\-/theoretic key distribution protocols
(e.g., QKD) do not need to show that they produce secret key (and
needlessly repeat many of the steps), it is sufficient to prove that
they construct a min\-/entropy resource, i.e., that they generate a
raw key with a bound on its min\-/entropy. Bounds on the security of
the final key then follow generically from the composability of the AC
framework by plugging this into our work.

This also provides clear conditions on the min\-/entropy of the raw
key that are sufficient for the error correction and privacy
amplification to go through (namely, that the constructed resource
corresponds to a min\-/entropy specification, which we define in
\secref{sec:min-entropy}).

\subsection{Quantum Authentication of Classical Messages}
\label{sec:intro.FS}

As far back as 1982, Bennett, Brassard, and Breidbart~\cite{BBB82}
discussed how one could authenticate a classical message in a quantum
state, so that after confirming reception of the original message, the
secret key used by the protocol can be reused. A recent breakthrough
result by Fehr and Salvail~\cite{FS17a,FS17b} showed how to do this
with a \emph{prepare\-/and\-/measure} protocol, i.e., one which
involves only preparing and measuring single qubit states, and which
could already be implemented with today's
technology~\cite{SBCDLP09}. This has many interesting applications,
e.g., it could be used as a subroutine for authenticating messages in
QKD.\footnote{Whether this application is practical depends on the
  noise tolerance of the Fehr-Salvail protocol, which was has not been
  worked out~\cite{FS17a}.} One would then not need to sacrifice any
key bits for authentication. % What is more, \cite{FS17a}
% also encrypts the messages, so the loss of randomness due to the error
% correction subroutine leaking the syndrome does not occur.

Fehr and Salvail use a trace\-/distance\-/type security criterion that
is tailored for substitution attacks \--- Eve changes the message
being sent. They then show that this criterion is sufficient to prove
that their protocol can be composed sequentially with
itself. Different contexts, such as sequential composition with other
protocols,\footnoteremember{fn:intro.composition}{Ad hoc security
  definitions for individual protocols \--- e.g.,
  trace\-/distance\-/type criteria \--- do not necessarily guarantee
  that a composed protocol is secure. When using such definitions, one
  needs to additionally prove (for every different context) that the
  protocol can be safely used, and work out the corresponding
  error. For example, Fehr and Salvail do this for sequential
  composition of their protocol with itself~\cite{FS17a}. This can be
  avoided by using composable security definitions \--- e.g., the
  notion of resource construction from AC \--- which guarantee
  security in any environment. The error of the composed protocol is
  then the sum of the erros of the individual components.} parallel
composition,\footnoterecall{fn:intro.composition} or impersonation
attacks \--- Eve sends a forged cipher to Bob without knowledge of a
valid message\-/cipher pair \--- were not explicitly considered in
\cite{FS17a}.\footnote{Following an initial draft of the current work
  pointing out the issue with impersonation attacks and proposing a
  solution that recycles less key in the case of a reject, an extended
  version of the Fehr-Salvail paper was made available~\cite{FS17b},
  which sketches how the protocol can be modified to resist
  impersonation attacks without any extra loss of key.} When key
recycling is involved, impersonation attacks are particularly
powerful, because they allow Eve to obtain part of the recycled key
after Bob receives the forged cipher (e.g., he uses it in a protocol
which leaks it, like a one-time pad), which she can use to generate a
message correlated to the key and provide it to Alice for
authentication. The resulting cipher could then potentially leak more
information about the remaining secret key than a cipher prepared by
Alice under a substitution attack, where the message and key are
independent. Note that even if the protocol is only composed in a
restricted setting in which the recycled key is not leaked, the
accept/reject bit of the receiver is generally correlated to the
recycled key and cannot be hidden, which leads to the same type of
attacks.

This raises the question of whether there are other attacks or
vulnerabilities that have not been considered, and what security
criteria must be satisfied for the protocol to be usable, e.g., as a
subroutine in QKD. This is answered in a generic way by composable
frameworks. In the AC language, a QKD protocol uses an authentic
channel resource specification $\aA$ for the players to
communicate~\cite{PR14}. A standard authentication scheme that appends
a message authentication code (MAC) to the message constructs such an
authentic channel $\aA$~\cite{Por14}. And hence, by the composition
theorem of the framework, the two may be composed, and the total error
is the sum of the errors of the individual protocols.

In this work we perform such a composable analysis for a slightly
modified version~\cite{FS17b} of the original Fehr-Salvail
protocol~\cite{FS17a}. We show that this protocol uses a min\-/entropy
resource $\aH^{k}_{\min}$, a uniform key resource $\aK$, and an
insecure channel $\aQ$ to construct a new min\-/entropy resource
$\aH^{k}_{\min}$, a new uniform key $\aK$, and an authentic channel
$\aA$. The constructed channel $\aA$ may then be used by any protocol
requiring such a resource, while the constructed key resources
$\aH^{k}_{\min}$ and $\aK$ (the recycled keys) may be plugged into the
next round of authentication \--- or any other protocol requiring such keys.

% As a consequence, security of the Fehr-Salvail protocol when composed
% with QKD as suggested above does not follow from their
% work.\footnote{Using a composable framework along with the
%   min\-/entropy resources developed in this work, one could prove that
%   the security criterion used in \cite{FS17a,FS17b} is sufficient to
%   obtain composability in a restricted setting, where the honest
%   players control the scheduling and thus prevent impersonation
%   attacks. This is however not an assumption generally made in QKD.}
% We address this by providing a composable security proof

\subsection{Structure of the paper}
\label{sec:intro.structure}

We start by introducing the AC framework in \secref{sec:cc}, where we
instantiate the systems with resource specifications. In
\secref{sec:min-entropy} we then define the min\-/entropy
specifications that are used throughout this work. In
\secref{sec:distil} we show how to distill secret key from such a
min\-/entropy resource. And in \secref{sec:FS} we provide a composable
security proof for the Fehr-Salvail authentication protocol, in which
the non\-/uniform keys are modeled as min\-/entropy resources.

\section{Constructive Cryptography}
\label{sec:cc}

The AC framework~\cite{MR11,MR16} models cryptography as a resource
theory, e.g., a QKD protocol constructs a secret key resource from an
authentic channel and an insecure quantum channel. More generally, a
security statement is a constructive statement of the form ``$\gamma$
constructs $\aS$ from $\aR$ with error $\eps$,'' which we denote
\begin{equation} \label{eq:construction} 
\aR \xrightarrow{\gamma,\eps} \aS.
\end{equation}
For this reason the framework is also called \emph{constructive
  cryptography} in the literature~\cite{Mau12,MR16}.

In \eqref{eq:construction}, $\aR$ and $\aS$ are resource
specifications denoting the resources that are used by the
construction and the ones that are achieved by the construction
$\gamma$, respectively. Naturally, if $\gamma$ constructs $\aS$ from
$\aR$ and $\pi$ constructs $\aT$ from $\aS$, then one expects that
applying both constructions in sequence constructs $\aT$ from $\aR$
with an error which is the sum of the individual errors,
i.e.,
\begin{equation} \label{eq:serial}
\aR \xrightarrow{\gamma,\eps} \aS  \quad \text{and} \quad
\aS \xrightarrow{\pi,\delta} \aT
\implies \aR \xrightarrow{\pi \circ \gamma,\eps+\delta} \aT.
\end{equation}
Similarly, if $\gamma_1$ constructs $\aS_1$ from
$\aR_1$ and $\gamma_2$ constructs $\aS_2$ from $\aR_2$, then one expects that
if resources $\aR_1$ and $\aR_2$ are both available simultaneously,
and one applies both constructions, this should result in the
resources $\aS_1$ and $\aS_2$ both being constructed,
i.e.,
\begin{equation} \label{eq:parallel}
\aR_1 \xrightarrow{\gamma_1,\eps_1} \aS_1  \quad \text{and} \quad
\aR_2 \xrightarrow{\gamma_2,\eps_2} \aS_2 
\implies \aR_1 \| \aR_2 \xrightarrow{\gamma_1 |
\gamma_2,\eps_1+\eps_2} \aS_1 \| \aS_2.
\end{equation}

In the rest of this section we formalize the notions of resource
specifications and construction, which allows us to define
constructive security statement such as \eqnref{eq:construction} and
prove that \eqnsref{eq:serial} and \eqref{eq:parallel} hold with this
notion of construction. We do this following the AC top\-/down
approach, i.e., we only define the properties of the objects that are
needed, allowing them to be instantiated with any concrete model that
satisfies these properties.

\subsection{Specifications}
\label{sec:cc.spec}

As mentioned in \secref{sec:intro}, a specification can be thought of
as a (partial) description of an object to the accuracy needed for the
task at hand. This concept is much more general than modeling
cryptographic systems, and has been used to model resource theories in
physics~\cite{del15,dRKR15}. For example, let $\aS$ denote a chair. A
chair with wheels $\aR$ is more specific than a chair. We thus have
$\aR \subset \aS$, where $\subset$ is a transitive relation meaning
``more specific than''.  A piece of furniture $\aT$ is less specific,
hence $\aT \supset \aS$. As another example, consider instructions for
building a model car, which state that a piece should be painted. Let
$\aR$ denote the pair of the piece from the model car and some paint,
and let $\gamma$ denote the action of painting the piece. Then given
$\aR$, $\gamma$ constructs a painted piece $\aS$. When actually
building the model car, the paint chosen will have a certain color,
e.g., red. Let $\aR'$ denote the pair of the piece from the model car
and red paint. We then have $\aR' \subset \aR$, because $\aR'$ is more
specific than $\aR$. And applying $\gamma$ to $\aR'$ we obtain
$\aS' \subset \aS$, where $\aS'$ is the piece of the car painted
red. The instructions apply to anything that satisfies the
specifications, in particular, to objects that are more specific.

We define specifications as in \cite{MR16}: a specification is a
subset of a universe $\Phi$ of objects, namely those that satisfy the
specification. The relation $\aR \subset \aS$ for
$\aR,\aS \in \cP(\Phi)$ is then simply the subset relation, where
$\cP(\Phi)$ denotes the power set of $\Phi$.\footnote{Specifications
  are more general than this, and may be instantiated in other ways
  than with sets.} We call the elements $\sR \in \Phi$ \emph{atomic
  resources}, and the specifications $\aR \in \cP(\Phi)$ are
\emph{resource specifications}.

As two examples, we have illustrated in \figref{fig:channel}
specifications for noiseless and noisy channels. Here the atomic
resources correspond to individual channels defined by their input and
output behavior, and the specification denotes any set of channels
satisfying the specification, e.g., depolarizing channels with noise
$\lambda \leq p$ as in \figref{fig:channel.noisy}.

\begin{figure}[tb]
  \subcaptionbox[Noiseless]{\label{fig:channel.noiseless}A noiseless channel.}[.5\textwidth][c]{
\begin{tikzpicture}[
resource/.style={draw,thick,minimum width=3.2cm,minimum height=1cm},
sArrow/.style={-{Stealth[scale=.75]},thick}]

\small

\def\t{2.3} %1.6+.7

\node[resource] (ch) at (0,0) {};
\node[yshift=-1.5,above] at (ch.north) {Noiseless channel $\aS_{\id}$};
\node (alice) at (-\t,0) {};
\node (bob) at (\t,0) {};

\draw[sArrow] (alice.center) to node[auto] {$\rho$} (ch);
\draw[sArrow] (ch) to node[auto] {$\rho$} (bob.center);
\end{tikzpicture}}
\subcaptionbox[Noisy channel]{\label{fig:channel.noisy}Any depolarizing
  channel given by
  $\cE_\lambda(\rho) = (1-\lambda) \rho + \frac{\lambda}{d}
  I$, where $\lambda \leq p$.}[.5\textwidth][c]{
\begin{tikzpicture}[
resource/.style={draw,thick,minimum width=3.2cm,minimum height=1cm},
sArrow/.style={-{Stealth[scale=.75]},thick}]

\small

\def\t{2.3} %1.6+.7

\node[resource] (ch) at (0,0) {};
\node[yshift=-1.5,above] at (ch.north) {Depolarizing channels $\aR_p$};
\node (alice) at (-\t,0) {};
\node (bob) at (\t,0) {};

\draw[sArrow] (alice.center) to node[auto] {$\rho$} (ch);
\draw[sArrow] (ch) to node[auto,pos=.7] {$\cE_\lambda(\rho)$} (bob.center);
\end{tikzpicture}}

\caption[Channels]{\label{fig:channel}Two examples of resource
  specifications. \subref{fig:channel.noiseless} is a specification
  that contains one channel, the identity. \subref{fig:channel.noisy}
  is a specification for any depolarization channel with
  $\lambda \leq p$.}
\end{figure}
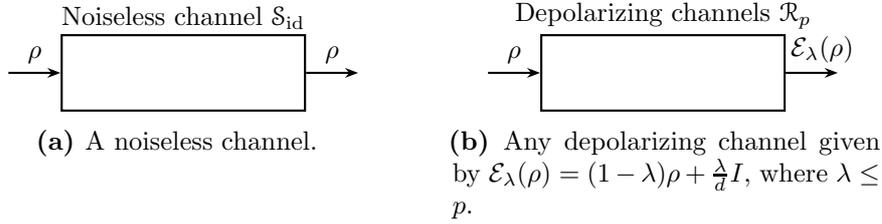

\subsection{Converters}
\label{sec:cc.converters}

To model cryptographic protocols, we consider resources that have
\emph{interfaces}, where each interface can be accessed by one
user. For example, the channel specifications from
\figref{fig:channel} have an input interface and an output interface,
with which the sender (Alice) and receiver (Bob) may interact \--- to
send and receive messages, respectively. We usually label interfaces
so as to make explicit who controls the interface, e.g., $A$, $B$,
$E$, for Alice, Bob, and Eve.

A local operation performed at one interface is called a
\emph{converter}. For example, Alice may encode her message with a
converter $\alpha$ and Bob may decode it with a converter $\beta$, as
drawn in \figref{fig:ecc}. Formally, we consider a set of objects
$\Sigma$, called converters, which are functions mapping a resource
specification to another specification, i.e.,
$\alpha : \cP(\Phi) \to \cP(\Phi)$ for $\alpha \in \Sigma$. For
example, the resource constructed in \figref{fig:ecc} is given by
$\beta(\alpha(\aR_p))$ \--- or, equivalently,
$\alpha(\beta(\aR_p))$\--- which we simply write $\beta \alpha \aR_p$
(or $\alpha \beta \aR_p$). We usually draw converters with rounded
corners.

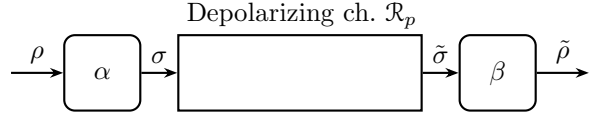
\begin{figure}[tb]
\centering
\begin{tikzpicture}[
resource/.style={draw,thick,minimum width=3.2cm,minimum height=1cm},
converter/.style={draw,thick,rounded corners,minimum width=1cm,minimum height=1cm},
sArrow/.style={-{Stealth[scale=.75]},thick}]

\small

\def\t{3.8} %1.6+.5+1+.7
\def\x{2.6} %1.6+1

\node[resource] (ch) at (0,0) {};
\node[yshift=-1.5,above right] at (ch.north west) {Depolarizing ch.\ $\aR_p$};
\node[converter] (alpha) at (-\x,0) {$\alpha$};
\node[converter] (beta) at (\x,0) {$\beta$};
\node (alice) at (-\t,0) {};
\node (bob) at (\t,0) {};

\draw[sArrow] (alice.center) to node[auto] {$\rho$} (alpha);
\draw[sArrow] (alpha) to node[auto] {$\sigma$} (ch);
\draw[sArrow] (ch) to node[auto] {$\tilde{\sigma}$} (beta);
\draw[sArrow] (beta) to node[auto] {$\tilde{\rho}$} (bob.center);
\end{tikzpicture}

\caption[Error correction]{\label{fig:ecc}Error correction converters
  $\alpha$ and $\beta$ are applied to Alice and Bob's interfaces of the
  channel specification $\aR_p$.}
\end{figure}

We often use subscripts to denote the interface to which a converter
applies, e.g., $\beta_B \alpha_A \aR_p$.  Converters applied at
different interfaces must commute (as in \figref{fig:ecc}), i.e.,
\[ \beta_B \alpha_A \aR = \alpha_A\beta_B \aR.\]
Composition of converters, $\beta \circ \alpha$, is defined by
\[ (\beta \circ \alpha) \aR \coloneqq \beta (\alpha \aR).\]
Converters must conserve the specificity relation, i.e.,
\[ \aR \subset \aS \implies \alpha \aR \subset \alpha \aS.\]
The set $\Sigma$ must also be closed under composition and contain an
identity element $\id \in \Sigma$ satisfying
\[\id \circ \alpha = \alpha \circ \id = \alpha.\]
In this work we usually write converters for honest players on the
left of the resources, and converters for dishonest players on the
right, i.e., we write $\alpha_A \aR \sigma_E$ instead of
$\sigma_E \alpha_A \aR$, where $E$ denotes Eve's interface.

Note that a set of converters $\Sigma$ that are maps $\alpha : \Phi \to
\Phi$ on the set of atomic resources satisfying the properties above
immediately induces a set of converters on resource specifications
with \[\alpha \aR \coloneqq \left\{ \alpha \sR : \sR \in \aR \right\}.\]

\subsection{Approximations}
\label{sec:cc.approximations}

After constructing the resource $\beta \alpha \aR_p$ illustrated in
\figref{fig:ecc}, one typically wants to state that it approximately
corresponds to a noiseless channel $\aS_{\id}$, i.e., $\beta \alpha
\aR_p \subset \aS^\eps_{\id}$, where $\aS^\eps_{\id}$
is an $\eps$\=/ball around $\aS_{\id}$ containing any channel that is
$\eps$\=/close to $\aS_{\id}$. Such an $\eps$\=/ball is defined with
respect to a pseudo\-/metric on atomic resources, $d : \Phi \times
\Phi \to \reals^+$,
\[ \aR^\eps \coloneqq \left\{\sS \in \Phi : \exists \sR \in \aR, d(\sR,\sS)
  \leq \eps\right\}.\]

It follows from the triangle inequality of the pseudo\-/metric that
\[ \ball{\delta}{\aR^\eps} \subset \aR^{\eps+\delta}.\]
Furthermore, if the pseudo\-/metric is contractive, i.e., for any
$\alpha \in \Sigma$ and any $\sR,\sS \in \Phi$,
$d(\alpha \sR,\alpha \sS) \leq d(\sR,\sS)$, then
\[ \alpha \aR^\eps \subset \ball{\eps}{\alpha \aR}.\]

\subsection{Resource Composition}
\label{sec:cc.composition}

If two resources $\aR$ and $\aS$ are simultaneously accessible to the
players, we wish to write this as one resource corresponding to the
\emph{(parallel) composition} of both, i.e., they are merged into one
resource such that the interfaces of each resource are simultaneously
available to each player. To do this, we define a parallel composition
operation on resource specifications
$ \| : \cP(\Phi) \times \cP(\Phi) \to \cP(\Phi)$ and write
$\aR \| \aS$ for the resulting specification. For example, a quantum
key distribution (QKD) protocol usually requires two resources to be
available to the players: an insecure quantum channel $\aQ$ and a
(multiple use, two\-/way) authentic classical channel
$\aA^c$.\footnote{The superscript $c$ in $\aA^c$ represents the number
  of uses of the channel, which we usually denote by an unspecified
  constant $c$.} As drawn in \figref{fig:qkd.real}, the channel $\aQ$
just delivers the (quantum) message to Eve and allows her to replace
it with an arbitrary message that is sent to Bob. The channel $\aA^c$
faithfully delivers (classical) messages between Alice and Bob, but
provides Eve with copies.\footnoteremember{fn:auth}{In order to
  satisfy the requirement of sequential scheduling for the specific
  instantiation of atomic resources with quantum combs
  (see \appendixref{app:sys.combs}, in particular,
  \remref{rem:scheduling}), $\aA^c$ does not immediately output two
  messages at Eve's and the receiver's interfaces. $\aA^c$ outputs a
  single message at the receiver's interace. Upon a request from Eve
  at her interface, it then gives her a copy of the messages that were
  sent. To simplify \figref{fig:qkd.real}, we do not draw the extra
  ``request arrows'', but only the information that is output by the
  systems. These requests are not needed for an instantiation of
  systems such as causal boxes that supports more complex scheduling
  (see \appendixref{app:sys.boxes}).} The players thus have access to
the resource $\aA^c \| \aQ$ and run their protocol corresponding to a
pair of converters $(\pi^\qkd_A,\pi^\qkd_B)$.\footnote{Similarly to
  the authentic channel $\aA^c$, $\pi^\qkd_A$ and $\pi^\qkd_B$ do not
  spontaneously output the key generated, they wait for a request to
  output it. Furthermore, the protocol starts with $\pi^\qkd_A$ being
  activated at its outer interface, which is not drawn in
  \figref{fig:qkd.real} either.}  The constructed system is given by
the specification $\pi^\qkd_B\pi^\qkd_A (\aA^c\|\aQ)$.

\begin{figure}[tb]
\centering
\begin{tikzpicture}[
sArrow/.style={->,>=stealth,thick},
thinResource/.style={draw,thick,minimum width=1.618*2cm,minimum height=1cm},
protocol/.style={draw,thick,rounded corners,minimum width=1.545cm,minimum height=2.5cm},
pnode/.style={minimum width=1cm,minimum height=.5cm}]

\small

\def\t{5.222} %.75+1.545+.5+1.618*3/2
\def\u{3.7} %1.545/2+.5+1.618*3/2
\def\v{.75}
\def\w{.809}

\node[pnode] (a1) at (-\u,\v) {};
\node[pnode] (a2) at (-\u,0) {};
\node[pnode] (a3) at (-\u,-\v) {};
\node[protocol] (a) at (-\u,0) {};
\node[yshift=-2,above right] at (a.north west) {$\pi^{\qkd}_A$};
\node (alice) at (-\t,0) {};
\node[below] at (alice) {Alice};

\node[pnode] (b1) at (\u,\v) {};
\node[pnode] (b2) at (\u,0) {};
\node[pnode] (b3) at (\u,-\v) {};
\node[protocol] (b) at (\u,0) {};
\node[yshift=-2,above right] at (b.north west) {$\pi^{\qkd}_B$};
\node (bob) at (\t,0) {};
\node[below] at (bob) {Bob};

\node (eve) at (0,-2) {Eve};

\node[thinResource] (cch) at (\w,\v) {};
\node[yshift=-2,above] at (cch.north) {Authentic channel  $\aA^c$};
\node[thinResource] (qch) at (-\w,-\v) {};
\node[yshift=-1.5,above right] at (qch.north west) {Insecure channel $\aQ$};
\node (eveq1) at (-\w-.4,-1.75) {};
\node (junc1) at (eveq1 |- a3) {};
\node (eveq2) at (-\w+.4,-1.75) {};
\node (junc2) at (eveq2 |- a3) {};
\node (evec) at (\w+\w,-1.75) {};
\node (junc3) at (evec |- b1) {};

\draw[sArrow,<->] (a1) to node[auto,pos=.08] {$t$} node[auto,pos=.92] {$t$}  (b1);
\draw[sArrow] (junc3.center) to node[auto,pos=.9] {$t$} (evec.center);

\draw[sArrow] (a2) to node[auto,pos=.75,swap] {$k_{A},\bot$} (alice.center);
\draw[sArrow] (b2) to node[auto,pos=.75] {$k_{B},\bot$} (bob.center);

\draw[sArrow] (a3) to (junc1.center) to node[pos=.8,auto,swap] {$\rho$} (eveq1.center);
\draw[sArrow] (eveq2.center) to node[pos=.264,auto,swap] {$\rho'$} (junc2.center) to (b3);

\end{tikzpicture}
\caption[Real QKD system]{\label{fig:qkd.real}The real QKD setting,
  with Alice on the left, Bob on the right and Eve below. Resources
  $\aA^c$ and $\aQ$ are available to the honest players, who run their
  protocol ($\pi^\qkd_A,\pi^\qkd_B$). At the end of this they either
  hold keys $(k_A,k_B)$ or produce an error $\bot$.}
\end{figure}
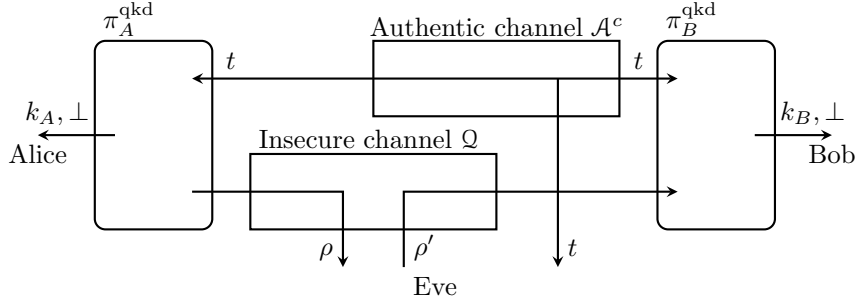

This notion of parallel composition of resources induces a notion of
parallel composition of converters $\alpha | \beta$, defined as
\[ (\alpha | \beta) (\aR \| \aS) \coloneqq (\alpha \aR) \| (\beta
\aS).\]
We require $\|$ to be associative, and when multiple resources are in
parallel, we add subscripts to the resources and converters to clarify
how they connect, e.g,
$(\alpha_1 | \beta_{23}) (\aR_1 \| \aS_2 \| \aT_3)$ \--- or
$(\alpha_{A_1} | \beta_{B_{23}}) (\aR_1 \| \aS_2 \| \aT_3)$ when we
additionally need to denote the interface to which the converters
connect. Naturally, we also require that the specificity relation be
conserved, i.e.,
\[ \aR \subset \aS \implies \aR \| \aT \subset \aS \| \aT \quad
\text{and} \quad \aT \| \aR \subset \aT \| \aS.\]
As for converters, if an
operation $\|$ with the same properties is defined on atomic
resources, then this yields a parallel composition operation on
specifications, defined as
\[ \aR \| \aS \coloneqq \left\{\sR \| \sS : \sR \in \aR, \sS \in
  \aS\right\}.\]
And if the pseudo\-/metric is context\-/insensitive, i.e., for any
atomic resources $\sR,\sS,\sT \in \Phi$,
$d(\sR \| \sT, \sS \| \sT) \leq d(\sR,\sS)$ and
$d(\sT \| \sR, \sT \| \sS) \leq d(\sR,\sS)$, then
\[ \aR^\eps \| \aS \subset \ball{\eps}{\aR \| \aS} \quad \text{and}
\quad \aR \| \aS^\eps \subset \ball{\eps}{\aR \| \aS}.\]

\subsection{Cryptographic Security}
\label{sec:cc.security}

% In the example of constructing a noiseless channel from a noisy one
% given in \secref{sec:cc.approximations}, all players are honest, and
% we have an exact description of what the interfaces of the constructed
% resource $\aS_{\id}$ should look like \--- namely, an input is
% provided at Alice's interface and an output is produced at Bob's
% interface.
In a context where some players are dishonest, one typically does not
have an exact description of how they may influence a protocol
outcome, but one has an upper bound on what they can do. For example,
in the case of QKD, one would ideally construct a secret key resource
$\aK$ that at most allows the adversary, Eve, to decide if the players
get a key or not,\footnote{Ideally one would want Eve to not even have
  the possibility of preventing the players from getting the key. But
  this cannot be achieved, since Eve can always cut the communication
  in the real system.} but not get any information about the value of
the key, as drawn in
\figref{fig:qkd.ideal}.\footnoteremember{fn:key}{To preserve a
  sequential execution of the systems, the key resource is defined so
  that upon inputing her bit, Eve receives a confirmation from
  $\aK$. The players may then individually send a request to the
  resource, and get either their key or an error. To simplify the
  picture, only the actual information transmitted is drawn in
  \figref{fig:qkd.ideal}. Request and confirmation arrows have been
  omitted.}

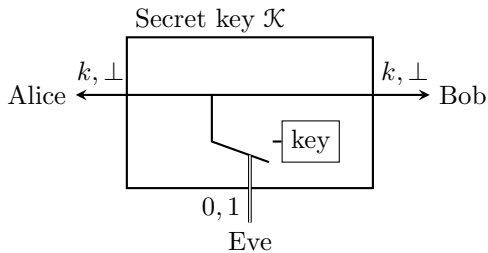
\begin{figure}[tb]
\centering
\begin{tikzpicture}[
sArrow/.style={->,>=stealth,thick},
largeResource/.style={draw,thick,minimum width=1.618*2cm,minimum height=2cm}]

\small

\def\u{.236} %2/1.618-1

\node[largeResource] (keyBox) at (0,0) {};
\node[yshift=-2,above right] at (keyBox.north west) {Secret key $\aK$};

\node (alice) at (-2.8,\u) {Alice};
\node (bob) at (2.8,\u) {Bob};
\node (eve) at (0,-1.7) {Eve};
\node[draw] (key) at (.8,\u/2-.5) {key};
\node (junc) at (-.5,0 |- key.center) {};

\draw[sArrow,<->] (alice) to node[pos=.07,auto] {$k,\bot$} node[pos=.93,auto] {$k,\bot$} (bob);
\draw[thick] (junc.center |- 0,\u) to (junc.center) to node[pos=.666] (handle) {} +(160:-.8);
\draw[thick] (.3,0 |- junc.center) to (key);
\draw[double] (eve) to node[pos=.2,auto] {$0,1$} (handle.center);

\end{tikzpicture}
\caption[A secret key resource]{\label{fig:qkd.ideal}A secret key resource
  that allows Eve to decide if Alice and Bob get a uniform key or an
  error message.\footnoterecall{fn:key}}
\end{figure}

In reality, Eve might have less power than this, e.g., she
decides that the players should get the key, but the protocol aborts
nonetheless. This, of course, is also an acceptable resource to
construct, since it is stronger than $\aK$ \--- it is stronger (for
the honest players), because Eve has less influence over the
outputs. More precisely, any resource $\aK \sigma_E$ is an acceptable
ideal resource for QKD, since anything achieved by it can be achieved
by $\aK$ if Eve decides to run $\sigma_E$ at her interface. A
specification of an ideal key resource that (at most) allows Eve to
provoke an abort is thus given by
\[ \aK^* \coloneqq \left\{ \aK \sigma_E : \sigma \in \Sigma\right\}.\]

We are now ready to define the notion of construction in the special
case of three party protocols with honest Alice and Bob and dishonest Eve.

\begin{deff}[Cryptographic security~\cite{MR11,MR16}]
\label{def:security}
We say that a protocol $\pi = (\pi_A,\pi_B)$ constructs a resource
specification $\aS$ from the specification $\aR$ with error $\eps$,
which we denote
\[ \aR \xrightarrow{\pi,\eps} \aS,\] if
\begin{equation}
\label{eq:security}
\pi_B \pi_A \aR \subset \ball{\eps}{\aS^*}.
\end{equation}
\end{deff}

One will often want to make several statements about a protocol, e.g.,
$ \aR \xrightarrow{\pi,\eps} \aS$ and
$\underline{\aR} \xrightarrow{\pi,\eps'} \underline{\aS}$, where $\aR$
and $\aS$ might be the resources in case of an active adversary (e.g.,
Eve can arbitrarily modify the communication between Alice and Bob),
and $\underline{\aR}$ and $\underline{\aS}$ are resources in case no
adversary is present (e.g., there is only non\-/malicious noise on the
channels). One then typically has $\underline{\aR} \subset \aR^*$ and
$\underline{\aS} \subset \aS^*$, i.e., if we can say something more
specific about the resources available (e.g., specific noise model),
then we can make strong statements about the constructed resource
(e.g., a key is produced with probability $1-\eps$).

Note that in order to prove that \defref{def:security} is satisfied by
a protocol $\pi$, it is sufficient to prove that for every $\sR \in
\aR$ there exists an $\sS \in \aS$ and $\sigma \in \Sigma$ such that
\begin{equation} \label{eq:security.atomic}
d\left(\pi_B \pi_A \sR , \sS \sigma_E\right) \leq \eps,
\end{equation} since this implies
that \eqnref{eq:security} holds. The exact $\sigma_E$ used in the
security proof will be called \emph{simulator}. In the following, we
often write $\sR \close{\eps} \sS$ instead of $d(\sR,\sS) \leq \eps$,
e.g., \eqnref{eq:security.atomic} becomes
\[\pi_B \pi_A \sR \close{\eps} \sS \sigma_E. \]

We now prove that this definition of construction is composable \---
i.e., \eqnsref{eq:serial} and \eqref{eq:parallel} are satisfied \---
with the error of the composed construction that is the sum of the
errors of the parts.

\begin{thm}
\label{thm:security}
If the pseudo\-/metric $d : \Phi \times \Phi \to \reals^+$ is
contractive, then
\begin{equation*} %\label{eq:thm.serial}
\aR \xrightarrow{\gamma,\eps} \aS \quad \text{and} \quad
\aS \xrightarrow{\pi,\delta} \aT \implies \aR \xrightarrow{\pi \circ
  \gamma,\eps+\delta} \aT. 
\end{equation*}
If, additionally, $d$ is context\-/insensitive, then
\begin{equation*} %\label{eq:thm.parallel}
  \aR_1 \xrightarrow{\gamma_1,\eps_1} \aS_1  \quad \text{and} \quad
  \aR_2 \xrightarrow{\gamma_2,\eps_2} \aS_2 
  \implies \aR_1 \| \aR_2 \xrightarrow{\gamma_1 |
    \gamma_2,\eps_1+\eps_2} \aS_1 \| \aS_2.
\end{equation*}
\end{thm}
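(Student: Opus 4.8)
The plan is to reduce everything to the atomic level and verify the sufficient criterion~\eqref{eq:security.atomic} for the composed protocols, building the required simulator explicitly. The hypotheses are stated as the spec inclusions of \defref{def:security}, but for atomic-induced converters these yield atomic witnesses: for any $\sR \in \aR$ we have $\gamma_B\gamma_A\sR \in \gamma_B\gamma_A\aR \subset \ball{\eps}{\aS^*}$, so there exist $\sS\in\aS$ and $\sigma\in\Sigma$ with $d(\gamma_B\gamma_A\sR,\sS\sigma_E)\le\eps$, and similarly for the second hypothesis. The ingredients I will use throughout are the triangle inequality of the pseudo\-/metric; contractivity $d(\alpha\sR,\alpha\sS)\le d(\sR,\sS)$, applied both to the honest converters and to simulators (which are themselves converters); the commutation of converters acting at disjoint interfaces; and the closure of $\Sigma$ under composition $\circ$ and parallel composition $|$, together with $\id\in\Sigma$.

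For the serial case I fix $\sR\in\aR$ and extract $\sS\in\aS$, $\sigma\in\Sigma$ with $d(\gamma_B\gamma_A\sR,\sS\sigma_E)\le\eps$, and $\sT\in\aT$, $\tau\in\Sigma$ with $d(\pi_B\pi_A\sS,\sT\tau_E)\le\delta$. I first apply the honest converter $\pi_B\pi_A$ to the first inequality, so that contractivity gives $d(\pi_B\pi_A\gamma_B\gamma_A\sR,\pi_B\pi_A\sS\sigma_E)\le\eps$. Since $\sigma_E$ acts at Eve's interface it commutes with $\pi_A,\pi_B$, hence $\pi_B\pi_A\sS\sigma_E=\sigma_E\pi_B\pi_A\sS$; applying the contractive converter $\sigma_E$ to the second inequality yields $d(\sigma_E\pi_B\pi_A\sS,\sigma_E\sT\tau_E)\le\delta$, and $\sigma_E\sT\tau_E=\sT(\sigma\circ\tau)_E$ with $\sigma\circ\tau\in\Sigma$. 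The triangle inequality then gives $d(\pi_B\pi_A\gamma_B\gamma_A\sR,\sT(\sigma\circ\tau)_E)\le\eps+\delta$, which is~\eqref{eq:security.atomic} for $\pi\circ\gamma$ with simulator $\sigma\circ\tau$; as $\sR$ was arbitrary this establishes $\aR\xrightarrow{\pi\circ\gamma,\eps+\delta}\aT$.

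For the parallel case I fix $\sR_1\in\aR_1$, $\sR_2\in\aR_2$ and extract $\sS_i\in\aS_i$, $\sigma_i\in\Sigma$ with $d((\gamma_i)_B(\gamma_i)_A\sR_i,\sS_i(\sigma_i)_E)\le\eps_i$. By the definition of $|$ the composed protocol produces $(\gamma_1|\gamma_2)_B(\gamma_1|\gamma_2)_A(\sR_1\|\sR_2)=\bigl((\gamma_1)_B(\gamma_1)_A\sR_1\bigr)\|\bigl((\gamma_2)_B(\gamma_2)_A\sR_2\bigr)$, and the candidate ideal resource is $(\sS_1\|\sS_2)(\sigma_1|\sigma_2)_E=(\sS_1(\sigma_1)_E)\|(\sS_2(\sigma_2)_E)$. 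I bound the distance between them by inserting the hybrid $(\sS_1(\sigma_1)_E)\|\bigl((\gamma_2)_B(\gamma_2)_A\sR_2\bigr)$ and invoking context\-/insensitivity separately in each slot: on the first term the right factor is fixed, giving $\le\eps_1$, and on the second the left factor is fixed, giving $\le\eps_2$. The triangle inequality then yields the bound $\eps_1+\eps_2$, and since $\sS_1\|\sS_2\in\aS_1\|\aS_2$ and $\sigma_1|\sigma_2\in\Sigma$ (closure under $|$), this verifies~\eqref{eq:security.atomic} for $\gamma_1|\gamma_2$ and establishes the parallel statement.

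The main obstacle is the careful bookkeeping of simulators rather than any deep inequality. In the serial case the simulator $\sigma_E$ of the inner construction sits between the outer honest protocol and the ideal resource, so one must commute it past $\pi_B\pi_A$ and exploit its contractivity before the guarantee of the second construction is applicable, and only then recognize the composite $\sigma\circ\tau$ as a legitimate element of $\Sigma$. In the parallel case the analogous point is that the product simulator $\sigma_1|\sigma_2$ must lie in $\Sigma$, i.e.\ that $\Sigma$ is closed under $|$; this, combined with context\-/insensitivity being available independently in each tensor slot, is precisely what makes the two errors add instead of interact. Everything else reduces to routine applications of contractivity and the triangle inequality.
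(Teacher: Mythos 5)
Your proof is correct, but it takes a genuinely different route from the paper's. The paper proves both implications entirely at the specification level, as a short chain of inclusions: for the serial case,
\[ \pi \gamma \aR \subset \pi \ball{\eps}{\aS^*} \subset
\ball{\eps}{\pi\aS^*} \subset
\ball{\eps}{\ball{\delta}{\left(\aT^*\right)^*}} \subset
\ball{\eps+\delta}{\aT^*},\]
using only the abstract properties from \secsref{sec:cc.converters}--\ref{sec:cc.composition}: monotonicity of converters under $\subset$, the ball-propagation consequences of contractivity and context-insensitivity (e.g.\ $\alpha\aR^\eps \subset \ball{\eps}{\alpha\aR}$ and $\aR^{\eps_1}\|\aS^{\eps_2}\subset\ball{\eps_1+\eps_2}{\aR\|\aS}$), nesting of balls, and absorption of simulators, $\left(\aT^*\right)^* = \aT^*$ and $\aS_1^*\|\aS_2^*\subset\left(\aS_1\|\aS_2\right)^*$. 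You instead unfold everything to atomic resources via the sufficient criterion \eqref{eq:security.atomic}, chase witnesses, and exhibit the composed simulator explicitly ($\sigma\circ\tau$ serially, $\sigma_1|\sigma_2$ in parallel, via a hybrid). The trade-off: your argument is more concrete and operationally informative \--- it tells you exactly what the simulator of the composed protocol is, which mirrors how the paper itself verifies security in \secsref{sec:distil} and \ref{sec:FS} \--- but it requires the assumption (which you correctly flag) that converters and $\|$ act element-wise on specifications, i.e.\ are induced from atomic-level maps; the paper's set-level proof does not need this and covers any instantiation satisfying the abstract axioms. Note also that both proofs silently rely on the same closure facts about $\Sigma$: closure under $\circ$ (explicit in \secref{sec:cc.converters}) and, for the parallel case, that $\sigma_1|\sigma_2$ acting at the combined $E$-interface is again in $\Sigma$ \--- this is exactly what the paper's inclusion $\aS_1^*\|\aS_2^*\subset\left(\aS_1\|\aS_2\right)^*$ hides, so making it explicit, as you do, is a fair reading rather than an extra assumption.
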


\begin{proof}
  Using the properties of converters and $\eps$\=/balls from
  \secsref{sec:cc.converters} and \ref{sec:cc.approximations}, one has
\[ \pi \gamma \aR \subset \pi \ball{\eps}{\aS^*} \subset
\ball{\eps}{\pi\aS^*} \subset
\ball{\eps}{\left(\ball{\delta}{\aT^*}\right)^*} \subset
\ball{\eps}{\ball{\delta}{\left(\aT^*\right)^*}} \subset
\ball{\eps+\delta}{\aT^*}.\]
Additionally using the properties of parallel composition
from \secref{sec:cc.composition}, one has
\begin{multline*} \left(\gamma_1 | \gamma_2\right) \left(\aR_1 \| \aR_2 \right) =
\left(\gamma_1 \aR_1 \right) \| \left( \gamma_2 \aR_2\right) \subset
  \ball{\eps_1}{\aS_1^*} \| \ball{\eps_2}{\aS_2^*} \\ \subset
  \ball{\eps_1+\eps_2}{\aS_1^* \| \aS_2^*} \subset
  \ball{\eps_1+\eps_2}{\left(\aS_1 \| \aS_2\right)^*} . \qedhere\end{multline*}
\end{proof}

\subsection{Atomic resources}
\label{sec:cc.instants}

So far we have instantiated the AC framework with specifications
defined as sets of atomic resources. To model concrete systems (such
as those from Figures~\ref{fig:channel}, \ref{fig:ecc},
\ref{fig:qkd.real}, \ref{fig:qkd.ideal}), we need to instantiate these
atomic resources with a model of interactive systems that captures
quantum information\-/processing. Such a model has been given in
\cite{PMMRT17}, where the atomic resources are called \emph{causal
  boxes}, and proven to satisfy the properties required by the AC
framework.

The causal box framework~\cite{PMMRT17} allows superpositions of
causal structures to be modeled, e.g., messages can be sent in
superpositions of different orders to superpositions of different
players. In order for this to be possible, one has to model messages
as pairs $\ket{v,t} \in \hilbert_V \tensor \hilbert_T$, where $v$
denotes the value being sent and $t$ can be understood as a time tag
that controls the ordering of messages. This then allows
superpositions of different values at different times (or different
orders), e.g.,
\[\ket{\psi} = \ket{v_1,t_1} + \ket{v_2,t_2}.\]

Most of the results in the current paper consider simpler resources,
that use (classical) sequential scheduling, i.e, they receive a
message, send a message, receive another message, etc. This is the
case of all the examples seen so far in this section. These can be
modeled as memory channels or \emph{quantum
  combs}~\cite{GW07,Gut12,CDP09} (see also \cite{Har11,Har12,Har15}),
i.e., systems with some internal memory $M$, and which upon receiving
an input in some register $X_i$ produce a single output in some
register $Y_i$ by applying a map $\cE_i : \lo{X_iM} \to \lo{MY_i}$
which updates the internal memory and computes the output. Such
systems can be seen as special cases of causal boxes, in which the
(unnecessary) time tag is not written out explicitly.

These two models of quantum information\-/processing systems \---
quantum combs and causal boxes \--- are introduced in detail
in \appendixref{app:sys}. In the following we model all systems as
quantum combs, unless stated otherwise.

Regardless of the instantiation, the distance between two atomic
resources $\sR$ and $\sS$ is taken to be the \emph{distinguishing
  advantage}, i.e., a distinguisher $\sD$ interacts with one of the two and
has to guess which one it is given. The distinguishing advantage is
then defined as
\[ d(\sR,\sS) \coloneqq \sup_{\sD} \left| \Pr \left[ \sD(\sR) = 1
  \right] - \Pr \left[ \sD(\sS) = 1\right] \right|,\]
where $\sD(\sR)$ is the binary random variable corresponding to the
distinguisher's guess when interacting with $\sR$. In the case of
quantum combs the distinguishing advantage corresponds to the trace
distance between the two states held in memory by the distinguisher
after interacting with the corresponding systems~\cite{CDP09,Gut12}. A
similar result is derived for causal boxes~\cite{PMMRT17}.

\section{Min-entropy resources}
\label{sec:min-entropy}

Before modeling min\-/entropy specifications, we first consider
min\-/entropy atomic resources using both the quantum
combs~\cite{GW07,Gut12,CDP09} and causal boxes~\cite{PMMRT17} models,
in \secsref{sec:min-entropy.comb} and \ref{sec:min-entropy.box},
respectively. We then define min\-/entropy specifications in
\secref{sec:min-entropy.spec}, and discuss $\eps$\-/balls around
min\-/entropy specifications in \secref{sec:min-entropy.approx}.

The quantum notation and concepts used in this section and the
following are introduced in \appendixref{app:basic}. The models of
quantum combs and causal boxes are explained
in \appendixref{app:sys.combs} and \appendixref{app:sys.boxes},
respectively.

\subsection{Finite dimensional systems}
\label{sec:min-entropy.comb}

Quantum combs~\cite{GW07,Gut12,CDP09}, by definition, can only
interact a finite number of times with their environment, since they
are modeled by (the \cjrep of) a finite dimensional completely
positive, trace\-/preserving (CPTP) map
(see \appendixref{app:sys.combs}). This is sufficient to model any
min\-/entropy resource that generates the output $X$ after a fixed
number of rounds of interaction (as is the case for the applications
to QKD and authentication considered in this work), or if the systems
have a timeout after which they abort \--- since in a finite amount of
time a causal system can only send a finite number of
messages~\cite{PMMRT17}.

To model min\-/entropy atomic resources with combs, we consider a
system $\sH^n$ that is activated by receiving a fixed symbol
``start''. It then outputs a quantum message, receives a quantum
message as input, outputs another message, etc., until a total of $n$
messages have been exchanged with its environment. It then finally
outputs a classical (random) value $X$ from an alphabet
$\cX \cup \{\bot\}$, where $\bot$ is an error symbol denoting that the
system failed to generate an output (with enough entropy). Such a
system is illustrated in \figref{fig:min-entropy.comb}, and is a
called a min\-/entropy comb if it satisfies the following conditions.

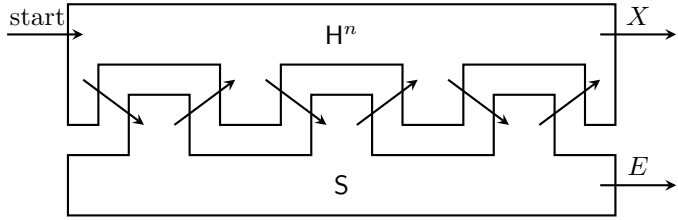
\begin{figure}[tb]
\centering
\begin{tikzpicture}[scale=.8,
sArrow/.style={->,>=stealth,thick}]

\small

\draw[thick] (0,0) to ++(0,1) to ++(1,0) to ++(0,1) to ++(1,0) to
++(0,-1) to ++(2,0) to ++(0,1) to ++(1,0) to
++(0,-1) to ++(2,0) to ++(0,1) to ++(1,0) to
++(0,-1) to ++(1,0) to ++(0,-1) -- cycle;

\draw[thick] (0,3.5) to ++(0,-2) to ++(.5,0) to ++(0,1) to ++(2,0) to
++(0,-1) to ++(1,0) to ++(0,1) to ++(2,0) to
++(0,-1) to ++(1,0) to ++(0,1) to ++(2,0) to
++(0,-1) to ++(.5,0) to ++(0,2) -- cycle;

\node (R) at (4.5,3) {$\sH^n$};
\node (S) at (4.5,.5) {$\sS$};

\draw[sArrow] (.25,2.25) to (1.25,1.5);
\draw[sArrow] (1.75,1.5) to (2.75,2.25);
\draw[sArrow] (3.25,2.25) to (4.25,1.5);
\draw[sArrow] (4.75,1.5) to (5.75,2.25);
\draw[sArrow] (6.25,2.25) to (7.25,1.5);
\draw[sArrow] (7.75,1.5) to (8.75,2.25);
\draw[sArrow] (-1,3) to node[auto,pos=.4] {start}  (.25,3);
\draw[sArrow] (8.75,3) to node[auto] {$X$}  (10,3);
\draw[sArrow] (8.75,.5) to node[auto] {$E$} (10,.5);

\end{tikzpicture}
\caption[Min-entropy comb]{\label{fig:min-entropy.comb}$\sH^n$ is a
  $(k,\delta)$\-/min\-/entropy comb if for all $\sS$ the state
  $\rho_{XE}$ satisfies the conditions from
  \defref{def:min-entropy.comb}.}
\end{figure}

\begin{deff}[Min-entropy comb]
\label{def:min-entropy.comb}
Let $\sH^n$ be a quantum comb with a structure as described above, and
let $\sS$ be any comb that mirrors it: $\sS$ first receives a quantum
message, then outputs one, etc, for a total of $n$ interactions, and
finally outputs a quantum system $E$. We say that $\sH^n$ is a
\emph{$(k,\delta)$\-/min\-/entropy comb} if for all $\sS$, the final
joint state $\rho_{XE}$ is given by
\begin{equation} \label{eq:min-entropy.1} 
\rho_{XE} = \proj{\bot} \tensor \tau_E + \sigma_{XE}
\end{equation}
for an arbitrary state $\tau_E$, and a state $\sigma_{XE}$ satisfying
\begin{equation} \label{eq:min-entropy.2}
  \HminSmooth[\sigma]{\delta}{X|E} \geq k,
\end{equation}
where $\HminSmooth[\sigma]{\delta}{X|E}$ denotes the $\delta$\-/smooth
conditional min\-/entropy of $\sigma_{XE}$ and is defined
in \appendixref{app:basic.min-entropy},
\defref{def:smooth-min-entropy}.
\end{deff}

Note that the min\-/entropy condition in \eqnref{eq:min-entropy.2}, is
defined on the \emph{subnormalized} state $\sigma_{XE}$. This is
because the min\-/entropy of the renormalized state
$\sigma_{XE} / \trace{\sigma_{XE}}$ is not meaningful when the
probability of not aborting, $\trace{\sigma_{XE}}$, is very small \---
one does not care what happens conditioned on an unlikely event. For a
state $\sigma_{XE}$ with $\trace{\sigma_{XE}} \leq 2^{-k}+\delta^2/2$,
\eqnref{eq:min-entropy.2} is trivially satisfied.

\subsection{Unbounded interactions}
\label{sec:min-entropy.box}

The systems defined in \secref{sec:min-entropy.comb} have a fixed
number of interactions (namely $n$) with the environment. This is
necessary if one wants to keep the joint input and output Hilbert
spaces finite dimensional. One can however imagine more general
systems where the number of interactions is a priori unbounded and
depends on the values that are input \--- or where the number of
interactions is in a superposition of different values. We model this
using causal boxes~\cite{PMMRT17}.

As explained in \appendixref{app:sys.boxes}, in the causal box model a
message can be thought of as a pair of a value $v \in \cV$ and a
position $t \in \T$, where $\T$ is a discrete partially ordered set
allowing messages to be ordered with respect to each other (e.g.,
$\T = \rationals$). Let
$\hilbert_X = \hilbert_\cV \tensor \hilbert_\T$ be the corresponding
Hilbert space. An output of a causal box may consist in one message,
multiple messages, no messages at all, or any superposition
thereof. The corresponding message space is a Fock space given by
\begin{equation*} %\label{eq:generalfockspace}
  \fock{\hilbert_X} \coloneqq \bigoplus_{n = 0}^\infty \vee^n
  \hilbert_X\,, \end{equation*} where $\vee^n \hilbert_X$ denotes the
symmetric subspace of $\hilbert_X^{\tensor n}$, and
$\hilbert_X^{\tensor 0}$ is the one dimensional space containing the
vacuum state $\vacuum$.

Thus, a figure drawing a causal box does not have one arrow for every
output, but instead one arrow for an output wire that many produce an
unbounded number of messages. For example, in
\figref{fig:min-entropy.box}, the two causal boxes $\sH$ and $\sS$ are
connected by two wires, and may be repeatedly exchanging messages on
these wires \--- these systems are not required to
terminate.\footnote{Formally, causal boxes are modeled as sequences of
  maps over ever increasing intervals of $\T$,
  see \appendixref{app:sys.boxes}.}

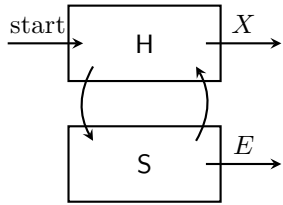
\begin{figure}[tb]
\centering
\begin{tikzpicture}[
sArrow/.style={->,>=stealth,thick},
sys/.style={draw,thick,minimum width=2cm,minimum height=1cm},
sysInternal/.style={minimum width=.6cm,minimum height=.6cm}]

\small

\def\t{1.8}%1+.8
\def\v{.8}%.5+.3
\def\x{.5}

\node[sys] (H) at (0,\v) {$\sH$};
\node[sysInternal] (HL) at (-\x,\v) {};
\node[sysInternal] (HR) at (\x,\v) {};
\node[sys] (S) at (0,-\v) {$\sS$};
\node[sysInternal] (SL) at (-\x,-\v) {};
\node[sysInternal] (SR) at (\x,-\v) {};

\draw[sArrow,bend right] (HL) to (SL);
\draw[sArrow,bend right] (SR) to (HR);

\draw[sArrow] (-\t,\v) to node[auto,pos=.4] {start} (HL);
\draw[sArrow] (HR) to node[auto] {$X$}  (\t,\v);
\draw[sArrow] (SR) to node[auto] {$E$}  (\t,-\v);

\end{tikzpicture}
\caption[Min-entropy causal box]{\label{fig:min-entropy.box}In
  the model of causal boxes, each wire may be used an unbounded number
  of times to transmit messages, e.g., there is no bound on the number
  of rounds of communication between $\sH$ and $\sS$. The system $\sH$ is a
  $(k,\delta)$\-/min\-/entropy causal box if for all $\sS$ and all $t$ the
  state $\rho^{\leq t}_{XE}$ satisfies the conditions from
  \defref{def:min-entropy.box}, where $\rho^{\leq t}_{XE}$ is the
  state of the $XE$ systems truncated at position $t$.}
\end{figure}

Similar to the model of min-entropy combs, we model a min-entropy
causal box as being started by receiving a corresponding
message,\footnote{Technically this is not necessary, a causal box may
  spontaneously output a message. But it simplifies the scheduling
  with other systems to consider only causal boxes that are started by
  an external message.} after which it interacts with an environment
(e.g., the system $\sS$ drawn in \figref{fig:min-entropy.box}). At any
point during this interaction, it may output a value on the wire $X$
from the alphabet $\cX \cup \{\bot\}$, where, as above, $\bot$ is an
error symbol denoting that the system failed to generate an output
(with enough entropy). But we restrict the output on $X$ to the
subspace $\hilbert_X^{\tensor 0} \oplus \hilbert_X^{\tensor 1}$, i.e.,
at most one value is output. Unlike for combs, we cannot place a
requirement on the entropy of $X$ once a value is output, since this
condition might not be well\-/defined (e.g., for $t \in \naturals$, a
value is produced in position $t$ with probability $2^{-t}$). Instead,
we bound the entropy of the output up to position $t$ for all
$t \in \T$.

\begin{deff}[Min-entropy causal box]
\label{def:min-entropy.box}
Let $\sH$ be a causal box with a structure as described above, and let
$\sS$ be any causal box with an input and output wire that match the
dimensions of the output and input wires of $\sH$, and with an
additional output wire $E$, and let them be connected by these
wires. Furthermore, for $VT=X$, let $\rho^{\leq t}_{VTE}$ be the
subnormalized output on $XE$ up to position $t$ projected on the
non\-/vacuum subspace
$\hilbert_V \tensor \hilbert_T \tensor \fock{\hilbert_E}$.  We say
that $\sH$ is a \emph{$(k,\delta)$\-/min\-/entropy causal box} if for
all $\sS$ and all $t \in \T$, the state $\rho^{\leq t}_{VTE}$ is given
by
\begin{equation*} %\label{eq:min-entropy.3} 
\rho^{\leq t}_{VTE} = \proj{\bot} \tensor \tau_{TE} + \sigma_{VTE}
\end{equation*}
for an arbitrary state $\tau_{TE}$, and a state $\sigma_{VTE}$ satisfying
\begin{equation*} %\label{eq:min-entropy.4}
  \HminSmooth[\sigma]{\delta}{V|TE} \geq k.
\end{equation*}
\end{deff}

Note that \defref{def:min-entropy.box} conditions the entropy of the
output on the register $T$ as well as $E$, since this ordering
information could be learned by an adversary. Though if one considers
only systems where timing is independent of the values of messages,
then it would be sufficient to bound
$\HminSmooth[\sigma]{\delta}{V|E}$.

\subsection{Min-entropy specification}
\label{sec:min-entropy.spec}

For min\-/entropy combs and causal boxes (defined in the previous two
sections) to be valid atomic resources, one still has to assign
interfaces to the various inputs and outputs. In a model with two
interfaces $A$ and $E$ \--- accessed by Alice and Eve, respectively
\--- the final output $X$ occurs at Alice's interface, and the
interaction with the environment occurs at Eve's interface. The input
``start'' may be provided at either Alice or Eve's interface,
depending on the protocol considered. A min\-/entropy specification is
then simply the set of all min\-/entropy atomic resources satisfying a
certain bound on the conditional min\-/entropy.

\begin{deff}[Min-entropy specification]
\label{def:min-entropy.spec}
  A $(k,\delta)$\=/min\-/entropy specification $\aH^{k,\delta}_{\min}$
  is the set of all $(k,\delta)$\=/min\-/entropy atomic resources.
\end{deff}

\defref{def:min-entropy.spec} is typically the min\-/entropy
specification one would construct in an adversarial setting: it
provides a bound on the entropy, but not on the actual probability of
terminating which such a random string, since usually Eve can force
honest players to abort. If one considers a context where Eve is not
active, then one generally wishes to prove something stronger, e.g.,
with high probability the protocol generates a random string. Such a
statement can easily be achieved by restricting the set
$\aH^{k,\delta}_{\min}$ to contain only those systems such that for
all $\sS$, the output
$\rho_{XE} = \proj{\bot} \tensor \tau_E + \sigma_{XE}$ has
$\trace{\sigma_{XE}} \geq 1-\eps$.\footnote{In the model with causal
  boxes, one would require that by a certain position $t_0$,
  $\trace{\sigma^{\leq t_0}_{XE}} \geq 1-\eps$.}

In the two applications considered in this work, we need min\-/entropy
resources with three interfaces for Alice, Bob, and Eve. These are
defined identically to the two interface versions described so far,
except that a string $Y$ is output at Bob's interface as
well.\footnote{As in the examples of systems in \secref{sec:cc}, to
  preserve sequential scheduling when the atomic resources are
  instantiated with quantum combs, the outputs $X$ and $Y$ are not
  produced spontaneously, but only after having received a request
  from Alice and Bob, respectively.} Here, the conditional
min\-/entropy is always defined on the $XE$ system. The allowed
correlations between $Y$ and $XE$ will depend on the context, e.g.,
the min\-/entropy resource constructed by QKD allows $Y$ to be an
arbitrary random string, whereas the min\-/entropy resource used as a
secret key by the Fehr-Salvail protocol requires $Y$ to be a copy of
$X$.

\subsection{Approximate min-entropy specifications}
\label{sec:min-entropy.approx}

In \secref{sec:cc.approximations} we defined a notion of an
approximate specification, directly in the AC language, by taking an
$\eps$\=/ball around specifications. The min\-/entropy resources
defined in \secref{sec:min-entropy.spec} have their own notion of
approximation built-in, namely the smoothing parameter $\delta$. This
is useful, because (as we prove in the next lemma)
$\delta$\=/smoothing is a less specific condition that results in a
larger set of atomic resources than $\eps$\=/balls.\footnote{It is
  currently unknown if it is possible to prove that a QKD protocol
  constructs the smaller set of resources without the smoothing
  parameter.} It remains open to find bounds in the other direction
\--- i.e., whether
$\aH^{k,\delta+\eps}_{\min} \subset
\ball{\eps'}{\aH^{k,\delta}_{\min}}$
for some reasonable $\eps'$ \--- or examples showing that such bounds
do not exist.

\begin{lem}
\label{lem:smoothnessandballs}
For min\-/entropy specifications one has
\[ \ball{\eps}{\aH^{k,\delta}_{\min}} \subset
\aH^{k,\delta+\sqrt{2\eps}}_{\min}.\]
\end{lem}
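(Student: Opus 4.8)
The plan is to unfold the definition of the $\eps$\=/ball and reduce the claim to a triangle\-/inequality argument for the smooth min\-/entropy. Let $\sR \in \ball{\eps}{\aH^{k,\delta}_{\min}}$. By the definition of the $\eps$\=/ball from \secref{sec:cc.approximations}, there is a $(k,\delta)$\-/min\-/entropy atomic resource $\sH \in \aH^{k,\delta}_{\min}$ with $d(\sH,\sR) \leq \eps$. I must show that $\sR$ is itself a $(k,\delta+\sqrt{2\eps})$\-/min\-/entropy resource, i.e., that it satisfies \defref{def:min-entropy.comb} (or \defref{def:min-entropy.box}) with smoothing parameter $\delta+\sqrt{2\eps}$, whence $\sR \in \aH^{k,\delta+\sqrt{2\eps}}_{\min}$.

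First I would fix an arbitrary mirroring environment $\sS$ and compare the two joint output states: let $\rho_{XE}$ be produced by $\sS$ interacting with $\sR$, and $\rho'_{XE}$ by $\sS$ interacting with $\sH$. Since $\sS$ together with a final read\-/out of the $X$ and $E$ registers is a particular distinguisher, and for quantum combs the distinguishing advantage equals the trace distance of the distinguisher's output states (\secref{sec:cc.instants}, \cite{CDP09,Gut12}), the bound $d(\sH,\sR) \leq \eps$ gives trace distance at most $\eps$ between the full (normalized) output states, $\tfrac12\norm{\rho_{XE}-\rho'_{XE}}_1 \leq \eps$. Converting trace distance to purified distance via the standard bound $P \leq \sqrt{2D}$ then yields $P(\rho_{XE},\rho'_{XE}) \leq \sqrt{2\eps}$; this is exactly where the square root in the statement originates.

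Next I would pass to the non\-/$\bot$ parts. Because $X$ is classical over $\cX\cup\{\bot\}$, each state decomposes canonically as $\proj{\bot}\tensor\tau_E + \sigma_{XE}$, with $\sigma_{XE}$ the block where $X\neq\bot$; this is precisely the decomposition required by the definition, so the only nontrivial content is the entropy bound on $\sigma_{XE}$. The map that projects $X$ onto $\cX$ (discarding the $\bot$ block) is a trace\-/non\-/increasing CP map, and purified distance is monotone under such maps, so $P(\sigma_{XE},\sigma'_{XE}) \leq P(\rho_{XE},\rho'_{XE}) \leq \sqrt{2\eps}$. Finally, since $\sH$ is a $(k,\delta)$\-/min\-/entropy resource, $\HminSmooth[\sigma']{\delta}{X|E} \geq k$; unfolding the definition of smooth min\-/entropy (\appendixref{app:basic.min-entropy}) as an optimization over the purified\-/distance ball of radius $\delta$, there is a state $\tilde\sigma'$ with $P(\sigma'_{XE},\tilde\sigma')\leq\delta$ whose conditional min\-/entropy (without smoothing) is at least $k$. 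The same $\tilde\sigma'$ lies within purified distance $\delta+\sqrt{2\eps}$ of $\sigma_{XE}$ by the triangle inequality, which witnesses $\HminSmooth[\sigma]{\delta+\sqrt{2\eps}}{X|E} \geq k$. As $\sS$ was arbitrary, $\sR$ satisfies the definition.

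The step needing the most care is the bookkeeping with \emph{subnormalized} states: I must apply the trace\-/distance\-/to\-/purified\-/distance conversion and the monotonicity property in their generalized (subnormalized) forms, and check that stripping off the $\bot$ block interacts correctly with these metric bounds — which is why I convert the normalized states $\rho_{XE},\rho'_{XE}$ to purified distance first and only then invoke monotonicity under the projection. For the causal\-/box version (\defref{def:min-entropy.box}) the identical argument is carried out on the truncated states $\rho^{\leq t}_{VTE}$ for every $t\in\T$, using the analogous correspondence between distinguishing advantage and trace distance for causal boxes from \cite{PMMRT17}.
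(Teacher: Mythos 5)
Your proof is correct and shares the paper's skeleton: fix an arbitrary environment $\sS$, bound the purified distance between the non\-/$\bot$ blocks of the two output states by $\sqrt{2\eps}$, and conclude with the triangle inequality for the smoothing (the paper phrases this as a proof by contradiction, which is immaterial). The one genuine difference is how that intermediate bound is obtained. The paper stays with trace\-/type distances as long as possible: it uses the orthogonality of the $\proj{\bot}$ block to write $D(\rho_{XE},\rho'_{XE}) = \frac{1}{2}\trnorm{\sigma_{XE}-\sigma'_{XE}} + \frac{1}{2}\trnorm{\tau_E - \tau'_E}$, lower\-/bounds the second term by $\frac{1}{2}\left|\tr\tau_E - \tr\tau'_E\right|$ so as to recognize the generalized trace distance $\bar{D}(\sigma_{XE},\sigma'_{XE})$, and only then converts to purified distance via \lemref{lem:purified}. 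You instead convert the full normalized output states to purified distance first ($P \leq \sqrt{2D}$, again \lemref{lem:purified}) and then invoke monotonicity of the purified distance under the trace\-/non\-/increasing CP projection onto the non\-/$\bot$ subspace. That monotonicity property does hold for subnormalized states (it is Lemma~7 of \cite{TCR10}), but it is not among the facts this paper states or proves, whereas the paper's block computation is self\-/contained given \lemref{lem:purified}; a fully rigorous write\-/up along your lines should cite or prove it. What your route buys is modularity: it applies verbatim to any trace\-/non\-/increasing post\-/processing, which is convenient in the causal\-/box case where one additionally projects onto the non\-/vacuum subspace of the truncated states, and it avoids redoing the block\-/decomposition bookkeeping for subnormalized states. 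Both routes give the identical constant $\sqrt{2\eps}$ and then finish with the same triangle\-/inequality step.
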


\begin{proof}
By contradiction, let us suppose there exists an $\sH$ such that
\begin{align} 
\sH & \notin \aH^{k,\delta+\sqrt{2\eps}}_{\min}, \label{eq:smoothnessandballs.assumption.1}
\intertext{but} \sH & \in \ball{\eps}{\aH^{k,\delta}_{\min}}. \label{eq:smoothnessandballs.assumption.2}
\end{align}
From \eqnref{eq:smoothnessandballs.assumption.1} we find that there
must exist an $\sS$ such that the state
$\rho_{XE} = \proj{\bot} \tensor \tau_{E} + \sigma_{XE}$ resulting
from $\sS$ and $\sH$ interacting has
\begin{equation} \label{eq:smoothnessandballs.contradiction} 
\HminSmooth[\sigma]{\delta+\sqrt{2\eps}}{X|E} < k.
\end{equation}
From \eqnref{eq:smoothnessandballs.assumption.2} there must exist an
$\sH' \close{\eps} \sH$ such that when interacting with the same
$\sS$, the resulting state $\rho'_{XE} = \proj{\bot}
\tensor \tau'_{XE} + \sigma'_{XE}$ has
\begin{equation} \label{eq:smoothnessandballs.step} 
\HminSmooth[\sigma']{\delta}{X|E} \geq k.
\end{equation}

Since $\sH' \close{\eps} \sH$, it follows that
$D(\rho'_{XE},\rho_{XE}) \leq \eps$, where $D(\cdot,\cdot)$ is the
trace distance. One then obtains
\begin{align*} D(\rho'_{XE},\rho_{XE}) & = \frac{1}{2}
  \trnorm{\sigma'_{XE} - \sigma_{XE}} + \frac{1}{2} \trnorm{\tau'_{E}
    - \tau_{E}} \\ & \geq \frac{1}{2} \trnorm{\sigma'_{XE} -
    \sigma_{XE}} + \frac{1}{2} \left| \tr \tau'_{E} - \tr \tau_{E}
  \right| \\ & = \bar{D}(\sigma'_{XE},\sigma_{XE}) \\ & \geq
  P(\sigma'_{XE},\sigma_{XE})^2/2,\end{align*} where
$\bar{D}(\cdot,\cdot)$ is the generalized trace distance and
$P(\cdot,\cdot)$ is the purified distance (see
\appendixref{app:basic.distance} and \cite{TCR10}).

Putting this together with \eqnref{eq:smoothnessandballs.step} and the
triangle inequality, there must exist a $\tilde{\sigma}_{XE}$ such that
$\Hmin[\tilde{\sigma}]{X|E} \geq k$ and
$P(\tilde{\sigma}_{XE},\sigma_{XE}) \leq \delta + \sqrt{2\eps}$, which
contradicts \eqnref{eq:smoothnessandballs.contradiction}.
\end{proof}

\section{Distilling Secret Key}
\label{sec:distil}

In this section we show how to extract secret key from a tripartite
min\-/entropy resource, where Alice and Bob obtain strings $X$ and
$Y$, and Eve gets side information. In \secref{sec:distil.ec} we show
how error correction and verification construct a min\-/entropy
resource which provides Alice and Bob with identical strings
$X' = Y'$, given a resource that has no guarantee on the correlations
between $X$ and $Y$. Then in \secref{sec:distil.pa} the players use
privacy amplification to construct a secret key resource (see
\figref{fig:qkd.ideal} for an illustration of a secret key
resource). The two steps are composed in \secref{sec:distil.comp}.

These steps of error correction and privacy amplification are
performed in a generic way for any resource that satisfies the
min\-/entropy specification. We show in \appendixref{app:distil.qkd}
that QKD is a special case of this, which constructs a (more specific)
min\-/entropy resource.

\subsection{Error Correction}
\label{sec:distil.ec}

In this section we analyze a generic class of error correction
schemes, which are now standard in QKD (see, e.g., \cite{TLGR12,
  TL15}). The error correction procedure has two steps. In the first,
Alice applies a function $Z = \synd(X)$ which computes a syndrome for
$X$. She then sends this on an authentic channel to Bob, who applies
the second function $\hat{X} = \corr(Y,Z)$ to get an estimate
$\hat{X}$ of Alice's string $X$. Such pairs of functions
$(\synd,\corr)$ are parametrized by two values: by a subset
$\cS \subset \cX \times \cX$ of strings that they can correct, i.e,
for all $(x,y) \in \cS$, $\corr(y,\synd(x)) = x$, and by the length of
the string $|Z| = r$, which bounds how much information is leaked to
the adversary about $X$. Let $\pi^\textcorr$ denote the first part of the
procedure.

In the second step, the players verify whether the error correction
was successful. They do this by first choosing a function $f$
uniformly at random from a family of almost universal hash functions, i.e., a
set $\cF = \{f : \cX \to \cY\}$ such that for all $x,x' \in \cX$,
$x \neq x'$,
\begin{equation}
\label{eq:universalhashing}
\Pr_{f \in \cF} \left[ f(x) = f(x') \right] \leq \eps_\verif.
\end{equation}
Alice sends $(f,f(X))$ to Bob, who verifies that $f(X) = f(\hat{X})$,
and tells Alice whether this is successful or whether they should
abort. Let the length of the hash be $|f(x)| = t$, i.e.,
$|\cY| = 2^t$. Let $\pi^{\verif}$ denote the second part of this
procedure. The complete error correction scheme is then
$\pi^{\ec} = \pi^\verif \circ \pi^\textcorr$.

There are now two cases to consider. The first is when the players
have absolutely no guarantee about the correlations between $X$ and
$Y$ \--- in QKD, this is the case if an adversary is eavesdropping on
the channel and can arbitrarily change the quantum messages. In this
case, we wish to bound the information Eve has as well as bound the
probability that the honest players end up with different strings. The
second case is when some bounds on the correlations between $X$ and
$Y$ are known (e.g., the number of bit flips between the two) \--- in
QKD, this is the case if only natural noise is present on the
channel. If such a more specific resource is present, we then wish to
prove that a more specific (stronger) resource is constructed, namely,
we additionally want a bound on the probability that the players abort
\--- which is known as the \emph{robustness} (to this noise model) of
a protocol.

\subsubsection{Security}
\label{sec:distil.ec.security}

We start with the first case, when Alice and Bob share a resource
$\overline{\aH}^{k,\delta}_{\min}$, where Bob's string $Y$ may be
arbitrary (but is always $\bot$ if Alice's is $\bot$). Furthermore,
they share an authentic channel $\aA^c$.\footnote{See the description
  of the multiple use authentic channel $\aA^c$ in
  \secref{sec:cc.composition} and \footnoteref{fn:auth} (though in
  this case, a single use channel is sufficient).} It is
straightforward that without any further information about $Y$ one has
no hope of guaranteeing that $\pi^\textcorr$ will work. Instead, we just
bound the information leaked to the adversary by this procedure.
\begin{lem}
\label{lem:distil.ec.corr.adv}
$\pi^\textcorr$ leaks at most $r$ bits of information about Alice's string
$X$ to Eve, i.e.,
\begin{equation*}
%\label{eq:distil.ec.corr.adv}
 \overline{\aH}^{k,\delta}_{\min} \| \aA^c \xrightarrow{\pi^\textcorr,0}
\overline{\aH}^{k-r,\delta}_{\min}.
\end{equation*}
\end{lem}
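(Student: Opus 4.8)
The plan is to reduce the statement to its atomic form and then to a single chain\-/rule inequality for the smooth min\-/entropy. Since the claimed error is $0$, by the remark following \defref{def:security} (see \eqnref{eq:security.atomic}) it suffices to exhibit, for every atomic resource $\sR = \sH \| \sA$ with $\sH \in \overline{\aH}^{k,\delta}_{\min}$ and $\sA \in \aA^c$, an ideal atomic resource $\sS \in \overline{\aH}^{k-r,\delta}_{\min}$ and a simulator $\sigma_E \in \Sigma$ with $\pi^\textcorr_B \pi^\textcorr_A \sR = \sS \sigma_E$. I would argue that the composed system $\pi^\textcorr_B \pi^\textcorr_A \sR$ is \emph{itself} a $(k-r,\delta)$\-/min\-/entropy atomic resource of the overline type, so that one may simply take $\sS \coloneqq \pi^\textcorr_B \pi^\textcorr_A \sR$ together with the identity simulator $\sigma_E \coloneqq \id$; this makes the two sides literally equal and the error trivially $0$.

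To verify the claim, fix an arbitrary mirror system $\sS'$ at Eve's interface and trace through the composition. The converter $\pi^\textcorr_A$ reads Alice's string $X$ from $\sH$, computes $Z = \synd(X)$, and sends it on $\aA^c$; the converter $\pi^\textcorr_B$ then receives $Z$ and outputs $\hat{X} = \corr(Y,Z)$ at Bob's interface. Everything $\sS'$ can do at Eve's interface \--- its interaction with the $E$\-/side of $\sH$ together with the (optional) request for the copy of $Z$ that $\aA^c$ provides to Eve \--- is in particular a valid mirror for $\sH$ on its own. Hence, by \defref{def:min-entropy.comb}, the $\sH$\-/part of the state before the leak decomposes as $\rho_{XE} = \proj{\bot} \tensor \tau_E + \sigma_{XE}$ with $\HminSmooth[\sigma]{\delta}{X|E} \geq k$. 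Bob's output $\hat{X}$ is some arbitrary string, which is exactly what the overline specification permits, and Eve's total side information is the extended register $E' = (E,Z)$.

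Setting $\synd(\bot) \coloneqq \bot$ keeps the abort structure intact, so that $\rho_{XE'} = \proj{\bot} \tensor \tau_{E'} + \sigma_{XE'}$ with $\sigma_{XE'} = \sigma_{XEZ}$ and $Z = \synd(X)$ a deterministic function of $X$ held in a register of dimension $2^r$. The heart of the argument is then the chain rule for the smooth conditional min\-/entropy under the leak of $r$ classical bits, formulated on subnormalized states (as in \appendixref{app:basic.min-entropy}), which gives
\[ \HminSmooth[\sigma]{\delta}{X|EZ} \geq \HminSmooth[\sigma]{\delta}{X|E} - r \geq k - r. \]
Since the worst case (largest leak) is realized by the mirror that actually requests $Z$, and $\sS'$ was arbitrary, the composed system satisfies \defref{def:min-entropy.comb} with parameters $(k-r,\delta)$ and Bob's string unconstrained, i.e., it lies in $\overline{\aH}^{k-r,\delta}_{\min}$, as required.

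I expect the only genuine obstacle to be the chain\-/rule inequality, specifically checking that it applies to the subnormalized non\-/abort branch $\sigma_{XEZ}$ rather than to a renormalized state, and that the smoothing parameter $\delta$ is preserved under the leak. The remaining AC\-/level bookkeeping \--- the reduction to atomic resources, the identity simulator, and the preservation of the $\proj{\bot}$\-/decomposition \--- is routine. A minor point worth stating carefully is that Eve's combined interaction (with $\sH$ and with the copy\-/of\-/$Z$ interface of $\aA^c$) is genuinely covered by ranging over all mirror systems in \defref{def:min-entropy.comb}, so that the min\-/entropy guarantee of $\sH$ transfers uniformly to every adversary strategy against the composed resource.
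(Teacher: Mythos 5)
Your proposal is correct and takes exactly the route of the paper, whose entire proof is ``Immediate from \lemref{lem:tech.min-entropy-inequalities.chain}'' \--- the chain rule $\HminSmooth[\rho]{\delta}{A|BZ} \geq \HminSmooth[\rho]{\delta}{A|B} - \log|\cZ|$ applied to the leaked syndrome register, with the AC bookkeeping (atomic resources, trivial simulator, preservation of the $\proj{\bot}$ branch) left implicit. The one obstacle you flag \--- whether the chain rule holds for the subnormalized non\-/abort branch with the smoothing parameter $\delta$ preserved \--- is precisely what that appendix lemma (quoted from \cite[Lemma 11]{WTHR11}) provides, since it is stated for arbitrary $\rho \in \sno{ABZ}$.
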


\begin{proof} Immediate from
\lemref{lem:tech.min-entropy-inequalities.chain}.\end{proof}

In this case, the crucial part of the error correction procedure is
the error verification, which constructs a min\-/entropy resource that
provides Bob with a perfect copy of Alice's string (or aborts).

\begin{thm}
\label{thm:distil.ec}
  Let $\overline{\aH}^{k,\delta}_{\min}$ denote a
  $(k,\delta)$\-/min\-/entropy resource, where Bob's string $Y$ is
  arbitrary (but is always $\bot$ if Alice's is $\bot$), and
  $\aH^{k,\delta}_{\min}$ denote a $(k,\delta)$\-/min\-/entropy
  resource in which Bob has an exact copy of Alice's string. Then
\[ \overline{\aH}^{k,\delta}_{\min} \| \aA^c \xrightarrow{\pi^\verif,\eps_\verif}
 \aH^{k-t,\delta}_{\min}.\]
\end{thm}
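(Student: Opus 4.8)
The plan is to verify the atomic security condition \eqnref{eq:security.atomic}: for every atomic resource $\sR \in \overline{\aH}^{k,\delta}_{\min} \| \aA^c$, whose min\-/entropy component I denote $\overline{\sH}$, I will exhibit an ideal atomic resource $\sS \in \aH^{k-t,\delta}_{\min}$ and a simulator $\sigma_E$ with $\pi^\verif_B \pi^\verif_A \sR \close{\eps_\verif} \sS \sigma_E$. I construct $\sS$ by absorbing the protocol into the resource: on ``start'', $\sS$ runs $\overline{\sH}$ (relaying its interaction with the environment out through Eve's interface), draws $f$ uniformly from $\cF$, computes $f(X)$, and internally performs the verification test (whether $f(X) = f(Y)$). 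On acceptance it outputs $X$ at Alice's interface and a \emph{forced copy} of $X$ at Bob's interface; on rejection, or if $X = \bot$, it outputs $\bot$ at both. Its final side information is $E' = (E, f, f(X), b)$, where $b$ is the accept/reject flag. The simulator $\sigma_E$ is then a pure relay: in the first phase it passes the $\overline{\sH}$\-/interaction between the distinguisher and $\sS$, and at the end it hands the distinguisher $(f, f(X), b)$ read off from $E'$. Since $\sigma_E$ never needs $X$ itself, it is well defined.

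First I would dispatch the distinguishing bound, which is the easy half. Because $\sigma_E$ reproduces Eve's real view exactly and the test inside $\sS$ uses the same $f$ and the same strings as the real protocol, the real and ideal systems induce identical joint distributions on (Alice, Bob, Eve) on every branch \emph{except} the undetected\-/error branch $\{X \neq Y,\ f(X) = f(Y),\ X \neq \bot\}$, where the real Bob outputs $Y$ while $\sS$ forces him to output $X$. Hence the distinguishing advantage is at most the probability of this branch. Since $f$ is drawn independently of $(X,Y)$, conditioning on each pair $x \neq y$ and invoking the almost\-/universal property \eqnref{eq:universalhashing} bounds it by $\sum_{x\neq y}\Pr[X=x,Y=y]\,\Pr_{f}[f(x)=f(y)] \leq \eps_\verif$, as required.

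The substantive half is showing $\sS \in \aH^{k-t,\delta}_{\min}$, i.e.\ that for \emph{every} environment the acceptance branch $\sigma'_{X'E'}$ satisfies $\HminSmooth[\sigma']{\delta}{X'|E'} \geq k-t$. The key observation is a timing one: $\overline{\sH}$ completes its $n$ rounds and fixes $X$ before $f$ is ever drawn, so the environment induced on $\overline{\sH}$ cannot depend on $(f, f(X))$, and $\overline{\sH} \in \overline{\aH}^{k,\delta}_{\min}$ therefore guarantees $\HminSmooth[\sigma]{\delta}{X|E} \geq k$ on its own acceptance branch. Appending the independent uniform $f$ leaves this unchanged, and appending the $t$\-/bit register $f(X)$ costs at most $t$ bits by the chain rule \lemref{lem:tech.min-entropy-inequalities.chain}, giving $\HminSmooth{\delta}{X|E,f,f(X)} \geq k - t$ for the state \emph{before} the verification test.

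Finally I condition on acceptance. The flag $b$ is a function of the classical registers $f, f(X)$ and Bob's classical $Y$, so $\sigma'_{X'E'}$ is obtained from the pre\-/test state $\omega$ by a classical projection onto $\{b = \text{accept}\}$; in particular $\sigma' \leq \omega$. Min\-/entropy is monotone under passing to such a sub\-/normalized branch, and the purified distance used for smoothing is contractive under the trace\-/non\-/increasing completely positive map implementing this projection, so the $\delta$\-/smooth bound is inherited: $\HminSmooth[\sigma']{\delta}{X'|E'} \geq k-t$, with $b$ constant (hence carrying no further information) on this branch. I expect this conditioning step \--- propagating the \emph{smooth} min\-/entropy bound through the acceptance projection while simultaneously arguing it holds uniformly over all environments \--- to be the main obstacle; the chain rule and the hashing bound are routine by comparison.
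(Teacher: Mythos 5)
Your proposal is correct and takes essentially the same route as the paper: the paper likewise defines an ideal atomic resource $\sH'$ that absorbs the protocol and forces $Y'=X'$, bounds the distinguishing advantage by the undetected-collision probability $\eps_\verif$ via \eqnref{eq:universalhashing}, reduces any environment of $\sH'$ to one of $\sH$ with the transcript $(f,f(X),b)$ appended to the side information, loses $t$ bits by the chain rule (\lemref{lem:tech.min-entropy-inequalities.chain}), and finally conditions on acceptance. The only difference is bookkeeping: for the conditioning step you single out as the main obstacle, the paper simply invokes a ready-made lemma (\lemref{lem:tech.min-entropy-inequalities.event}, taken from \cite{TL15}), so your sketch of that step \--- monotonicity of min\-/entropy on a classical branch plus contractivity of the purified distance under the trace\-/non\-/increasing projection \--- is just a re-derivation of that known result.
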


Note that one has
$\aH^{k,\delta}_{\min} \subset \overline{\aH}^{k,\delta}_{\min}$. What
$\pi^{\verif}$ does is check if the resource shared by Alice and Bob
belongs to $\aH^{k,\delta}_{\min}$, and aborts if not. It leaks $t$
bits in the process and fails with probability $\eps_\verif$.

\begin{proof}
  Let $\sH \in \overline{\aH}^{k,\delta}_{\min}$ and define $\sH'$ to
  work as follows. $\sH'$ internally runs $\sH$. If it obtains
  $X = \bot$, it simply outputs $\bot$ at both Alice and Bob's
  interfaces when requested. If $X \neq \bot$, it prepares $(f,f(X))$
  for a uniformly chosen $f$ to be output at Eve's interface as
  information sent on the authentic channel if requested. It checks
  whether $f(X) = f(Y)$, and prepares the corresponding bit of
  backwards information to be output at Eve's interface if
  requested. If these strings are equal, $\sH'$ now outputs
  $X' = Y' = X$ at Alice's and Bob's interfaces when requested. If
  they are different, it outputs $X' = Y' = \bot$.

  To prove that this theorem holds, we will show that
  \[\pi^\verif \overline{\aH}^{k,\delta}_{\min} \subset
  \ball{\eps_\verif}{\aH^{k-t,\delta}_{\min}}.\]
  This statement follows if we can show that
  $\sH' \in \aH^{k-t,\delta}_{\min}$ and that
  $\pi^\verif \sH \close{\eps_\verif} \sH'$. Note that
  $\pi^\verif \sH$ and $\sH'$ behave identically, except for the
  string $Y'$ output at Bob's interface, which, in the case of $\sH'$
  is always $Y' = X'$, whereas in the real setting one might have
  $Y' \neq X'$. A bound on the probability that $Y' \neq X'$ is thus a
  bound on the distinguishability between the two systems; and it
  follows from the almost universal hashing property
  (\eqnref{eq:universalhashing}) that
  $\Pr[Y' \neq X'] \leq \eps_\verif$.

  To prove that $\sH' \in \aH^{k-t,\delta}_{\min}$ we need to show
  that for all $\sS$ interacting with $\sH'$, the resulting state is
  of the from
  $\rho_{X'Y'E'} = \proj{\bot,\bot} \tensor \tau_{E'} +
  \sigma_{X'Y'E'}$
  with $X' = Y'$ and $\HminSmooth[\sigma]{\delta}{X'|E'} \geq k-t$. By
  construction we always have $X' = Y'$, so the only condition to
  verify is the min\-/entropy bound. The last messages output by
  $\sH'$ at Eve's interface are $f$, $f(X)$, and the decision bit
  $f(X) = f(Y)$. We denote these registers by $F$, $Z$, and
  $\Omega$. It is sufficient to consider a system $\sS$ that interacts
  with $\sH$, and together with the output in register $E$ also
  outputs $FZ\Omega$, hence $E'=EFZ\Omega$. By definition of $\sH$,
  the state $\rho_{XYEFZ\Omega}$ resulting from $\sS$ interacting with
  $\sH$ and adding the transcript $FZ\Omega$ that is produced at Eve's
  interface has the form
  $\rho_{XYEFZ\Omega} = \proj{\bot,\bot} \tensor \tau_{EFZ\Omega} +
  \gamma_{XYEFZ\Omega}$
  with $\HminSmooth[\gamma]{\delta}{X|E} \geq k$. From
  \lemref{lem:tech.min-entropy-inequalities.chain} one has
  $\HminSmooth[\gamma]{\delta}{X|EFZ} \geq k - t$. Furthermore,
  $\gamma_{XYEFZ\Omega} = \gamma^0_{XYEFZ} \tensor \proj{0} +
  \gamma^1_{XYEFZ} \tensor \proj{1}$,
  and $\sigma_{X'Y'E'} = \gamma^1_{XYEFZ} \tensor \proj{1}$. It then
  follows from \lemref{lem:tech.min-entropy-inequalities.event} that
  $\HminSmooth[\sigma]{\delta}{X'|E'} \geq k-t$.
\end{proof}

\begin{cor}
\label{cor:distil.ec}
$\pi^\ec = \pi^\verif \circ \pi^\textcorr$ constructs
$\aH^{k-r-t,\delta}_{\min}$ from $\overline{\aH}^{k,\delta}_{\min}$
and $\aA^c$ with error $\eps_\verif$,
\begin{equation*}
%  \label{eq:distil.ec}
\overline{\aH}^{k,\delta}_{\min} \| \aA^c
  \xrightarrow{\pi^\ec,\eps_\verif} \aH^{k-r-t,\delta}_{\min},\end{equation*}
where $\aA^c =
\aA^{c_1} \| \aA^{c_2}$, and $\aA^{c_1}$ and $\aA^{c_2}$ are used by
$\pi^\textcorr$ and $\pi^\verif$, respectively.
\end{cor}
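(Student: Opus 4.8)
The plan is to derive this corollary purely from the composition theorem \thmref{thm:security}, by chaining \lemref{lem:distil.ec.corr.adv} with \thmref{thm:distil.ec}; no new entropy estimate is required, since all the quantitative work has already been carried out in those two results. The only genuine bookkeeping concerns the authentic channel, which is why I split it as $\aA^c = \aA^{c_1} \| \aA^{c_2}$, giving $\pi^\textcorr$ and $\pi^\verif$ each their own copy to consume.

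First I would restate \lemref{lem:distil.ec.corr.adv} on the full resource that is actually available, by placing the second channel $\aA^{c_2}$ in parallel and leaving it untouched. Since $\pi^\textcorr$ never acts on $\aA^{c_2}$, the trivial construction $\aA^{c_2} \xrightarrow{\id,0} \aA^{c_2}$ holds: this is immediate from $\id \in \Sigma$, which gives $\aA^{c_2} = \aA^{c_2}\id_E \subset \ball{0}{\left(\aA^{c_2}\right)^*}$. Applying the parallel part of \thmref{thm:security} (which needs context-insensitivity of $d$) to this together with \lemref{lem:distil.ec.corr.adv} then yields
\[ \overline{\aH}^{k,\delta}_{\min} \| \aA^{c_1} \| \aA^{c_2} \xrightarrow{\pi^\textcorr | \id,\, 0} \overline{\aH}^{k-r,\delta}_{\min} \| \aA^{c_2}. \]

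Next I would invoke \thmref{thm:distil.ec}, which is stated for an arbitrary starting min-entropy, with $k$ replaced by $k-r$ and with its authentic channel instantiated as $\aA^{c_2}$, obtaining
\[ \overline{\aH}^{k-r,\delta}_{\min} \| \aA^{c_2} \xrightarrow{\pi^\verif,\, \eps_\verif} \aH^{k-r-t,\delta}_{\min}. \]
The resource produced by the first step coincides exactly with the resource consumed by this second step, so the two statements compose. Feeding them into the serial part of \thmref{thm:security} (which needs $d$ to be contractive) and using $\pi^\ec = \pi^\verif \circ \pi^\textcorr$, I arrive at
\[ \overline{\aH}^{k,\delta}_{\min} \| \aA^c \xrightarrow{\pi^\ec,\, \eps_\verif} \aH^{k-r-t,\delta}_{\min}, \]
with total error $0 + \eps_\verif = \eps_\verif$, as claimed.

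The step I expect to require the most care is the alignment of the intermediate resource: one must check that the channel $\aA^{c_2}$ carried through the first step is precisely the (single-use) authentic channel that \thmref{thm:distil.ec} expects, and that the $(k-r)$-shift of the min-entropy parameter produced by \lemref{lem:distil.ec.corr.adv} matches the hypothesis of \thmref{thm:distil.ec} after this relabelling. Everything else is a direct, error-free application of the framework's composability, which is exactly why the statement can be packaged as a corollary rather than a standalone proof.
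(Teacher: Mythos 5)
Your proof is correct and takes essentially the same route as the paper: the paper's entire proof of this corollary is the single line ``Follows from \lemref{lem:distil.ec.corr.adv}, \thmref{thm:distil.ec}, and \thmref{thm:security},'' which is precisely the chain you carry out. Your extra bookkeeping \--- the trivial construction $\aA^{c_2} \xrightarrow{\id,0} \aA^{c_2}$, the parallel composition step, and instantiating \thmref{thm:distil.ec} at entropy $k-r$ \--- just makes explicit what the paper leaves implicit.
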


\begin{proof}
  Follows from \lemref{lem:distil.ec.corr.adv},
  \thmref{thm:distil.ec}, and \thmref{thm:security}.
\end{proof}

% \begin{rem}
% \label{rem:distil.ec}
% The analysis performed in this section assumes that the players use
% authentic channels to communicate. If instead the players use secure
% channels $\aS^c$, e.g., by using the modified Fehr-Salvail protocol
% that we analyze in \secref{sec:FS},\footnote{Since this protocol
%   allows all the key to be recycled if the messages pass
%   authentication, it has no net key cost.} then the error correction
% scheme does not leak any information about $X$ to the
% adversary. Instead of \eqnref{eq:distil.ec}, one then gets
% \begin{equation*}
% \overline{\aH}^{k,\delta}_{\min} \| \aS^c
%   \xrightarrow{\pi^\ec,\eps_\verif} \aH^{k,\delta}_{\min}.\end{equation*}
% \end{rem}

\subsubsection{Robustness}
\label{sec:distil.ec.robustness}

We now consider the case where correlations between $X$ and $Y$ are
known. The procedure $\pi^\textcorr$ must be chosen to deal specifically
with these errors. Let
$\overline{\aR}^{k,\delta,p}_{\min} \subset
\overline{\aH}^{k,\delta}_{\min}$
be a min\-/entropy specification that is restricted to resources that
only abort with probability at most $p$ and furthermore, that do not
allow arbitrary outputs $Y$, but only strings such that
$(X,Y) \subset \cS$, where $\cS$ is the set of pairs that can be
corrected. And let
$\aR^{k,\delta,p}_{\min} \subset \aH^{k,\delta}_{\min}$ be a
specification of resources that also abort with probability at most
$p$ and produce identical strings at Alice's and Bob's interfaces. It
is straightforward that in this case the procedure $\pi^\ec$
constructs $\aR^{k,\delta,p}_{\min}$ from
$\overline{\aR}^{k-r-t,\delta,p}_{\min}$, i.e., one can make
statements about the probability that the protocol aborts.

\begin{lem}
\label{lem:distil.ec}
Let $\pi^\ec$, $\overline{\aR}^{k,\delta,p}_{\min}$, and
$\aR^{k,\delta,p}_{\min}$ be as described above. Then
\begin{equation*}
%\label{eq:distil.ec.2}
\overline{\aR}^{k,\delta,p}_{\min} \| \aA^c
\xrightarrow{\pi^\ec,0} \aR^{k-r-t,\delta,p}_{\min}.\end{equation*}
\end{lem}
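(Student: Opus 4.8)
The plan is to reuse the machinery already built for $\pi^\ec$ in \lemref{lem:distil.ec.corr.adv} and \thmref{thm:distil.ec}, and to exploit the one feature that distinguishes $\overline{\aR}^{k,\delta,p}_{\min}$ from $\overline{\aH}^{k,\delta}_{\min}$: the restriction to output pairs with $(X,Y) \in \cS$. First I would fix an arbitrary $\sH \in \overline{\aR}^{k,\delta,p}_{\min}$ and run $\pi^\ec = \pi^\verif \circ \pi^\textcorr$ on $\sH \| \aA^c$. Since the specification only admits outputs with $(X,Y) \in \cS$ (or both $\bot$), the defining property of the pair $(\synd,\corr)$ gives $\hat X = \corr(Y,\synd(X)) = X$ whenever $X \neq \bot$. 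Thus Bob recovers Alice's string \emph{exactly} already after $\pi^\textcorr$, and consequently $f(\hat X) = f(X)$ always holds, so the verification step never triggers an abort. This deterministic success is precisely what drives the construction error to $0$ and preserves the abort bound $p$.

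Next, following the construction of $\sH'$ in the proof of \thmref{thm:distil.ec}, I would build the ideal resource that internally simulates $\sH$, outputs $\bot$ at both honest interfaces when $X = \bot$, and otherwise produces the syndrome $Z = \synd(X)$ together with $(f,f(X))$ and the acceptance bit at Eve's interface (as the transcript carried on $\aA^c$), while outputting $X' = Y' = X$ at Alice's and Bob's interfaces. The simulator is trivial, merely relaying this transcript. The crucial difference from \thmref{thm:distil.ec} is that the only source of distinguishing advantage there, namely the event $Y' \neq X'$ (verification passing on mismatched strings), cannot occur here: $\hat X = X$ with certainty for every interacting $\sS$. Hence $\pi^\ec(\sH \| \aA^c)$ and the ideal resource are indistinguishable at distance exactly $0$, which yields $\pi^\ec(\sH \| \aA^c) \close{0} \sH' \sigma_E$ and therefore the claimed error.

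It then remains to verify that $\sH'$ lies in $\aR^{k-r-t,\delta,p}_{\min}$, i.e.\ its three defining properties. Identical strings hold by construction, since $X' = Y'$ always. The abort bound is preserved because the protocol now aborts if and only if $\sH$ itself outputs $\bot$, so the abort probability equals that of $\sH$, which is at most $p$. The min\-/entropy bound $\HminSmooth[\sigma]{\delta}{X'|E'} \geq k-r-t$ follows exactly as in the security analysis: the chain rule of \lemref{lem:tech.min-entropy-inequalities.chain} accounts for the $r$ bits leaked by the syndrome $\synd(X)$ and the $t$ bits leaked by the hash value $f(X)$ (cf.\ \lemref{lem:distil.ec.corr.adv} and \thmref{thm:distil.ec}), and \lemref{lem:tech.min-entropy-inequalities.event} handles the conditioning on the non\-/abort event; composing the two steps as in Corollary~\ref{cor:distil.ec} gives total leakage $r+t$.

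The main subtlety, and the only point that genuinely departs from the earlier results, is arguing that the distance is \emph{exactly} $0$ rather than $\eps_\verif$. This rests entirely on the impossibility of the mismatch event $Y' \neq X'$ over $\overline{\aR}^{k,\delta,p}_{\min}$; I would stress that the constraint $(X,Y) \in \cS$ is part of the \emph{specification} and hence holds for \emph{every} adversarial strategy $\sS$, so no residual error survives uniformly over all $\sS$. The robustness parameter $p$, being a property that transfers unchanged through a protocol that introduces no new aborts, then carries over for free, completing the argument.
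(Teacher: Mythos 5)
Your proof is correct and follows essentially the same route as the paper's: the paper disposes of this lemma in one line, citing \lemref{lem:tech.min-entropy-inequalities.chain} for the $r+t$ bits of leakage and the defining property of $(\synd,\corr)$ on $\cS$ for the zero error, which is precisely the argument you spell out. Your additional detail (the explicit ideal resource, the trivial simulator, and the preservation of the abort bound $p$) is a faithful expansion of what the paper leaves implicit.
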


\begin{proof} Immediate from
  \lemref{lem:tech.min-entropy-inequalities.chain}, and the properties
  of the error correction code chosen.\end{proof}

Note that if $(X,Y) \subset \cS$ only holds with probability
$1 - \eps$, then one does not have the resource
$\overline{\aR}^{k,\delta,p}_{\min}$, but a less specific resource
$\ball{\eps}{\overline{\aR}^{k,\delta,p}_{\min}}$, and using the fact
that for any $\aS$, $\aS^\eps \xrightarrow{\id,\eps} \aS$, and
\thmref{thm:security}, one has
\[ \ball{\eps}{\overline{\aR}^{k,\delta,p}_{\min}} \| \aA^c
\xrightarrow{\pi^\ec,\eps} \aR^{k-r-t,\delta,p}_{\min}.\]

% As pointed out in \remref{rem:distil.ec}, if the players use secure
% channels instead of authentic channels we get,
% \[ \overline{\aR}^{k,\delta,p}_{\min} \| \aS^c
% \xrightarrow{\pi^\ec,0} \aR^{k,\delta,p}_{\min}\]
% instead of \eqnref{eq:distil.ec.2}.

\subsection{Privacy Amplification}
\label{sec:distil.pa}

After the error correction scheme, the players share a string that is
partially known to the adversary. To obtain a secret string, Alice
picks a string $Z$ uniformly at random, a \emph{seed}, and sends it to
Bob on an authentic channel. They then each compute the strings $K =
\Ext(X,Z)$ and $K' = \Ext(Y,Z)$ for some predefined function $\Ext$,
known as an \emph{extractor}. This procedure is called privacy
amplification~\cite{BBCM95,RK05}, and we denote the corresponding
pair of converters as $\pi^\pa$. The function $\Ext$ has to satisfy
the following property.

\begin{deff}
\label{def:extractor}
A function $\Ext : \{0,1\}^n \times \{0,1\}^d \to \{0,1\}^m$ is a
\emph{quantum\-/proof $(k,\eps)$\-/strong extractor for
subnormalized states} if for any subnormalized state $\rho_{XE}$
classical on $X$, with $\Hmin[\rho]{X|E} \geq k$, and a uniform $Z$,
we have
\[ \frac{1}{2} \trnorm{\rho_{\Ext(X,Z)ZE} - \tau_K \tensor \tau_Z \tensor
      \rho_E} \leq \eps,\] where $\tau_K$ is the fully mixed state.
\end{deff}

The more standard definition for quantum\-/proof extractors
only requires them to be defined for normalized states. But as shown
in \lemref{lem:tech.extractors.sub}, any extractor for normalized
states is also an extractor for subnormalized states with slightly
weaker parameters. Efficient constructions of extractors are given in,
e.g., \cite{RK05,FS08,TSSR11,HT16,DPVR12}. Note that some of these, e.g.,
the universal hashing extractors~\cite{RK05,TSSR11,HT16}, directly
satisfy \defref{def:extractor}.

We are now ready to state the main theorem for this section.

\begin{thm}
\label{thm:distil.pa}
Let $\pi^\pa$ be a privacy amplification protocol as described above,
which uses a quantum\-/proof $(k,\eps_\pa)$\-/strong extractor for
subnormalized states. Given a $(k,\delta)$\-/min\-/entropy resource
$\aH^{k,\delta}_{\min}$ which provides Alice and Bob with identical
strings $X = Y$, and an authentic channel $\aA^c$, $\pi^\pa$
constructs am $m$-bit secret key resource $\aK^m$ (see
\figref{fig:key.sim}) with error $\eps_\pa+2\delta$,
\[ \aH^{k,\delta}_{\min} \| \aA^c \xrightarrow{\pi^\pa,\eps_\pa+2\delta} \aK^m.\]
\end{thm}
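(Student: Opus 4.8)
The plan is to verify the atomic security condition \eqnref{eq:security.atomic}: it suffices to show that for every $\sH \in \aH^{k,\delta}_{\min}$ with $X = Y$, composed with the authentic channel $\aA^c$, there is a simulator $\simE \in \Sigma$ with
\[ \pi^\pa_B \pi^\pa_A \left(\sH \| \aA^c\right) \close{\eps_\pa + 2\delta} \aK^m \simE, \]
where $\aK^m$ denotes the (atomic) secret key resource and the left-hand side is the real system. Since for combs the distinguishing advantage equals the trace distance of the final output states (\secref{sec:cc.instants}), this reduces to bounding a trace distance between two explicit states on the registers held by Alice, Bob, and Eve.

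First I would specify the simulator. Because $\aA^c$ faithfully delivers the seed $Z$ to Bob while handing Eve a copy, both honest parties compute $K = \Ext(X,Z) = \Ext(Y,Z) = K'$ in the real protocol, so the two extracted keys always agree. The simulator $\simE$, attached to Eve's interface of $\aK^m$ (see \figref{fig:qkd.ideal}), internally runs $\sH$ against the environment to produce the side information $E$, samples a uniform seed $Z$ to emulate the channel leakage, and inputs the abort decision into $\aK^m$ precisely when $\sH$ outputs $\bot$; it then releases Eve's view at its outer interface. With this choice the real and ideal output states agree on the $\bot$/abort branch and share the same marginal on Eve's registers, so the distance collapses onto the non-aborting branch, where the only discrepancy is that the real key equals $\Ext(X,Z)$ while the ideal key is uniform and independent. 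As both branches carry the same weight $\trace{\sigma_{XE}}$, the generalized and ordinary trace distances coincide, and the task becomes bounding $\frac{1}{2}\trnorm{\sigma_{\Ext(X,Z)ZE} - \tau_K \tensor \tau_Z \tensor \sigma_E}$, where $\sigma_{XE}$ is the non-aborting part of the state guaranteed by \defref{def:min-entropy.comb} and $\sigma_E = \tr_X \sigma_{XE}$.

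Next I would run the smoothing-plus-extraction argument. By the definition of smooth min-entropy there is a state $\tilde{\sigma}_{XE}$ with $\Hmin[\tilde{\sigma}]{X|E} \geq k$ and $P(\tilde{\sigma}_{XE}, \sigma_{XE}) \leq \delta$, hence $\bar{D}(\tilde{\sigma}_{XE}, \sigma_{XE}) \leq \delta$ since $\bar{D} \leq P$. Applying \defref{def:extractor}, valid for subnormalized states by \lemref{lem:tech.extractors.sub}, to the exact-min-entropy state $\tilde{\sigma}$ gives $\frac{1}{2}\trnorm{\tilde{\sigma}_{\Ext(X,Z)ZE} - \tau_K \tensor \tau_Z \tensor \tilde{\sigma}_E} \leq \eps_\pa$. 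A triangle inequality then bounds the target by this extractor term plus two smoothing terms: one from replacing $\sigma_{XE}$ by $\tilde{\sigma}_{XE}$ before extraction (sampling $Z$ and applying $\Ext$ is a CPTP map and so does not increase $\bar{D}$), and one from replacing $\tilde{\sigma}_E$ by $\sigma_E$ in the ideal state (a partial trace, again data processing). Each smoothing term is at most $\delta$, yielding the claimed total $\eps_\pa + 2\delta$.

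The main obstacle is the reduction of the second paragraph rather than the final inequality: one must argue carefully that the simulator reproduces Eve's complete view \--- both the leaked seed and the side information $E$ \--- together with the correct abort correlation, so that the distinguishing problem genuinely collapses to the extractor-type distance on the non-aborting branch. Here the strongness of the extractor is essential precisely because Eve learns $Z$ through the channel, and the authenticity of $\aA^c$ is what guarantees $K = K'$; an insecure channel would let Eve desynchronize the two keys and break the construction. Once this reduction is in place, the smoothing and extractor steps are routine.
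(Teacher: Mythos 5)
Your proposal is correct and follows essentially the same route as the paper's proof: the same simulator (running $\sH$ internally, sampling the seed $Z$, and forwarding the abort decision to $\aK^m$), the same collapse of the distinguishing advantage onto the non-aborting branch, and the same final bound $\eps_\pa + 2\delta$. The only difference is that where the paper invokes \lemref{lem:tech.extractors.smooth} to pass from smooth min-entropy to the extractor guarantee, you re-derive that lemma inline via the triangle inequality with two $\delta$-smoothing terms, which is precisely the content of the cited result.
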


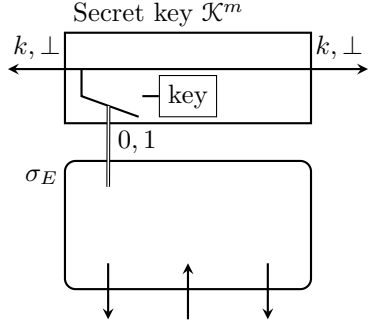
\begin{figure}[tb]
\centering
\begin{tikzpicture}[
sArrow/.style={->,>=stealth,thick},
thinResource/.style={draw,thick,minimum width=1.618*2cm,minimum height=1.2cm},
simulator/.style={draw,thick,rounded corners,minimum width=1.618*2cm,minimum height=1.7cm},
innersnode/.style={minimum width=.4cm,minimum height=.3cm}]

\small

\def\t{2.368} % 1.618+.75
\def\u{-1.95} % .6+.5+1.7/2
\def\w{-3.2} % .5+.5+1.7+.5
\def\v{.118} % 1/1.618-.5
\def\s{1.05}

\node[thinResource] (keyBox) at (0,0) {};
\node[draw] (key) at (0,\v/2-.3) {key};
\node (junc) at (-1.4,0 |- key.center) {};
\node[yshift=-1.5,above right] at (keyBox.north west) {Secret key $\aK^m$};
\node (alice) at (-\t,\v) {};
\node (bob) at (\t,\v) {};

\draw[sArrow,<->] (alice.center) to node[pos=.08,auto] {$k,\bot$} node[pos=.92,auto] {$k,\bot$} (bob.center);
\draw[thick] (junc.center |- 0,\v) to (junc.center) to node[pos=.472] (handle) {} +(160:-.8);
\draw[thick] (-.6,0 |- junc.center) to (key);

\node[simulator] (sim) at (0,\u) {};
\node[xshift=1.5,below left] at (sim.north west) {$\sigma_E$};
\node[innersnode] (a1) at (-\s,\u+.35) {};
\node[innersnode] (a2) at (-\s,\u-.35) {};
\node[innersnode] (b2) at (\s,\u-.35) {};
\node[innersnode] (c2) at (0,\u-.35) {};

\node (evel) at (-\s,\w) {};
\node (evec) at (0,\w) {};
\node (ever) at (\s,\w) {};

\draw[double] (a1) to node[pos=.55,auto,swap] {$0,1$} (handle.center);
\draw[sArrow] (a2) to (evel.center);
\draw[sArrow] (evec.center) to (c2);
\draw[sArrow] (b2) to (ever.center);

\end{tikzpicture}
\caption[Secret key with simulator]{\label{fig:key.sim}An $m$-bit
  secret key resource $\aK^m$ connected to a simulator $\sigma_E$ at
  Eve's interface. As in \secref{sec:cc.security}, $\aK^m$ does not
  spontaneously output the keys at the players' interfaces, but only
  after having received a request, which is omitted from the drawing.}
\end{figure}

Unlike previous proofs, the current one requires a simulator,
$\sigma_E$, which we draw in \figref{fig:key.sim}. Note that the
number of messages exchanged at the outer interface of $\sigma_E$
(i.e., the number of arrows in \figref{fig:key.sim}) will depend on
the atomic resource $\sR \in \pi^\pa(\aH^{k,\delta}_{\min} \| \aA^c)$
from the real system. For example, in the case of a min\-/entropy
resource constructed using a QKD protocol, this would be three
messages as drawn in \figref{fig:key.sim}: the quantum states
generated by Alice are output at Eve's interface, the modified states
are then input by Eve, and finally a transcript of the classical
communication is given to Eve as well (upon request).

\begin{proof}
  To prove that
  $\pi^\pa(\aH^{k,\delta}_{\min} \| \aA^c) \subset
  \ball{\eps_\pa}{\left(\aK^m\right)^*}$,
  for every atomic resource
  $\pi^\pa(\sH \| \sA) \in \pi^\pa(\aH^{k,\delta}_{\min} \| \aA^c)$ we
  will find a $\sK$ and $\sigma_E$ such that
  $\pi^\pa(\sH \| \sA) \close{\eps_\pa} \sK \sigma_E$. Since $\aA^c$
  and $\aK^m$ both have cardinality $1$, we only need to find a
  simulator $\sigma_E$ for every $\sH$. We construct $\sigma_E$ as
  follows. $\sigma_E$ runs $\sH$ internally and generates all the
  communication at the outer interface. If $\sH$ outputs $\bot$,
  $\sigma_E$ notifies the key resource $\sK$ to output an error $\bot$
  as well. If $\sH$ outputs a string $X \neq \bot$, then it picks a
  random seed $Z$, which is made available to be output at its outer
  interface as the transcript on the authentic channel. It notifies
  $\sK$ to generate a uniform key $k$, which is output at Alice's and
  Bob's interfaces when requested.

  Let a distinguisher interact with the real system. After interacting
  with $\sH$ it gets a system $E$, then obtains $Z$ and finally an
  output $K$ from Alice's and Bob's interface. We take $Z = \bot$ in
  case $\sH$ generated a string $X = \bot$. The distinguisher then
  holds the state
  \[\rho_{KZE} = \proj{\bot,\bot} \tensor \rho^\bot_E +
  \rho^{\top}_{\Ext(X,Z)ZE}.\]
  In the case where the distinguisher is interacting with
  $\sK\sigma_E$, it is going to get the state 
  \[\tilde{\rho}_{KZE} = \proj{\bot,\bot} \tensor \rho^\bot_E +
  \tau_K \tensor \tau_{Z} \tensor \rho^{\top}_{E}.\]

  The distinguishability between the two systems is bounded by the
  trace distance between $\rho_{KZE}$ and $\tilde{\rho}_{KZE}$, namely
  \[\frac{1}{2} \trnorm{\rho_{KZE} - \tilde{\rho}_{KZE}} = \frac{1}{2}
  \trnorm{\rho^{\top}_{\Ext(X,Z)ZE} - \tau_K \tensor \tau_{Z} \tensor
    \rho^{\top}_{E}} \leq \eps_\pa+2\delta,\]
  where the last inequality uses the fact that
  $\HminSmooth[\rho^\top]{\delta}{X|E} \geq k$, that $\Ext$ is a
  quantum\-/proof $(k,\eps_\pa)$\-/strong extractor for subnormalized
  states and \lemref{lem:tech.extractors.smooth}, which shows that a
  $(k,\eps)$\-/extractor has error $\eps+2\delta$ if used on states
  with a bound on their smooth min\-/entropy instead of their
  min\-/entropy.
\end{proof}

\subsection{Composed statement}
\label{sec:distil.comp}

By composing \corref{cor:distil.ec}, \lemref{lem:distil.ec} and
\thmref{thm:distil.pa}, we get the following.

\begin{cor}
\label{cor:distil.comp}
Let $\overline{\aH}^{k,\delta}_{\min}$ be a
$(k,\delta)$\-/min\-/entropy resource that produces an arbitrary
output at Bob's interface, let
$\overline{\aR}^{k,\delta}_{\min} \subset
\overline{\aH}^{k,\delta}_{\min}$
be a $(k,\delta)$\-/min\-/entropy resource that never aborts and
furthermore, produces strings $(X,Y) \subset \cS$ that are always
corrected by the functions $(\synd,\corr)$. Let $\underline{\aK}^m$ be
a secret key resource that always provides the players with a uniform
key of length $m$ and let $\aK^m$ be a secret key resource that only
provides a key if the adversary allows it. Finally, let
$\pi = \pi^{\pa} \circ \pi^\ec$ be the
construction described in this section. Then
\begin{gather*}
\overline{\aR}^{k,\delta}_{\min} \| \aA^c
\xrightarrow{\pi,\eps_\pa +2\delta} \underline{\aK}^m \\
\intertext{and}
\overline{\aH}^{k,\delta}_{\min} \| \aA^c
\xrightarrow{\pi,\eps_\verif+\eps_\pa+2\delta} \aK^m,
\end{gather*}
where $\aA^c = \aA^{c_1} \| \aA^{c_2}$, and $\aA^{c_i}$ are the
authentic channels used by the different parts of the
protocol.\end{cor}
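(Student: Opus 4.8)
The plan is to obtain both statements by chaining the three constructions of this section with the serial composition theorem (\thmref{thm:security}), whose hypotheses hold because the distinguishing advantage is a trace distance on the distinguisher's final state and hence both contractive and context\-/insensitive. No new estimate is required; the only work is to track which authentic channel is consumed at which stage and to carry the running entropy loss. I would therefore split $\aA^c = \aA^{c_1} \| \aA^{c_2}$, with $\aA^{c_1}$ the channel(s) used by error correction (internally the syndrome and verification channels of \corref{cor:distil.ec}) and $\aA^{c_2}$ the seed channel used by privacy amplification, and note that $\pi = \pi^\pa \circ \pi^\ec$ loses $r$ bits to the syndrome and $t$ bits to the hash before privacy amplification is applied with entropy parameter $k-r-t$.

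For the security statement, \corref{cor:distil.ec} gives $\overline{\aH}^{k,\delta}_{\min} \| \aA^{c_1} \xrightarrow{\pi^\ec,\eps_\verif} \aH^{k-r-t,\delta}_{\min}$. To keep $\aA^{c_2}$ available for the next stage, I would tensor this with the trivial construction $\aA^{c_2} \xrightarrow{\id,0} \aA^{c_2}$ and apply the parallel rule of \thmref{thm:security}, obtaining (by associativity of $\|$) $\overline{\aH}^{k,\delta}_{\min} \| \aA^c \xrightarrow{\pi^\ec | \id,\eps_\verif} \aH^{k-r-t,\delta}_{\min} \| \aA^{c_2}$. Now \thmref{thm:distil.pa}, applied with $k-r-t$ in place of $k$, gives $\aH^{k-r-t,\delta}_{\min} \| \aA^{c_2} \xrightarrow{\pi^\pa,\eps_\pa+2\delta} \aK^m$. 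Serial composition of these two constructions via \thmref{thm:security} yields $\overline{\aH}^{k,\delta}_{\min} \| \aA^c \xrightarrow{\pi,\eps_\verif+\eps_\pa+2\delta} \aK^m$, with the errors adding exactly as claimed.

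For the robustness statement I would run the identical chain, replacing \corref{cor:distil.ec} by \lemref{lem:distil.ec} with $p=0$: since $\overline{\aR}^{k,\delta}_{\min}$ never aborts, error correction succeeds with certainty and constructs $\aR^{k-r-t,\delta}_{\min}$ with error $0$. After the same lift to the full channel $\aA^c$, privacy amplification is applied to the never\-/aborting resource $\aR^{k-r-t,\delta}_{\min}$. Here I would inspect the simulator in the proof of \thmref{thm:distil.pa}: on an input that never produces $\bot$ it never triggers the abort branch, so the very same argument constructs the always\-/succeeding key $\underline{\aK}^m \subset \aK^m$ rather than $\aK^m$. This is the stronger statement available in the more specific, non\-/adversarial context, precisely as anticipated in the discussion following \defref{def:security}. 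Serial composition then gives $\overline{\aR}^{k,\delta}_{\min} \| \aA^c \xrightarrow{\pi,\eps_\pa+2\delta} \underline{\aK}^m$, error correction contributing nothing to the total.

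The main obstacle is not any individual bound but the correct accounting of contexts: each intermediate construction must be lifted (via the parallel rule) into the environment that still contains the channel it does not yet use, so that the output specification of one stage is literally the input specification of the next and \thmref{thm:security} applies verbatim. The secondary subtlety is justifying the refinement $\aK^m \rightsquigarrow \underline{\aK}^m$ in the robust case, which requires looking inside the simulator of \thmref{thm:distil.pa} rather than invoking that theorem as a black box.
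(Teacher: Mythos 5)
Your proposal is correct and follows essentially the same route as the paper, whose entire proof is the one\-/line remark that the corollary follows by composing \corref{cor:distil.ec}, \lemref{lem:distil.ec} and \thmref{thm:distil.pa} via \thmref{thm:security}. The extra care you take \--- lifting each stage in parallel with the identity on the not\-/yet\-/used channel, and noting that the refinement to $\underline{\aK}^m$ in the robust case requires reusing the simulator argument inside \thmref{thm:distil.pa} on a never\-/aborting input rather than citing that theorem as a black box \--- is exactly what the paper leaves implicit.
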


\section{Quantum Authentication of Classical Messages}
\label{sec:FS}

Fehr and Salvail~\cite{FS17a,FS17b} propose a prepare\-/and\-/measure
protocol that authenticates a classical message by encoding it into a
quantum cipher. The main feature of their protocol is that it recycles
all the key if the message is accepted by the receiver. In this
section we provide a composable security analysis of a slightly
modified version of their protocol, which is adapted
from~\cite{FS17b}. We first provide a high level view of what the
protocol achieves in the AC framework in
\secref{sec:FS.construction}. Then in \secref{sec:FS.protocol} we give
the details of the protocol. The security proof is provided in
\secref{sec:FS.analysis}. And finally, in \secref{sec:FS.continuing}
we prove bounds on the security if the compromised key is replaced
instead of being discarded.

\subsection{The Construction}
\label{sec:FS.construction}

In this section we model the Fehr-Salvail protocol as a pair of
converters $\pi = (\pi_A,\pi_B)$ that construct some resource $\aS$
given some other resource $\aR$. As in \secref{sec:distil}, there are
two statements we wish to make. The first is when an adversary is
present and the players share an insecure quantum channel $\aQ$ as
drawn in \figref{fig:resource.insecure}, i.e., the adversary can
change the message being sent and insert a message of her own. We
model the real and ideal systems for this case in
\secref{sec:FS.construction.real} and
\secref{sec:FS.construction.ideal}, respectively.  The second setting
is when no adversary is present, but the channel shared has some
natural noise, as in \figref{fig:resource.noisy}. This case is
discussed in \secref{sec:FS.construction.noise}.

\begin{figure}[tb]
\subcaptionbox[Insecure channel]{\label{fig:resource.insecure}An insecure channel
  from Alice (on the left) to Bob (on the right) allows Eve (below) to
  intercept the message and insert a message of her own.}[.5\textwidth][c]{
\begin{tikzpicture}[
resource/.style={draw,thick,minimum width=3.2cm,minimum height=1cm},
filter/.style={draw,thick,minimum width=1.6cm,minimum height=.8cm},
sArrow/.style={->,>=stealth,thick},
sLine/.style={-,thick}]

\small

\def\t{2.2} %.1.6+.7
\def\v{1.4} %.5+.5+.4
\def\w{1.2} %.5+.7
\def\s{.4}

\node[resource] (ch) at (0,0) {};
\node[yshift=-1.5,above right] at (ch.north west) {
  Insecure channel $\aQ$};
\node (alice) at (-\t,0) {};
\node[yshift=-1.5,below] at (alice) {\footnotesize
  Alice};
\node (bob) at (\t,0) {};
\node[yshift=-1.5,below] at (bob) {\footnotesize
  Bob};
\node (eve) at (0,-\w) {\footnotesize
  Eve};

\draw[sArrow] (alice.center) to node[auto,pos=.15] {$\rho$} (-\s,0) to (-\s,0 |- eve.north);
\draw[sArrow] (\s,0 |- eve.north) to (\s,0) to node[auto,pos=.85] {$\rho'$}  (bob.center);

\end{tikzpicture}}
\subcaptionbox[Insecure channel without
adversary]{\label{fig:resource.noisy}When no adversary is present,
  Alice's message is delivered to Bob, but some natural noise is
  inserted.}[.5\textwidth][c]{
\begin{tikzpicture}[
resource/.style={draw,thick,minimum width=3.2cm,minimum height=1cm},
filter/.style={draw,thick,minimum width=1.6cm,minimum height=.8cm,rounded corners},
sArrow/.style={->,>=stealth,thick},
sLine/.style={-,thick}]

\small

\def\t{2.3} %.1.6+.7
\def\v{1.4} %.5+.5+.4
\def\w{1.2} %.5+.7
\def\s{1}

\node[resource] (ch) at (0,0) {};
\node[yshift=-1.5,above right] at (ch.north west) {
  Noisy channel $\aR^\eps_\varphi$};
\node (alice) at (-\t,0) {};
\node[yshift=-1.5,below] at (alice) {\footnotesize
  Alice};
\node (bob) at (\t,0) {};
\node[yshift=-1.5,below] at (bob) {\footnotesize
  Bob};
\node (eve) at (0,-\w) {\footnotesize
  Eve};

\draw[sLine] (alice.center) to node[auto,pos=.15] {$\rho$} (-\s,0);
\draw[sLine,decorate,decoration={snake,amplitude=1mm,segment length=1.8mm}] (-\s,0) to (\s,0);
\draw[sArrow] (\s,0) to node[auto,pos=.85] {$\cE_\varphi(\rho)$} (bob.center);

\end{tikzpicture}}

\vspace{6pt}

\subcaptionbox[Authentic channel]{\label{fig:resource.auth.ideal}An
  idealized authentic channel from Alice to Bob allows Eve to receive
  a copy of the message, but guarantees that Bob always gets
  it.}[.5\textwidth][c]{
\begin{tikzpicture}[
resource/.style={draw,thick,minimum width=3.2cm,minimum height=1cm},
filter/.style={draw,thick,minimum width=1.6cm,minimum height=.8cm},
sArrow/.style={->,>=stealth,thick},
sLine/.style={-,thick}]

\small

\def\t{2.2} %.1.6+.7
\def\v{1.4} %.5+.5+.4
\def\w{1.2} %.5+.7
\def\s{.4}

\node[resource] (ch) at (0,0) {};
\node[yshift=-1.5,above] at (ch.north) {
  Authentic channel $\underline{\aA}$};
\node (alice) at (-\t,0) {};
\node[yshift=-1.5,below] at (alice) {\footnotesize
  Alice};
\node (bob) at (\t,0) {};
\node[yshift=-1.5,below] at (bob) {\footnotesize
  Bob};
\node (eve) at (0,-\w) {\footnotesize
  Eve};

\draw[sArrow] (alice.center) to node[auto,pos=.07] {$x$} node[auto,pos=.93] {$x$}  (bob.center);
\draw[sArrow] (0,0) to (eve.north);

\end{tikzpicture}}
\subcaptionbox[Authentic channel with
switch]{\label{fig:resource.authentic}A realistic authentic channel from
  Alice to Bob allows Eve to receive a copy of the message and choose
  whether Bob receives it or an error symbol.}[.5\textwidth][c]{
\begin{tikzpicture}[
resource/.style={draw,thick,minimum width=3.2cm,minimum height=1cm},
filter/.style={draw,thick,minimum width=1.6cm,minimum height=.8cm},
sArrow/.style={->,>=stealth,thick},
sLine/.style={-,thick}]

\small

\def\t{2.2} %.1.6+.7
\def\v{1.4} %.5+.5+.4
\def\w{1.2} %.5+.7
\def\s{.4}

\node[resource] (ch) at (0,0) {};
\node[yshift=-1.5,above] at (ch.north) {
  Authentic channel $\aA$};
\node (alice) at (-\t,0) {};
\node[yshift=-1.5,below] at (alice) {\footnotesize
  Alice};
\node (bob) at (\t,0) {};
\node[yshift=-1.5,below] at (bob) {\footnotesize
  Bob};
\node (eve) at (0,-\w) {\footnotesize
  Eve};

\draw[sLine] (alice.center) to node[auto,pos=.15] {$x$} (0,0) to node (handle) {} (330:2*\s);
\draw[sArrow] (2*\s,0) to node[auto,pos=.85] {$x,\bot$}  (bob.center);
\draw[sArrow] (-\s,0) to (-\s,0 |- eve.north);
\draw[double,thick] (handle.center |- eve.north) to node[auto,swap,pos=.25] {$0,1$} (handle.center);

\end{tikzpicture}}

\caption[Various resources]{\label{fig:resources}Some
  resources needed in the Fehr-Salvail construction.}
\end{figure}

\subsubsection{Real System}
\label{sec:FS.construction.real}

The protocol $\pi = (\pi_A,\pi_B)$ has a very simple structure. First
$\pi_A$ encodes a classical message in a quantum cipher and sends it
to Bob on the insecure channel $\aQ$. $\pi_B$ decodes, and either
accepts or rejects the message. It also recycles some key depending on
the decision. $\pi_B$ then sends one bit of information back to Alice,
to notify her of whether the message was accepted or rejected, after
which $\pi_A$ recycles Alice's keys.

To send this bit of information to Alice, we assume that the players
share a backward authentic channel $\underline{\aA}^1$. As in
\secref{sec:distil}, this is defined as a channel that always
transmits the message, but provides a copy to Eve if she requests
it. This is illustrated in \figref{fig:resource.auth.ideal} (but to
simplify the drawing, we do not draw the request that Eve must make to
get the transcript of the authentic channel).

The final resources the players are going to need are the key
resources. The protocol uses three keys,
$(\ell_\sss,\ell_\mac,\theta)$. $\ell_\sss$ and $\ell_\mac$ need to be
uniform, so we provide Alice and Bob with a shared uniform key
resource $\aK$, as depicted in \figref{fig:qkd.ideal}, which generates
these keys. Note that if Eve does not allow the players to get a key,
they cannot run the protocol, and it is trivially secure. Thus, in the
following, we assume the switch is set by Eve to provide the players
with a key.

The key $\theta$ does not need to be uniform, it is sufficient if it
has bounded min\-/entropy. So we provide the players with a
$(k,0)$\-/min\-/entropy resource $\aH^{k,0}_{\min}$ as defined in
\secref{sec:min-entropy.spec}. Since we always take the smoothing
parameter to be $0$ in this section, we denote this resource
by $\aH^{k}_{\min}$. Note that to satisfy sequential scheduling, both
the key resources $\aK$ and $\aH^{k}_{\min}$ only provide the keys (or
error messages) to the players when they request them (see
\remref{rem:scheduling}, and the discussion of these resources in
Sections~\ref{sec:cc.security} and \ref{sec:min-entropy.spec}).

Putting this together, we get the real system drawn in
\figref{fig:FS.real}. (The switch on $\aK$ as well as the requests to
get keys and messages have been omitted from the picture.)

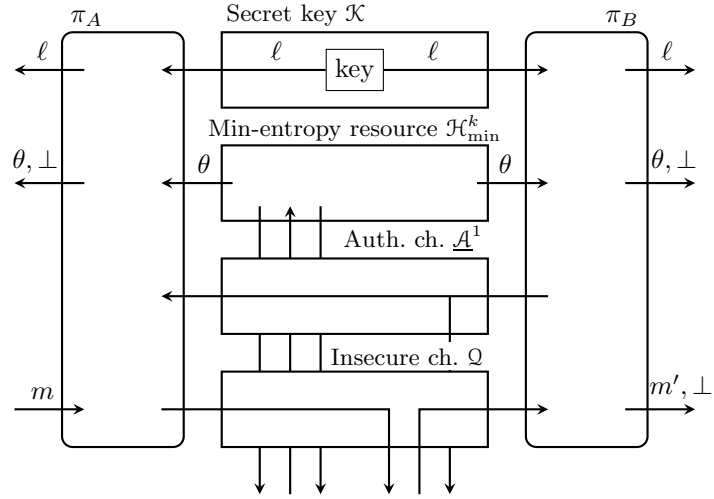
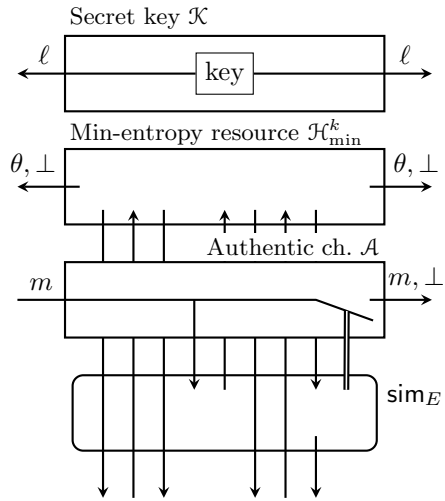
\begin{figure}[tbp]
\begin{subfigure}{\textwidth}
\centering
\begin{tikzpicture}[
resourceLong/.style={draw,thick,minimum width=3.5cm,minimum height=1cm},
resource/.style={draw,thick,minimum width=1.8cm,minimum height=1cm},
sArrow/.style={->,>=stealth,thick},
sLine/.style={-,thick},
protocol/.style={draw,thick,minimum width=1.6cm,minimum height=5.5cm,rounded corners},
pnode/.style={minimum width=1cm,minimum height=.6cm}]

\small

\def\t{4.6} % 1.75+.5+1.6+.75
\def\a{3.05} % 1.75+.5+1.6/2
\def\v{.75}
\def\w{.4}
\def\x{.85}
\def\z{3.5} %5.5/2+.75

\node[protocol] (protA) at (-\a,0) {};
\node[yshift=-1.5,above right] at (protA.north west) {$\pi_A$};
\node[pnode] (a1) at (-\a,3*\v) {};
\node[pnode] (a2) at (-\a,\v) {};
\node[pnode] (a3) at (-\a,-\v) {};
\node[pnode] (a4) at (-\a,-3*\v) {};
\node[protocol] (protB) at (\a,0) {};
\node[yshift=-1.5,above left] at (protB.north east) {$\pi_B$};
\node[pnode] (b1) at (\a,3*\v) {};
\node[pnode] (b2) at (\a,\v) {};
\node[pnode] (b3) at (\a,-\v) {};
\node[pnode] (b4) at (\a,-3*\v) {};

\node (aliceUp) at (-\t,3*\v) {};
\node (aliceMiddle) at (-\t,\v) {};
\node (aliceDown) at (-\t,-3*\v) {};
\node (bobUp) at (\t,3*\v) {};
\node (bobMiddle) at (\t,\v) {};
\node (bobDown) at (\t,-3*\v) {};
\node (eve1) at (-\x-\w,-\z) {};
\node (eve2) at (-\x,-\z) {};
\node (eve3) at (-\x+\w,-\z) {};
\node (eve4) at (\x-\w,-\z) {};
\node (eve5) at (\x,-\z) {};
\node (eve6) at (\x+\w,-\z) {};

\node[resourceLong] (key) at (0,3*\v) {};
\node[above right,inner sep=2] at (key.north west) {\footnotesize
  Secret key $\aK$};

\draw[sArrow] (a1) to node[auto,swap,pos=.6] {$\ell$} (aliceUp);
\draw[sArrow] (b1) to node[auto,pos=.6] {$\ell$} (bobUp);
\node[draw] (key) at (0,3*\v) {key};
\draw[sArrow] (key) to node[auto,swap,pos=.3] {$\ell$} (a1);
\draw[sArrow] (key) to node[auto,pos=.3] {$\ell$} (b1);

\node[resourceLong] (entropy) at (0,\v) {};
\node[yshift=-1,above,inner sep=2] at (entropy.north) {\footnotesize
  Min-entropy resource $\aH^{k}_{\min}$};

\draw[sArrow] (a2) to node[auto,swap,pos=.7] {$\theta,\bot$} (aliceMiddle);
\draw[sArrow] (b2) to node[auto,pos=.7] {$\theta,\bot$} (bobMiddle);
\node[pnode] (kLeft) at (-1.1,\v) {};
\node[pnode] (kRight) at (1.1,\v) {};
\draw[sArrow] (kLeft) to node[auto,pos=.4,swap] {$\theta$} (a2);
\draw[sArrow] (kRight) to node[auto,pos=.4] {$\theta$} (b2);
\draw[sArrow] (eve1 |- kLeft.south) to (eve1);
\draw[sArrow] (eve2) to (eve2 |- kLeft.south);
\draw[sArrow] (eve3 |- kLeft.south) to (eve3);

\node[resourceLong,fill=white] (auth) at (0,-\v) {};
\node[above left,inner sep=2] at (auth.north east) {\footnotesize
  Auth.\ ch.\ $\underline{\aA}^1$};

\draw[sArrow] (b3) to (a3);
\draw[sArrow] (eve6 |- b3) to (eve6);

\node[resourceLong,fill=white] (ch) at (0,-3*\v) {};
\node[above left,fill=white,inner sep=2] at (ch.north east) {\footnotesize
  Insecure ch.\ $\aQ$};

\draw[sArrow] (aliceDown) to node[auto,pos=.4] {$m$} (a4);
\draw[sArrow] (a4) to (eve4 |- aliceDown) to (eve4);
\draw[sArrow] (eve5) to (eve5 |- bobDown) to (b4);
\draw[sArrow] (b4) to node[auto,pos=.8] {$m',\bot$} (bobDown);

\end{tikzpicture}
\caption[Real system]{\label{fig:FS.real}The real system consists
  of the resources consumed and the honest players converters
  $(\pi_A,\pi_B)$ that run the protocol.}
\end{subfigure}

\vspace{9pt}

\begin{subfigure}{\textwidth}
\centering
\begin{tikzpicture}[
resourceLong/.style={draw,thick,minimum width=4.2cm,minimum height=1cm},
resource/.style={draw,thick,minimum width=1.8cm,minimum height=1cm},
sArrow/.style={->,>=stealth,thick},
sLine/.style={-,thick},
simulator/.style={draw,thick,minimum width=4cm,minimum height=1cm,rounded corners},
snode/.style={minimum width=3.8cm,minimum height=.6cm}]

\small

\def\ta{2.85} % 4.2/2+.75
\def\tb{2.85} % 4.2/2+.75
\def\v{.75}
\def\w{.4}
\def\x{1.2}
\def\z{3.5} %5.5/2+.75

\node (aliceUp) at (-\ta,3*\v) {};
\node (aliceMiddle) at (-\ta,\v) {};
\node (aliceDown) at (-\ta,-\v) {};
\node (bobUp) at (\tb,3*\v) {};
\node (bobMiddle) at (\tb,\v) {};
\node (bobDown) at (\tb,-\v) {};
\node (eve1) at (-\x-\w,-\z) {};
\node (eve2) at (-\x,-\z) {};
\node (eve3) at (-\x+\w,-\z) {};
\node (eve4) at (-\w,-\z) {};
\node (eve5) at (0,-\z) {};
\node (eve6) at (\w,-\z) {};
\node (eve7) at (\x-\w,-\z) {};
\node (eve8) at (\x,-\z) {};
\node (eve9) at (\x+\w,-\z) {};

\node[simulator] (sim) at (0,-3*\v) {};
\node[below right] at (sim.north east) {$\simE$};
\node[snode] (innerS) at (0,-3*\v) {};

\node[resourceLong] (key) at (0,3*\v) {};
\node[above right,inner sep=2] at (key.north west) {\footnotesize
  Secret key $\aK$};
\node[draw] (key) at (0,3*\v) {key};

\draw[sArrow] (key) to node[auto,swap,pos=.85] {$\ell$} (aliceUp);
\draw[sArrow] (key) to node[auto,pos=.85] {$\ell$} (bobUp);

\node[resourceLong] (entropy) at (0,\v) {};
\node[inner sep=2,yshift=-1,above right] at (entropy.north west) {\footnotesize
  Min-entropy resource $\aH^{k}_{\min}$};
\node[snode] (innerR) at (0,\v) {};

\draw[sArrow] (innerR) to node[auto,swap,pos=.75] {$\theta,\bot$} (aliceMiddle);
\draw[sArrow] (innerR) to node[auto,pos=.75] {$\theta,\bot$} (bobMiddle);
\draw[sArrow] (eve1 |- innerR.south) to (eve1);
\draw[sArrow] (eve2) to (eve2 |- innerR.south);
\draw[sArrow] (eve3 |- innerR.south) to (eve3);
\draw[sArrow] (eve5 |- innerS.north) to (eve5 |- innerR.south);
\draw[sArrow] (eve6 |- innerR.south) to (eve6);
\draw[sArrow] (eve7) to (eve7 |- innerR.south);
\draw[sArrow] (eve8 |- innerR.south) to (eve8 |- innerS.north);
\draw[sArrow] (eve8 |- innerS.south) to (eve8);

\node[resourceLong,fill=white] (auth) at (0,-\v) {};
\node[above left,fill=white,inner sep=2] at (auth.north east) {\footnotesize
  Authentic ch.\ $\aA$};

\draw[sLine] (aliceDown) to node[auto,pos=.08] {$m$} (eve8 |-
aliceDown) to node[pos=.54] (junc) {} +(340:2*\w);
\draw[sLine,double] (eve9 |- innerS.north) to (junc.center);
\draw[sArrow] (innerS.east |- bobDown) to node[auto,pos=.75] {$m,\bot$} (bobDown);
\draw[sArrow] (eve4 |- aliceDown) to (eve4 |- innerS.north);

\end{tikzpicture}
\caption[Ideal system]{\label{fig:FS.ideal}The ideal system consists
  of the constructed resources and the simulator $\simE$.}
\end{subfigure}

\caption[The modified FS construction]{\label{fig:FS}The real and ideal
  systems of the modified FS protocol in the case where Eve is
  present.}
\end{figure}

\subsubsection{Ideal system}
\label{sec:FS.construction.ideal}

The goal of the protocol is to construct an authentic channel. But the
channel $\underline{\aA}$ depicted in \figref{fig:resource.auth.ideal}
always delivers the message, and this cannot be achieved, since Eve
can jumble the communication if desired. What is achieved is a
slightly weaker resource $\aA$, which allows Eve to choose if Bob gets
the message or not \--- here too, the receiver must request the
message for it to be output, it is not spontaneously output when Eve
allows it. This is illustrated in \figref{fig:resource.authentic}
without the request arrow for the message to be delivered.

The protocol also recycles keys, which is modeled as key resources in
the ideal system. All the uniform key $(\ell_\sss,\ell_\mac)$ is
recycled, so the ideal system must have a resource $\aK$ as well,
which provides the players with fresh uniform keys (independent of the
messages and ciphers obtained by Eve).

As we show in \thmref{thm:FS} in \secref{sec:FS.analysis}, the
subnormalized key with distribution
$P_{\Theta'}(\theta) \coloneqq \Pr \left[\Theta = \theta \text{ and
    ``accept''}\right]$
that is output by players the if the message is accepted has the same
min\-/entropy as the original normalized key $\Theta$. So the recycled
key in the ideal system is captured by a specification $\aH^k_{\min}$
with the same entropy bound $k$. Note that the renormalized key with
distribution
$P_{\Theta'}(\theta) \coloneqq \Pr \left[\Theta = \theta
  \middle|\text{``accept''}\right]$
may have much less entropy, so there is effectively an entropy loss
during the protocol. But the relevant measure of entropy for future
uses of the key is that of the subnormalized state \--- since the
resource consumed, $\aH^k_{\min}$, measures the entropy of the
subnormalized state.

Putting this together along with a simulator $\simE$ that is needed
for the security proof, we get the ideal system drawn in
\figref{fig:FS.ideal}.

The corresponding constructive statement formulated in the AC
framework is that $\pi$ constructs $\aK \| \aH^k_{\min} \| \aA$ from
$\aK \| \aH^k_{\min} \| \underline{\aA}^1 \| \aQ$ (with some error
$\eps_\adv$),
\begin{equation}
\label{eq:FS.adv}
\aK \| \aH^k_{\min} \| \underline{\aA}^1 \| \aQ \xrightarrow{\pi,\eps_\adv} \aK \|
\aH^k_{\min} \| \aA.\end{equation}
The key resources $\aK$ and $\aH^k_{\min}$ may
be seen as catalysts, since they appear both in the real and ideal
systems.

\subsubsection{Natural Noise}
\label{sec:FS.construction.noise}

In the case where Eve is not present, the players share a channel that
has natural noise instead of the insecure channel $\aQ$. More
specifically, the protocol is designed to correct $\varphi$ errors,
so we assume a channel specification that is $\eps$-close to having
$\varphi$ errors when the states sent are encoded as in the
protocol.\footnote{We model the channel as only adding noise to the
  quantum states sent, but assume that the classical part of the
  cipher is transmitted without errors. One can easily compose a noisy
  classical channel with converters that perform error correction if
  this is not the case (see \figref{fig:ecc}), i.e., construct a
  noiseless channel from a noisy one. We do not want to do this for a
  quantum channel, because quantum error correcting codes are not
  prepare\-/and\-/measure.} We denote this channel specification by
$\aR^\eps_\varphi$, and illustrate this in
\figref{fig:resource.noisy}.

Furthermore, in this case we wish to make a stronger statement, namely
that the message is delivered for sure, i.e., a channel of the type
$\underline{\aA}$ from \figref{fig:resource.auth.ideal} is
constructed. But to achieve this, it is not sufficient that the
communication is not jumble by Eve, we also need to be sure that the
players always get keys. We denote by $\underline{\aK}$ and
$\underline{\aH}^k_{\min}$ resources which always provide keys to the
players. In this case, the recycled keys are of the same type.

The corresponding constructive statement is that $\pi$ constructs
$\underline{\aK} \| \underline{\aH}^k_{\min} \| \underline{\aA}$ from
$\underline{\aK} \| \underline{\aH}^k_{\min} \| \underline{\aA}^1 \|
\aR^\eps_\varphi$ (with some error $\eps_\noise$),
\begin{equation}
\label{eq:FS.noise}
\underline{\aK} \| \underline{\aH}^k_{\min} \| \underline{\aA}^1 \| \aR^\eps_\varphi
\xrightarrow{\pi,\eps_\noise} \underline{\aK} \| \underline{\aH}^k_{\min} \| \underline{\aA}.\end{equation}

\subsection{Protocol}
\label{sec:FS.protocol}

The Fehr-Salvail~\cite{FS17a} protocol requires two different (keyed)
hash functions, which we denote
$\sss : \{0,1\}^{n_\sss} \times \{0,1\}^{r_\sss} \to \{0,1\}^{m_\sss}$
and
$\mac : \{0,1\}^{n_\mac} \times \{0,1\}^{r_\mac} \to
\{0,1\}^{m_\mac}$.
Both of these functions are constructed from extractors\footnote{The
  definition of an extractor is given in \defref{def:extractor}.}
$\Ext_h : \{0,1\}^{n_h} \times \{0,1\}^{r_h - m_h} \to \{0,1\}^{m_h}$,
for $h \in \{\sss,\mac\}$. The hash functions are defined as
$h(x,\ell_1 \| \ell_2) \coloneqq \Ext_h(x,\ell_1) \xor \ell_2$.

Note that not any extractor will do. They have to be quantum\-/proof
$(k,\eps)$\-/strong extractors for subnormalized states for any value
$k$ and an error $\eps = \frac{\nu_h}{2}\sqrt{2^{-k+m_h}}$, where
$\nu_h$ are parameters specific to each extractor. Universal
hashing~\cite{RK05}, almost universal hashing~\cite{TSSR11}, dual
universal hashing~\cite{HT16}, as well as $\delta$\-/biased
masking~\cite{FS08} all satisfy this requirement (for parameters
$\nu_h$ which depend on the specific constructions).  To obtain
composable security, we additionally require that these extractors be
linear functions in the first input~\cite{FS17b}, which is satisfied
by all the constructions cited above.

The reason for this specific form of extractor is that the security
proof requires the following property.
\begin{deff}[$\nu$\-/key\-/privacy~\cite{FS17a}]
\label{def:key-privacy}
A function $h : \cX \times \cL \to \cT$ offers $\nu$\-/key\-/privacy
if for any subnormalized state $\rho_{LXTE} \in \sno{LXTE}$ with the
properties $\rho_{LX} = \tau_L \tensor \rho_X$, $T = h(X,L)$, and
$L \leftrightarrow XT \leftrightarrow E$ forms a Markov chain, it
holds that
\[\trnorm{\rho_{LTE} - \tau_L \tensor \rho_{TE}} \leq
\nu \sqrt{2^{-\Hmin{X|TE} + m}},\]
where $\tau_L$ is the fully mixed state and $\cT = \{0,1\}^m$.
\end{deff}
Fehr and Salvail~\cite{FS17a} prove that if $\Ext_h$ is an extractor of
the type given above, then the corresponding function $h$ provides
$\nu_h$\-/key\-/privacy (see \lemref{lem:key-privacy}).

Note that a function like $h$ obtained by XORing a uniform
key always produces a uniform output.
\begin{deff}[Uniformity~\cite{FS17a}]
\label{def:uniformity}
We say that a function $h : \cX \times \cL \to \cT$ is \emph{uniform}
if for a uniform $L$ and any $x \in \cX$, $T = h(x,L)$ is uniformly
random on $\cT$.
\end{deff}

% The first of these extractors is used to extract a uniform string that
% one-time pads the message in the protocol. We denote by $f : \cX_f
% \times \cL_f \to \cT_f$ the corresponding function defined as
% $f(x,\ell \| t) \coloneqq \Ext_1(x,\ell) \xor t$, and its
% parameter by $\nu_f$. Any extractor with $\eps =
% \frac{\nu_f}{2}\sqrt{2^{-k+m_f}}$ can be used for this function.

The first keyed hash function, (a \emph{secure sketch}) $\sss$, is
used for error correction. Apart from the underlying extractor having
the required properties, we additionally need that for a uniformly
chosen $\ell \in \{0,1\}^{r_{\sss}}$ and any
$x,x' \in \{0,1\}^{n_\sss}$ with Hamming distance
$w(x,x') \leq \varphi n_\sss$ for some fixed $\varphi$, $x$ may be
recovered from $(x',\sss(x,\ell),\ell)$ except with probability
$\eps_\sss$. This can be implemented using universal
hashing~\cite{Ren05}, but the recovery operation is not known to be
efficient. Fehr and Salvail propose another construction based in
$\delta$\-/biased masking~\cite{FS08}, which has an efficient recovery
operation, but has less noise tolerance $\varphi$ (see \cite{FS17a}
for more details).

The second keyed hash function, $\mac$, is used to detect if the adversary tampered with
the message. The additional requirement is that
$\{\mac(\cdot,\ell)\}_\ell$ must be a family of
\emph{$\eps_\mac$\-/strongly universal} hash functions~\cite{WC81,Sti94},
i.e., for any $x_1,x_2 \in \{0,1\}^{n_\mac}$ and $t_1,t_2 \in \{0,1\}^{m_\mac}$ with
$x_1 \neq x_2$,
\[ \Pr_\ell \left[ \mac(x_1,\ell) = t_1 \text{ and } \mac(x_2,\ell) =
  t_2 \right] \leq \frac{\eps_\mac}{2^{m_\mac}}.\]
Constructions for such hash functions are given in \cite{WC81,Sti94},
e.g., $\mac(x,y \| b) = \phi(x*y) + b$ where $\phi$ is any linear
surjective function and $*$ and $+$ are multiplication and addition in
the corresponding finite fields. In this example,
$\Ext_\mac(x,y) = \phi(x*y)$ is a universal hash function, so $\mac$
has the required extractor properties. We denote by $\nu_\mac$ the
corresponding extractor parameter. The probability that tampering is
not detected when this function is used as a standard message
authentication code (MAC) \--- i.e., when a tag $t = \mac(x,\ell)$ is
appended to a message $x$, and upon receiving $(x',t')$, the receiver
checks if $t' = \mac(x',\ell)$~\cite{WC81,Sti94} \--- is
$\eps_\mac$. In the case of the example given, $\nu_\mac = 1$ and
$\eps_\mac = \frac{1}{2^{m_\mac}}$.

In the following we denote the length of the message to be
authenticated by $m$ and the length of the first input to $\sss$ by
$n = n_\sss$. The function $\mac$ is always used with
$n_\mac = n+m+m_\sss$.

The final ingredient needed is a code $\cC \subset \{0,1\}^n$ with
minimum distance $d$ between any two code words.

\vspace{\baselineskip}

The modified Fehr-Salvail protocol works as follows (see
\remref{rem:FS} here below for a description of the differences with
the original protocol).

\paragraph{Encryption.} The players, Alice and Bob, share keys
$\ell_\sss \in \{0,1\}^{r_\sss}$, $\ell_\mac \in \{0,1\}^{r_\mac}$,
and $\theta \in \cC$.  $\ell_\sss$ and $\ell_\mac$ have to be
uniform. $\theta$ has bounded min\-/entropy, i.e., it should be
generated by a min\-/entropy resource with parameter $k$. To
authenticate a message $y \in \{0,1\}^m$, Alice picks a string
$x \in \{0,1\}^n$ uniformly at random, and generates the $n$ qubit
quantum state $H^\theta \ket{x}$, where
$H^\theta = \bigotimes_i H^{\theta_i}$ and $H$ is the Hadamard
matrix. She then computes the values $s = \sss(x,\ell_\sss)$ and
$t = \mac(x \| y \| s, \ell_\mac)$. Finally, she sends the cipher consisting
of $y\|s\|t$ and $\rho = H^\theta \proj{x} H^\theta$ to Bob.

Note that it follows from \lemref{lem:key-privacy.sequarallel} that
the function
$\chi_y : \{0,1\}^n \times \{0,1\}^{r_\sss+r_\mac} \to
\{0,1\}^{m_\sss+m_\mac}$
with $\chi_y(x,\ell_g \| \ell_h) \coloneqq s \| t$ provides
$(\nu_g+\nu_h)$\-/key\-/privacy and uniformity.

\paragraph{Decryption.} Upon receiving $y'\|s'\|t'$ and $\rho'$ from
Alice, Bob first computes $H^\theta \rho' H^\theta$ and measures in
the computational basis, obtaining $\tilde{x}$. He then uses the
recovery procedure to get $x'$ from $(\tilde{x}, s', \ell_\sss)$. He
checks if $w(x',\tilde{x}) \leq \varphi n$ and if
$t' = h(x' \| y' \| s', \ell_\mac)$. If one of these conditions does not
hold, he rejects the message. Otherwise, he accepts $y'$ as the
message that Alice sent.

\paragraph{Key recycling.} If the message was rejected, Bob recycles
$\ell_\sss$ and $\ell_\mac$. If the message was accepted, he recycles
$\ell_\sss$ and $\ell_\mac$, as well as $\theta$. He sends one bit of
information on a backward authentic channel to Alice to tell her if he
accepted or rejected, and she recycles the same keys.

\vspace{\baselineskip}

\begin{rem}
\label{rem:FS}
The protocol described here differs in two ways from the original
Fehr-Salvail protocol~\cite{FS17a}. Firstly, the extractors have to be
linear in their first input, which was introduced in the extended
version~\cite{FS17b} following an initial draft of the current work
pointing out the issue with impersonation attacks. Secondly, we do not
require the players to have fresh key to replace the discarded
$\theta$ in case of a reject, our analysis goes through without
this. Hence, in this version of the protocol, the players can run the
authentication scheme even if they only have the keys
$(\ell_\sss,\ell_\mac,\theta)$, but no access to a key refreshing
function as in \cite{FS17a,FS17b} that generates a new $\theta'$.
\end{rem}

\subsection{Analysis}
\label{sec:FS.analysis}

Our task now is to prove that \eqnsref{eq:FS.adv} and
\eqref{eq:FS.noise} hold and to bound the corresponding errors. We
get similar errors to the original, non\-/composable
proof~\cite{FS17a,FS17b}, but with minor improvements in the
constants. This is stated in the following theorem.

\begin{thm}
\label{thm:FS}
Let $\aK$, $\underline{\aK}$, $\aH^k_{\min} $,
$\underline{\aH}^k_{\min}$, $\underline{\aA}^1$, $\aQ$,
$\aR^\eps_\varphi$, $\aA$, and $\underline{\aA}$, be resource
specification as described above and let $\pi = (\pi_A,\pi_B)$ be the
converters running the modified Fehr-Salvail protocol from
\secref{sec:FS.protocol}. Then
\begin{align*}
\underline{\aK} \| \underline{\aH}^k_{\min} \| \underline{\aA}^1 \| \aR^\eps_\varphi
& \xrightarrow{\pi,\eps_{\noise}} \underline{\aK} \| \underline{\aH}^k_{\min} \| \underline{\aA} \\
\intertext{and}
\aK \| \aH^k_{\min} \| \underline{\aA}^1 \| \aQ &  \xrightarrow{\pi,\eps_{\adv}} \aK \|
\aH^k_{\min} \| \aA,
\end{align*}
with $\eps_{\noise} = \eps + \eps_\sss$
and \begin{equation} \label{eq:thm.error}
\eps_{\adv} = \eps_\mac + 
(\nu_\sss+\nu_\mac) \sqrt{\left(2+\frac{|\cC|}{2^{d/2}} +
    \frac{|\cC|2^{h(\varphi)n}}{2^{d}}\right) 2^{m_\sss+m_\mac-k}}.
\end{equation}
\end{thm}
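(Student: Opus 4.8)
The plan is to verify \defref{def:security} for each of the two constructions; by the remark following that definition it suffices, for every atomic resource in the real specification, to exhibit a simulator and an atomic ideal resource that are $\eps$-close in the distinguishing metric. Since $\aK$ and $\aH^k_{\min}$ occur on both sides of each statement as catalysts, the substance is to show two things: (i)~the constructed channel is authentic, i.e.\ Bob either outputs Alice's message or an error, and never accepts a forged message except with small probability; and (ii)~the three keys $(\ell_\sss,\ell_\mac,\theta)$ can be recycled, i.e.\ conditioned on acceptance the joint state of the recycled keys and Eve's view is close to a fresh $\aK \| \aH^k_{\min}$ that is independent of Eve.

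I would dispose of the noise statement first, as it requires no nontrivial simulator. With no adversary present, Alice's cipher traverses $\aR^\eps_\varphi$, and except with probability $\eps$ (the channel is $\eps$-close to inserting exactly $\varphi$ errors) Bob's measurement yields an $\tilde x$ with $w(x,\tilde x)\le\varphi n$; the secure sketch then recovers $x'=x$ except with probability $\eps_\sss$. When $x'=x$ the $\mac$-check passes, Bob accepts the correct message, and both interfaces recycle identical keys, so the real system is $(\eps+\eps_\sss)$-close to $\underline{\aK}\|\underline{\aH}^k_{\min}\|\underline{\aA}$, giving $\eps_\noise=\eps+\eps_\sss$.

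For the adversarial statement I would build the simulator $\simE$ of \figref{fig:FS.ideal} as follows: on receiving the copy of $m=y$ from the ideal channel $\aA$, it hands Eve a cipher whose quantum part is the maximally mixed state $I/2^n$ (which equals $\sum_x 2^{-n}H^\theta\proj{x}H^\theta$ for every $\theta$ and is therefore independent of the unknown key) and whose classical part $(s,t)$ is sampled uniformly, as justified by the uniformity of $\chi_y$ (\defref{def:uniformity}); it then reads Eve's modified cipher, emulates Bob's decryption to set the delivery switch of $\aA$, and produces the Eve-side transcripts of the recycled $\aK$ and $\aH^k_{\min}$. The distinguishing advantage splits into an authentication part and a recycling part. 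For the first, making Bob accept a message $y'\neq y$ forces Eve to supply a valid tag on a new input $x'\|y'\|s'$ while $x$ stays hidden inside $H^\theta\ket{x}$; the $\eps_\mac$-strong universality of $\{\mac(\cdot,\ell_\mac)\}$ bounds this event by $\eps_\mac$, the first term of \eqref{eq:thm.error}.

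The recycling part is where the second term of \eqref{eq:thm.error} arises and is the main obstacle. After checking the hypotheses of key-privacy ($\ell_\sss,\ell_\mac$ are uniform and independent of $x$, and Eve's view depends on them only through the tag $T=(s,t)$, so $L\leftrightarrow XT\leftrightarrow E$), I would apply $\nu$-key-privacy (\defref{def:key-privacy}) to $\chi_y$, which by \lemref{lem:key-privacy.sequarallel} offers $(\nu_\sss+\nu_\mac)$-key-privacy, obtaining a decoupling error of $(\nu_\sss+\nu_\mac)\sqrt{2^{-\Hmin{X|TE}+m_\sss+m_\mac}}$; once $(\ell_\sss,\ell_\mac)$ are decoupled, the min-entropy key $\theta$ retains its min-entropy $k$ conditioned on acceptance because the averaged cipher $I/2^n$ carries no information about $\theta$. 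The delicate step is the lower bound on $\Hmin{X|TE}$, i.e.\ an upper bound on Eve's probability of guessing $x$ from the quantum cipher together with all classical side information. Since $\theta$ is drawn from the code $\cC$ with only $k$ bits of min-entropy, Eve's optimal guess must be consistent with one of the $|\cC|$ encoding bases and, because Bob tolerates up to $\varphi n$ errors, with one of the $\approx 2^{h(\varphi)n}$ error patterns in the recovery radius; bounding the resulting cross terms by the minimum distance $d$ of $\cC$ (distinct codewords yield near-orthogonal encoded states, with overlaps controlled by $2^{-d/2}$ and $2^{-d}$) yields $2^{-\Hmin{X|TE}}\le\left(2+\frac{|\cC|}{2^{d/2}}+\frac{|\cC|2^{h(\varphi)n}}{2^{d}}\right)2^{-k}$. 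Substituting this into the key-privacy bound and adding the $\eps_\mac$ authentication term produces exactly $\eps_\adv$ of \eqref{eq:thm.error}.
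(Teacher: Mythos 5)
Your noise-case argument matches the paper's (no simulator needed, error $\eps+\eps_\sss$), and your substitution-attack outline follows the same route as the paper (an $\eps_\mac$ term from strong universality, then key-privacy of $\chi_y$ via \lemref{lem:key-privacy.sequarallel}, then a guessing bound producing the $2+\frac{|\cC|}{2^{d/2}}+\frac{|\cC|2^{h(\varphi)n}}{2^{d}}$ factor). However, there are two genuine gaps. First, you only cover substitution attacks: your simulator acts ``on receiving the copy of $m=y$ from the ideal channel,'' i.e., only in runs where Alice encrypts first. But $\aQ$ also admits \emph{impersonation} attacks, in which the distinguisher inputs a forged cipher to Bob before giving Alice any message, observes Bob's decision and the recycled keys, and only then chooses a message $y$ correlated with that information for Alice to encrypt. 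This case is not a technicality; it is the reason the protocol was modified (\remref{rem:FS}) and the reason the extractors must be \emph{linear} in their first input. The paper treats it as a separate branch of the proof: acceptance of the forged cipher is bounded by $\eps_\mac$, and Alice's subsequently produced cipher is decoupled from $(\ell_\sss,\ell_\mac)$ by writing $u = s\|t$ as an affine function of $A_{\ell_\sss}x \| A^1_{\ell_\mac}x$ and invoking \lemref{lem:tech.extractors.composition} together with \lemref{lem:guessing.one}. Your argument never uses linearity and never considers this ordering of events, so the adversarial construction with error \eqnref{eq:thm.error} is not established.

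Second, your simulator is not admissible as described: to ``emulate Bob's decryption'' it would need $\theta$ (to choose the measurement bases) and the encoded string $x$ (for the proximity check), and it has neither, since it handed Eve an unpurified maximally mixed state; nor can it be given $\theta$, because in the ideal system $\theta$ is the output of the \emph{constructed} resource $\aH^k_{\min}$, and any atomic resource that releases $\theta$ at its $E$-interface violates the min-entropy specification (a trivial $\sS$ that reads it yields $\Hmin{\Theta|E}=0$). The paper resolves this structurally: the ideal atomic resource $\sH'$ itself runs $\sH$ internally, generates the cipher with internally chosen fresh hash keys and uniform $X$, performs Bob's verification on the returned cipher, and decides whether to release $\theta$, while $\simE$ merely routes messages and sets the switches of $\aA$ and $\aK$. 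One must then \emph{prove} that this $\sH'$ belongs to $\aH^{k}_{\min}$, via \lemref{lem:tech.min-entropy-inequalities.event}, the uniformity of $\chi_y$ (making $Z$ independent), and the fact that $Q$ is maximally mixed and independent of $\Theta$; your one-sentence version of this step is attached to a simulator architecture in which it cannot be carried out. Two smaller omissions in the same part: the message $y$ may be chosen correlated with Eve's side information about $\theta$, which the paper handles by fixing $y$ through a measurement of $E$ and applying Jensen's inequality when averaging at the end; and the guessing bound requires a case split between the reject branch (where $\Theta$ is never released, \lemref{lem:guessing.one}) and the accept branch (where Eve additionally holds $\Theta$, \lemref{lem:guessing.two}), which your sketch conflates into a single bound on $\Hmin{X|TE}$.
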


\begin{proof}
The formal statements we will prove are that for any $\sK \in \underline{\aK}$,
$\sH \in \underline{\aH}^k_{\min}$, $\sA^1 \in \underline{\aA}^1$, and $\sR \in
\aR^\eps_\varphi$, there exists a $\sK' \in \aK$, $\sH' \in
\underline{\aH}^k_{\min}$, $\sA \in \underline{\aA}$, and a simulator
$\simE$ such that
\begin{equation} \label{eq:FS.noise.thm}
\pi_B \pi_A \left(\sK \| \sH \| \sA^1 \| \sR \right) \close{\eps_{\noise}}
\left(\sK' \| \sH' \| \sA \right) \simE;
\end{equation}
and that for any $\sK \in \aK$,
$\sH \in \aH^k_{\min}$, $\sA^1 \in \underline{\aA}^1$, and $\sQ \in
\aQ$, there exists a $\sK' \in \aK$, $\sH' \in
\aH^k_{\min}$, $\sA \in \aA$, and a simulator
$\simE$ such that
\begin{equation}  \label{eq:FS.adv.thm}
\pi_B \pi_A \left(\sK \| \sH \| \sA^1 \| \sQ \right) \close{\eps_{\adv}}
\left(\sK' \| \sH' \| \sA \right) \simE.
\end{equation}

We start with the case of \eqnref{eq:FS.noise.thm}. Since for any
resource $\aR$, $\aR^\eps \xrightarrow{\id,\eps} \aR$, it follows from
\thmref{thm:security} that it is sufficient to consider a
noisy channel specification $\aR_\varphi$ and add an error $\eps$ to
the final statement.

The distinguisher could either first provide Alice with a message $y$
to be encrypted, or first interact with the key resources so that they
generate keys $\ell,\theta$. The second case is a more powerful
distinguisher, since it can choose a message correlated to $\theta$,
due to the side information it has about $\theta$. So w.l.o.g.\ we
consider only this second case, i.e., the distinguisher first
interacts with $\underline{\aH^k_{\min}}$ so that $\theta$ is
generated, and notifies $\underline{\aK}$ to generate the key
$\ell = (\ell_\sss,\ell_\mac)$. It then provides a message $y$ to
Alice, who prepares the cipher, and sends it on the noisy channel
$\aR_\varphi$. Due to the error correction properties of the function
$\sss$, Bob can reconstruct the correct $x$ except with probability
$\eps_\sss$. If he reconstructs the correct $x$, then Alice's message is
always accepted and $\theta$ is recycled.

For every $\sH \in \underline{\aH^k_{\min}}$ we construct a $\sH'$
that behaves identically to $\sH$, except that once $\theta$ has been
generated, $\sH'$ does not accept to output it at the players' interface
if requested, but waits to get a notification at Eve's interface that
it can be output. The simulator $\simE$ allows the distinguisher to
interact directly with $\sH'$, except for the last message. It also
blocks the notification to $\sK'$ to generate the key. Once $\theta$
has been generated (but not output), the activation of $\sK'$ received
from the distinguisher (but not delivered), and the simulator has
received the notification that the message has been received and
delivered by $\sA$, it generates the one bit message for the
transcript of the backwards authentic channel and notifies $\sH'$ and
$\sK'$ that their keys can now be output if requested.

These real and ideal systems behave identically, except if Bob fails
to correctly reconstruct $x$ in the real system. So the final error is
$\eps_{\noise} = \eps + \eps_\sss$.

In the case of \eqnref{eq:FS.adv.thm}, we may also assume w.l.o.g.\
that the distinguisher first interacts with $\aH^k_{\min}$ and
notifies $\aK$ to generate the keys. But it then has two options. It
first provides Alice with a message $y$, intercepts and possibly
changes the cipher, and finally obtains Bob's outcome as well as the
keys \--- a substitution attack. Or, the distinguisher may first input
a cipher on the insecure channel $\aQ$, obtain Bob's output as well as
the recycled keys, then choose a message $y$ that it inputs at Alice's
interface, and finally gets her cipher \--- an impersonation attack.

We start analyzing the substitution attack. This case follows closely
the security proof from \cite{FS17a}. Let $\rho_{\Theta E}$ be the
subnormalized state of the key $\theta$ and Eve's side information
after interacting with $\sH$ when a key $\theta$ is successfully
generated. Eve will now measure the $E$ system to choose her message
$y$, resulting in a new state
$\sigma_{Y\Theta E} = \sum_y p_y \proj{y,\theta} \tensor
\sigma^{y,\theta}_{E}$,
where we have normalized the states so that
$\tr \sigma^y_{\Theta E} = \tr \rho_{\Theta E}$. By definition we have
$\Hmin[\sigma]{\Theta | EY} \geq k$. Note that one also has
\begin{multline}
\label{eq:thm.hmin}
2^{-\Hmin[\sigma]{\Theta|EY}} = \pguess[\sigma]{\Theta|EY} = \\ \sum_y p_y
\pguess[\sigma^y]{\Theta|E} = \sum_y p_y
2^{-\Hmin[\sigma^y]{\Theta|E}}.
\end{multline}
From now one, we take $y$ to be fixed and $\sigma^y_{\Theta E}$ to be
the shared state. At the end of the proof we average over the
different $y$ weighted by $p_y$.

Alice runs her authentication protocol, after which the shared state
between Alice and the distinguisher is given by
$\sigma^y_{L\Theta XZQE}$, where $Z$ contains the classical part of
the cipher, $Q$ contains the quantum part, $X$ is the uniform string
that Alice generated and encoded in $Q$, and $L$ contains the uniform
keys $\ell_\sss \| \ell_\mac$. The $ZQ$ systems are intercepted by the
distinguisher, who applies a map $\cE : \lo{ZQE} \to \lo{ZZ'QE}$ which
leaves $Z$ unmodified and generates a new $Z'$. Let
$\mu^y_{L\Theta XZZ'QE}$ be the resulting state. $Z'Q$ is now sent to
Bob, who measures $Q$ to get $\tilde{X}$, decodes it with
$(S',\ell_\sss)$ to get $X'$, and checks whether
$w(X',\tilde{X}) \leq \varphi n$ and $T' = h(Y'\|S'\|X',\ell_\mac)$,
where $Z' = Y'\|S'\|T'$. He finally sends a bit $D$ to Alice
containing his decision. If $D = 1$ he outputs the received message
$Y'$ when requested, and outputs the keys $L$ and $\Theta$ when
requested. If $D = 0$, he outputs an error $\bot$ and only recycles
$L$. Let the final state be $\rho^y_{DL\Theta XX'\tilde{X}ZZ'E}$.

In the ideal case, we define $\sH'$ and $\simE$ to work as
follows. $\sH'$ first runs $\sH$, so that exactly the same key
$\Theta$ and side information $E$ are generated while interacting with
the distinguisher \--- the simulator $\simE$ lets all these messages
between $\sH'$ and the distinguisher go through. Once the
distinguisher has input the message $y$ in the ideal authentic
channel, the simulator gets a copy, which it forwards to
$\sH'$. $\sH'$ then picks its own keys $(\ell'_\sss, \ell'_\mac)$
uniformly at random and a uniform $X$, and generates a cipher $ZQ$. It
then outputs $ZQ$ at Eve's interface, which the simulator $\simE$
passes on to the distinguisher. Let the state shared between the
different systems at this point be $\tilde{\sigma}^y_{L\Theta XZQE}$,
where $L$ represents the new keys that are to be output by $\aK$ (not
those generated internally by $\sH'$). The distinguisher performs the
same map, $\cE$, as when interacting with the real system, resulting
in a state $\tilde{\mu}^y_{L\Theta XZZ'QE}$, and gives $Z'Q$ to the
simulator $\simE$ who forwards them to $\sH'$. $\sH'$ now measures $Q$
just as Bob would do, gets $\tilde{X}$. It then reconstructs $X'$,
checks that $X' = X$, that $w(X',\tilde{X}) \leq \varphi n$ and that
$Z = Z'$. If one of these three conditions is not satisfied, it
outputs a bit $D = 0$, otherwise $D =1$, which the simulator
intercepts. If $D = 0$, $\simE$ tells the authentic channel to output
an error symbol at Bob's interface when requested, if $D = 1$ the
message $y$ is output instead. Regardless of the value of $D$, $\simE$
outputs $D$ as the bit on the backward authentic channel if requested,
and also notifies $\aK$ to output a random key
$\ell = \ell_\sss \| \ell_\mac$ when requested. And if $D = 0$, it
tells $\sH'$ to output $\bot$ when requested, otherwise $\theta$. Let
the final state be $\tilde{\rho}^y_{DL\Theta XX'\tilde{X}ZZ'E}$.

We now need to prove two things. Firstly, that
$\sH' \in \aH^{k}_{\min}$, and secondly that the real and ideal
systems are indistinguishable except with advantage
$\eps_{\adv}$, i.e., \begin{equation} \label{eq:FS.thm.1}
  \frac{1}{2} \trnorm{\rho^{y,1}_{L\Theta ZZ'E} -
    \tilde{\rho}^{y,1}_{L\Theta ZZ'E}} \leq \eps^{y,1}_{\adv}
  \quad \text{and} \quad \frac{1}{2} \trnorm{\rho^{y,0}_{LZZ'E} -
    \tilde{\rho}^{y,0}_{LZZ'E}} \leq \eps^{y,0}_{\adv}
  ,\end{equation} where
$\rho^y_{L\Theta XX'\tilde{X}ZZ'ED} = \rho^{y,0}_{L\Theta
  XX'\tilde{X}ZZ'E} \tensor \proj{0} + \rho^{y,1}_{L\Theta
  XX'\tilde{X}ZZ'E} \tensor \proj{1}$
and
$\sum_y p_y (\eps^{y,0}_{\adv}+\eps^{y,1}_{\adv}) =
\eps_{\adv}$.

We start with $\sH' \in \aH^{k}_{\min}$. For this, it is sufficient to
show that
\[\Hmin[\tilde{\rho}^{y,1}]{\Theta|ZZ'E} \geq
\Hmin[\sigma^y]{\Theta|E},\]
where $\sigma^{y}_{\Theta E}$ is the state output by $\sH$ (after the
measurement to choose $y$), since it them follows from
\eqnref{eq:thm.hmin} that $\Theta$ has enough min\-/entropy. From
\lemref{lem:tech.min-entropy-inequalities.event} we have
\[\Hmin[\tilde{\rho}^{y,1}]{\Theta|ZZ'E} \geq
\Hmin[\tilde{\rho}^{y}]{\Theta|ZZ'E} =
\Hmin[\tilde{\mu}^{y}]{\Theta|ZZ'E}.\]
Applying a map to the side information can only increase the entropy,
hence
\[\Hmin[\tilde{\mu}^{y}]{\Theta|ZZ'E} \geq
\Hmin[\tilde{\sigma}^{y}]{\Theta|ZQE}.\]
Because $Z = y \|S\|T$ and the function which computes $S\|T$ is
uniform (see \defref{def:uniformity}), $Z$ is independent from the other
systems, hence
\[ \Hmin[\tilde{\sigma}^{y}]{\Theta|ZQE} =
\Hmin[\tilde{\sigma}^{y}]{\Theta|QE}.\]
Finally, $Q$ was generated by applying a unitary $H^\theta$ to a fully
mixed state $\frac{1}{2^n}\sum_x \proj{x}$, so $Q$ is also fully mixed
and independent of the other systems (which contain no information
about $x$), hence
\[ \Hmin[\tilde{\sigma}^{y}]{\Theta|QE} =
\Hmin[\tilde{\sigma}^{y}]{\Theta|E}.\]

We now return to \eqnref{eq:FS.thm.1}. From the properties of $\mac$, it
follows that if $D = 1$, then $Z' = Z$ and $X'=X$ except with
probability $\eps_\mac$, i.e., the state
$\bar{\rho}^{y}_{L\Theta XX'\tilde{X}Y'ZZ'ED}$ obtained by flipping
the value of $D$ from $1$ to $0$ if either $Z' \neq Z$ or $X' \neq X$
must be $\eps_\mac$-close to $\rho^{y}_{L\Theta XX'\tilde{X}ZZ'ED}$. It
now suffices to bound the distance between $\tilde{\rho}^y$ and
$\bar{\rho}^{y}$. To simplify notation, we introduce a new register
$\Theta'$ such that if $D = 0$, $\Theta' = \bot$, and if $D=1$, then
$\Theta' = \Theta$. Thus, we are now trying to prove that
\begin{equation} \label{eq:FS.thm.2}
\frac{1}{2} \trnorm{\bar{\rho}^{y}_{L\Theta' ZZ'ED} -
  \tilde{\rho}^{y}_{L\Theta' ZZ'ED}} \leq \eps^{y}_{\adv} -\eps_\mac.
\end{equation}

In the ideal case we have
$\tilde{\rho}^{y}_{L\Theta' ZZ'ED} = \tau_L \tensor
\bar{\rho}^{y}_{\Theta' ZZ'ED}$,
where $\tau_L$ is the fully mixed state, because by construction $L$
is uniform and independent of the rest, and
$\tilde{\rho}^{y}_{\Theta' ZZ'ED} = \bar{\rho}^{y}_{\Theta' ZZ'ED}$
\--- $\sH'$ runs the real protocol, the only difference is the reject
condition, but by the flip of $D$ that generated $\bar{\rho}^y$ from
$\rho^y$ makes them now abort under the same conditions. Plugging this
in \eqnref{eq:FS.thm.2}, it remains to show that
\begin{equation*} %\label{eq:FS.thm.3}
\frac{1}{2} \trnorm{\bar{\rho}^{y}_{L\Theta' ZZ'ED} -
  \tau_L \tensor
\bar{\rho}^{y}_{\Theta' ZZ'ED}} \leq \eps^{y}_{\adv} -\eps_\mac.
\end{equation*}

From \lemref{lem:key-privacy.sequarallel} it follows that the function
\[\chi_y : \{0,1\}^n \times \{0,1\}^{r_\sss+r_\mac} \to \{0,1\}^{m_\sss+m_\mac}\]
with $\chi_y(x,\ell_\sss \| \ell_\mac) \coloneqq s \| t$,
$s = \sss(x,\ell_\sss)$, and $t = \mac(x\|y\|s,\ell_\mac)$ provides
$(\nu_\sss+\nu_\mac)$\-/key\-/privacy% and uniformity
. Furthermore, the
state $\bar{\rho}^y_{LX\Theta' ZZ'ED}$ satisfies the assumptions of
\defref{def:key-privacy} (with $T = Z$). Hence we have
\begin{multline*} \frac{1}{2}\trnorm{\bar{\rho}^{y}_{L\Theta' ZZ'ED} -
    \tau_L \tensor \bar{\rho}^{y}_{\Theta' ZZ'ED}} \\ \leq
  \frac{\nu_\sss+\nu_\mac}{2}\sqrt{2^{-\Hmin[\bar{\rho}^y]{X|\Theta'
        ZZ'ED}+m_\sss+m_\mac}}.\end{multline*} It now remains to upper
bound
\[ 2^{-\Hmin[\bar{\rho}^y]{X|\Theta' ZZ'ED}} =
\pguess[\bar{\rho}^y]{X|\Theta' ZZ'ED} =
\pguess[\bar{\rho}^y]{X|\Theta' ZZ'E},\] where we have removed $D$
because this register is redundant, it can be inferred from $\Theta'$
(since $\Theta' = \bot \iff D = 0$). 

Let us define a new register $\Omega$ which takes the value
$1$ if $w(X,\tilde{X}) \leq \varphi n$ and $\Omega = 0$ otherwise. Let
$\bar{\rho}^y_{X\Theta' ZZ'E\Omega} =  \bar{\rho}^{y,0}_{X\Theta'
  ZZ'E} \tensor \proj{0} + \bar{\rho}^{y,1}_{X\Theta' ZZ'E} \tensor
\proj{1}$. We have
\begin{align*}
\pguess[\bar{\rho}^y]{X|\Theta' ZZ'E} & \leq \pguess[\bar{\rho}^y]{X|\Theta' ZZ'E\Omega} \\
& = \pguess[\bar{\rho}^{y,0}]{X|\Theta' ZZ'E} + \pguess[\bar{\rho}^{y,1}]{X|\Theta' ZZ'E} \\
& \leq \pguess[\bar{\rho}^{y,0}]{X|ZZ'E} + \pguess[\bar{\rho}^{y,1}]{X|\Theta ZZ'E}.
\end{align*}
The third line follows because adding $\Theta$ to the side information
when $\Theta' = \bot$ can only increase the probability of guessing
$X$.

We now bound these two terms separately.
\begin{align*}
\pguess[\bar{\rho}^{y,0}]{X|ZZ'E} & \leq \pguess[\bar{\rho}^{y}]{X|ZZ'E} \\
& = \pguess[\rho^{y}]{X|ZZ'E} \\
& = \pguess[\mu^{y}]{X|ZZ'E} \\
& \leq \pguess[\sigma^{y}]{X|ZQE} \\
& = \pguess[\sigma^{y}]{X|QE}.
\end{align*}
The first line follows from
\lemref{lem:tech.min-entropy-inequalities.event}. The fourth line
follows because applying a map to the side information can only
decrease the probability of guessing $X$. The fifth line follows
because $\chi_y$ is uniform so $Z$ is independent. Finally, by noting
that
$\sigma^y_{\Theta XQE} = \cM^{\BB}_{\Theta A,X} \left(
  \sigma^y_{\Theta XE} \tensor \Phi^+_{AQ} \right)$,
where $\Phi^+_{AQ}$ are EPR pairs and $\cM^{\BB}_{\Theta A,X}$
measures $A$ according to the basis in $\Theta$ and writes the result
in $X$ (see \appendixref{app:guessing} for a formal definition of
$\cM^{\BB}_{\Theta A,X}$), we can apply \lemref{lem:guessing.one} with
$B=QE$, from which we get
\[ \pguess[\sigma^{y}]{X|QE} \leq
\pguess[\sigma^{y}]{\Theta|E}\left(1+\frac{|\cC|}{2^{d/2}}\right).\]

To bound $\pguess[\bar{\rho}^{y,1}]{X|\Theta ZZ'E}$ we will use
\lemref{lem:guessing.two}. Let $P^\Omega$ be an operator which
projects the state on the event $\Omega = 1$, namely
$w(X,\tilde{X}) \leq \varphi n$. Let $\cM^{\BB}_{\Theta Q,\tilde{X}}$
be Bob's measurement of the cipher $Q$ that yields the string
$\tilde{X}$. Let $\cE : \lo{ZQE} \to \lo{ZZ'QE}$ be the map performed
by the distinguisher on the cipher. We thus have
$\bar{\rho}^{y,1}_{X\Theta ZZ'E} = {\tr_{\tilde{X}}} \circ P^{\Omega}
\circ \cM^{\BB}_{\Theta Q,\tilde{X}} \circ \cE (\sigma^Y_{X\Theta
  ZQE})$.
Note furthermore that by the uniformity of $\chi_y$, $Z = y \| S \| T$
where $S\|T$ is uniform
and independent from the other registers, and $XQ$ may be generated by
measuring halves of EPR pairs $\Phi^+_{AQ}$ according to
$\Theta$ (which commutes with $\cE$). Hence we have
\[ \bar{\rho}^{y,1}_{X\Theta ZZ'E} = \tr_{\tilde{X}} \circ P^{\Omega}
\circ \cM^{\BB}_{\Theta Q,\tilde{X}} \circ \cM^{\BB}_{\Theta A,X}
\circ \cE (\sigma^y_{\Theta E} \tensor \Phi^+_{AQ} \tensor \tau_Z).\]
This puts us in a position to apply \lemref{lem:guessing.two} with
$B = Q$ and $C=ZZ'E$, from which we get
\[ \pguess[\bar{\rho}^{y,1}]{X|\Theta ZZ'E} \leq
\pguess[\sigma^{y}]{\Theta|E}\left(1+\frac{|\cC|2^{h(\varphi)n}}{2^{d}}\right).\]

Finally, taking the average over $p_y$ along with Jensen's inequality,
we get the bound from \eqnref{eq:thm.error}.

The final case to consider is that of impersonation attacks. In the
real system, the distinguisher sends a forged cipher to Bob, who
performs the decoding. He then either accepts and recycles both
$\ell = (\ell_\sss,\ell_\mac)$ and $\theta$, or rejects and recycles
only $\ell = (\ell_\sss,\ell_\mac)$. Let $\sigma_{L\Theta ED}$ denote
the joint state at this point, where, as previously, $D$ is
Bob's decision to accept or reject the message, $L$ contains the
uniform keys $ (\ell_\sss,\ell_\mac)$, $\Theta$ is the non\-/uniform
key (which may have been given to the distinguisher if $D = 1$).

The distinguisher can then choose a message $Y$ to input at Alice's
interface for encryption. Let
$\cE : \lo{L\Theta ED} \to \lo{L\Theta YED}$ be the operator applied
by the distinguisher to generate $Y$, which we define so as to leave
the classical registers $L\Theta D$ unmodified, and furthermore, $\cE$
is only allowed to use information from $\Theta$ if $D=1$. We denote
the resulting state by
$\rho_{L\Theta YED} = \cE(\sigma_{L\Theta ED})$. Finally, Alice's
protocol generates a cipher, resulting in the joint shared state
$\rho_{L\Theta UYQED}$, where $UY$ denotes the classical part of the
cipher with $U = S \| T$ (jointly written $Z = Y \| S \| T$ in the
proof against substitution attacks), and $Q$ is the quantum part of
the cipher.

In the ideal case, if the simulator $\simE$ receives a forged cipher
at its outer interface from the distinguisher before it receives the
message $y$ from the authentic channel, it knows that we are dealing
with an impersonation attack. It then tells the authentic channel to
output an error when requested, the key resource $\sK'$ to output some
fresh uniform key $\ell = (\ell_\sss,\ell_\mac)$ when requested, and
the min\-/entropy resource $\sH'$ to output an error $\bot$ when
requested. Let $\tilde{\sigma}_{L\Theta ED}$ denote the joint state at
this point. As in the real case, the distinguisher applies the map
$\cE$ to generate a message $Y$, inputs this to the authentic channel,
which gives it to the simulator. $\simE$ now picks its own key $\ell'$
and asks $\sH'$ to give it the key $\theta$ that it generated while
running $\sH$ internally. $\simE$ then follows the protocol and
generates a cipher $YUQ$, which it outputs at its outer interface. Let
the final state shared by the different parties be
$\tilde{\rho}_{L\Theta UYQED}$.

As in the case of impersonation attacks, we need to prove two
things. Firstly, that $\sH' \in \aH^k_{\min}$, and secondly, that
\begin{equation}
\label{eq:FS.thm.3}
\frac{1}{2}\trnorm{\rho_{LUYQED} - \tilde{\rho}_{LUYQED}}
\leq \eps_{\adv}.
\end{equation}
Note that \eqnref{eq:FS.thm.3} does not contain $\Theta$. This is
because it is only ever recycled in the real case (the ideal system
cannot get fooled by an impersonation attack), but the bit $D = 1$ is
already enough to perfectly distinguisher real from ideal in this
case, one does not need $\Theta$.

Since $\sH'$ always outputs $\bot$, it trivially satisfies the
definition of a $(k,0)$\-/min\-/entropy resource.  For bounding
\eqnref{eq:FS.thm.3}, note that by the definition of $\mac$, the
probability of accepting the forged cipher is at most $\eps_\mac$. Let
$\bar{\sigma}_{L\Theta ED}$ be the state obtained by flipping $D$ from
$1$ to $0$ on $\sigma_{L\Theta ED}$ and define
$\bar{\rho}_{L\Theta UYQED}$ as the state obtained after the
distinguisher applies $\cE$ to $\bar{\sigma}_{L\Theta ED}$ and Alice
generates the cipher. Then
\[
 \frac{1}{2} \trnorm{\rho_{LUYQED} - \bar{\rho}_{LUYQED}} \leq \eps_\mac.
\]
Furthemore, in the ideal system one has $\tilde{\rho}_{LUYQED} =
\tau_{L} \tensor \tau_U \tensor \bar{\rho}_{YQED}$, where $\tau_{L}$
and $\tau_U$ are fully mixed states. This follows, because $L$ is uniform by definition,
$U$ is uniform because it is generated by the uniform function
$\chi_y$, and $\tilde{\rho}_{YQED} = \bar{\rho}_{YQED}$ because the
simulator uses the same $\theta$ and performs the same operation 
to generate $Q$. Since we always have $D=0$, this register can be
removed, and it remains to show that,
\[
 \frac{1}{2} \trnorm{\bar{\rho}_{LUYQE} - \tau_L \tensor \tau_U \tensor
   \bar{\rho}_{YQE}} \leq \eps_{\adv} - \eps_\mac.
\]

We now look more closely at how the protocol generates $U$. Since the
extractors used are linear, we have $u = s \| t$ with
$s = A_{\ell_\sss} x + b_{\ell_\sss}$ and
$t = A_{\ell_\mac} (x \| y \| s)+ b_{\ell_\mac}$, where
$A_{\ell_\sss}$ and $A_{\ell_\mac}$ are matrices, and $b_{\ell_\sss}$
and $b_{\ell_\mac}$ are strings, which depend on $\ell_\sss$ and
$\ell_\mac$. One can alternatively write this as
$u = \left( A_{\ell_\sss} x \| A^1_{\ell_\mac} x \right) + \left( 0 \|
  A^3_{\ell_\mac} A_{\ell_\sss} x\right)+ B_{\ell_\sss,\ell_\mac} y + c_{\ell_\sss,\ell_\mac}$,
where $A^1_{\ell_\mac}$ and $A^3_{\ell_\mac}$ are the first $n$
columns and last $m_\sss$ columns of $A_{\ell_\mac} =
A^1_{\ell_\mac} \| A^2_{\ell_\mac} \| A^3_{\ell_\mac}$, respectively,
and $B_{\ell_\sss,\ell_\mac}$ and $c_{\ell_\sss,\ell_\mac}$ are a
matrix and string which depend on $\ell_\sss$ and $\ell_\mac$.

Since $\Ext_\mac(x\|y\|s,\ell_\mac) = A_{\ell_\mac}(x\|y\|s)$ is a
$(k,\eps)$\-/strong extractor, then so is
$\Ext'_\mac(x,\ell_\mac) \coloneqq A^1_{\ell_\mac}x$. And from
\lemref{lem:tech.extractors.composition} and the bounds on the errors
of $\Ext_\sss$ and $\Ext_\mac$, we find that
$\Ext(x,\ell_\sss\|\ell_\mac) \coloneqq A_{\ell_\sss} x \|
A^1_{\ell_\mac} x$
is a $(k,\eps)$\-/strong extractor for any $k$ and
$\eps = \frac{\nu_\sss+\nu_\mac}{2}\sqrt{2^{-k+m_\sss+m_\mac}}$.  The
state $\bar{\rho}_{LUYQE}$ may thus be written as
$ \bar{\rho}_{LUYQE} = \cG \circ \cF \left(
  \bar{\rho}_{L\!\Ext(X,L)YQE}\right)$,
where $\cF : \lo{LU} \to \lo{LU}$ with $U=ST$ reads $L$ and $S$ and
XORs $A^3_{\ell_\mac}s$ to the $T$ register, and
$\cG : \lo{LUY} \to \lo{LUY}$ reads $L$ and $Y$ and XORs
$B_{\ell_\sss,\ell_\mac}y$ and $c_{\ell_\sss,\ell_\mac}$ to $U$. Since
$L$ is uniform and independent from $\bar{\rho}_{YQE}$, it follows
from the definition of an extractor that 
\begin{multline*}
\frac{1}{2} \trnorm{\cG \circ \cF \left(
  \bar{\rho}_{L\!\Ext(X,L)YQE}\right)- \cG \circ \cF \left(\tau_L \tensor \tau_U \tensor
   \bar{\rho}_{YQE}\right)} \\ \leq \frac{\nu_\sss+\nu_\mac}{2}\sqrt{2^{-\Hmin[\bar{\rho}]{X|YQE}+m_\sss+m_\mac}}.
\end{multline*}

To finish the proof we need that $\cF$ and $\cG$ XOR a value to a
uniform string, which results in a uniform string, hence
\[\cG \circ \cF \left(\tau_L \tensor \tau_U \tensor
  \bar{\rho}_{YQE}\right) = \tau_L \tensor \tau_U \tensor
\bar{\rho}_{YQE},\]
and we need to upper bound
$2^{-\Hmin[\bar{\rho}]{X|YQE}} = \pguess[\bar{\rho}]{X|YQE}$. This
latter bound is obtained from \lemref{lem:guessing.one} following the
same steps as the bound on $\pguess[\sigma^{y}]{X|QE}$ in the case of
substitution attacks, from which we get
\[ \pguess[\bar{\rho}]{X|YQE} \leq
\pguess[\sigma]{\Theta|E}\left(1+\frac{|\cC|}{2^{d/2}}\right) \leq 2^k
\left(1+\frac{|\cC|}{2^{d/2}}\right). \qedhere\]
\end{proof}

\subsection{Continuing after a reject}
\label{sec:FS.continuing}

Since the construction from \thmref{thm:FS} generates the same
resources $\aK$ and $\aH^k_{\min}$ that it uses, it is trivial to
recursively apply the protocol to authenticate multiple messages using
the same keys. By \thmref{thm:security}, the error of $n$ runs of the
protocol, $\pi \circ \dotsb \circ \pi$, is $n \eps_{\adv}$. But
note that the protocol does not perform anything if it gets an error
from the resource $\aH^k_{\min}$ instead of a key. This way of
composing the protocol with itself aborts all future rounds as soon as
tampering is detected and the key $\theta$ is not
recycled.

One can imagine a different scenario, in which the players have spare
secret key, which they use to replace $\theta$, if it cannot be
recycled. Let $\underline{\aK}$ denote such an extra key resource, and
let $\aH^k_{\min}$ denote a min\-/entropy resource, which might
produce an error $\bot$. Given these two resources, we wish to
construct a new resource $\underline{\aH}^{k'}_{\min}$ which always
outputs a key, by ``giving'' the key from $\underline{\aK}$ to
$\aH^k_{\min}$. We are then left with a resource $\aK$ which might
still output a key if it was not given to $\aH^k_{\min}$. $\aK$ can be
defined as in \figref{fig:qkd.ideal}, with a switch that decides if it
produces a key or not.

\begin{lem}
\label{lem:continuing}
Let $\pi^{\text{new}} = (\pi^{\text{new}}_A,\pi^{\text{new}}_B)$ be a
protocol where both $\pi^{\text{new}}_A$ and $\pi^{\text{new}}_B$ get
keys $(k,\theta)$ from $\underline{\aK}$ and $\aH^k_{\min}$. If
$\theta \neq \bot$, they output $(k,\theta)$ at their outer interfaces
when requested. If $\theta = \bot$, they output $(\bot,k)$ at their
outer interfaces instead. Then
\[  \underline{\aK} \| \aH^k_{\min} \xrightarrow{\pi^{\text{new}},0}  \aK \|
\underline{\aH}^{k'}_{\min}\]
with $k' = - \log \left(2^{-n}+ 2^{-k}\right)$.
\end{lem}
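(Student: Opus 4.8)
The plan is to establish the two ingredients hidden in a construction statement with error $0$: that the resource I build lies in $\underline{\aH}^{k'}_{\min}$, and that there is a simulator rendering the real and ideal systems \emph{perfectly} indistinguishable. I take the smoothing parameter to be $0$ throughout, as in the rest of this section, and write $n$ for the length of the uniform key furnished by $\underline{\aK}$, so that this key is uniform on $2^n$ values and independent of Eve's side information. First I would fix atomic resources $\sK \in \underline{\aK}$ and $\sH \in \aH^k_{\min}$ and define the ideal min\-/entropy resource $\sH'$: it runs $\sH$ internally to produce $\theta$ together with Eve's side information $E$, and in parallel samples a fresh uniform $n$\-/bit string $\kappa$. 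Its output is $X' = \theta$ when $\theta \neq \bot$ and $X' = \kappa$ when $\theta = \bot$; in particular it never outputs $\bot$, so it has the type of $\underline{\aH}^{k'}_{\min}$, and since both Alice and Bob receive the same $\theta$ (hence make the same decision) it delivers identical strings at their interfaces. Besides $E$, the resource $\sH'$ exposes at Eve's interface the one\-/bit flag $B = [\theta = \bot]$; the simulator $\simE$ forwards $E$ to the distinguisher and uses $B$ to drive the switch of $\aK$ (key when $B=0$, error when $B=1$).

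The first ingredient, $\sH' \in \underline{\aH}^{k'}_{\min}$, is the entropy computation. For any Eve strategy the joint state is block diagonal in the flag register $B$ (writing $E' = EB$),
\[ \rho_{X'E'} = \sigma_{X'E}\tensor\proj{0} + \tau_\kappa \tensor \tau_E \tensor \proj{1}, \]
where $\sigma_{X'E}$ is the subnormalised non\-/abort branch of $\sH$ relabelled $\theta \mapsto X'$ with $\Hmin[\sigma]{X'|E} \geq k$, where $\tau_\kappa = 2^{-n} I$ is maximally mixed on $\kappa$, and where $\tr\tau_E = \Pr[\theta = \bot]$. Because the state is block diagonal in $B$ and $X'$ is classical, the optimal guessing strategy may read $B$ and then guess within each block, so the guessing probability splits and is bounded by
\[ \pguess[\rho]{X'|E'} = \pguess[\sigma]{X'|E} + 2^{-n}\tr\tau_E \leq 2^{-k} + 2^{-n}, \]
using $\Hmin[\sigma]{X'|E} \geq k$ and $\tr\tau_E \leq 1$. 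Taking $-\log$ yields $\Hmin[\rho]{X'|E'} \geq -\log(2^{-n}+2^{-k}) = k'$, which holds for every Eve strategy, so $\sH' \in \underline{\aH}^{k'}_{\min}$.

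For the second ingredient I would verify exact equality, hence error $0$. Since $\pi^{\text{new}}$ only routes its inputs locally and Alice and Bob hold the same $\theta$ and the same uniform key, in the real system both outer interfaces produce the uniform key together with $\theta$ when $\theta \neq \bot$, and produce $\bot$ together with the uniform key when $\theta = \bot$. In the ideal system $\sH'$ reproduces exactly the same $\theta$ and $E$, outputs $\theta$ or $\kappa$ in the two cases, and the simulator sets $\aK$ to deliver a fresh uniform key or $\bot$ accordingly. In either branch precisely one uniform string is revealed---the $\underline{\aK}$ key in the real system, and either the $\aK$ key or $\kappa$ in the ideal system---and in every case it is uniform and independent of $E$, while the complementary output is $\bot$. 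Thus the input\-/output behaviour of $\pi^{\text{new}}_B \pi^{\text{new}}_A(\sK \| \sH)$ and of $(\sK' \| \sH')\simE$ coincides for every distinguisher, giving $\pi^{\text{new}}_B \pi^{\text{new}}_A(\underline{\aK} \| \aH^k_{\min}) \subset (\aK \| \underline{\aH}^{k'}_{\min})^*$ with error $0$.

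I expect the main obstacle to be the simulator/exact\-/equality step rather than the entropy bound. One must check that revealing the abort flag $B$ to the simulator---which is necessary to drive the switch of $\aK$---does not degrade the min\-/entropy of $X'$, and the block\-/diagonal calculation above is exactly what confirms this. One must also argue that the \emph{single} uniform key of $\underline{\aK}$, reused in the real system across two mutually exclusive branches but modelled in the ideal system by two \emph{independent} uniform keys ($\kappa$ inside $\sH'$ and the key of $\aK$), never yields a detectable correlation, precisely because the two branches never simultaneously reveal both keys.
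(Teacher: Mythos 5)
Your proof is correct and follows essentially the same route as the paper's: define $\sH'$ to run $\sH$ internally and substitute a fresh uniform key upon $\bot$, bound the guessing probability by summing the contributions of the two branches to obtain $2^{-n}+2^{-k}$, and let a simulator use the abort flag to drive the switch on $\aK$. If anything, you are slightly more explicit than the paper, since you include the flag $B$ in Eve's conditioning system and spell out the perfect-indistinguishability step, both of which the paper only remarks on in its closing sentence.
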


\begin{proof}
Let $\sH \in \aH^k_{\min}$, and after interacting with this, let the
shared state of the key and the 
distinguisher be $\rho_{\Theta E} = \proj{\bot} \tensor \mu_E +
\sigma_{\Theta E}$ with $\Hmin[\sigma]{\Theta|E} \geq k$.

We define $\sH'$ to run $\sH$, and if it gets $\bot$, it generates a
fresh uniform key $\tau_{\Theta}$, which it outputs when
requested. The shared state is then
$\rho'_{\Theta E} = \tau_{\Theta} \tensor \mu_E + \sigma_{XE}$. One
has
\[
\pguess[\rho']{\Theta|E} \leq \pguess[\tau \tensor \mu]{\Theta|E} +
                           \pguess[\sigma]{\Theta|E} \leq 2^{-n} + 2^{-k}.\]
Hence
\[ \Hmin[\rho']{\Theta|E} = -\log \pguess[\rho']{\Theta|E} \geq - \log
\left(2^{-n}+ 2^{-k}\right). \]
This shows that $\sH' \in \underline{\aH}^k_{\min}$. We additionally
need $\sH'$ to output at Eve's interface whether $\sH$ generated
$\bot$ or not, and for a simulator to activate the corresponding
switch on $\aK$.
\end{proof}

If one has enough spare key $\underline{\aK}$, recursively composing
$\pi^{\text{new}}$ with the modified Fehr-Salvail protocol $\pi$ \---
i.e., running
$\pi \circ \pi^{\text{new}} \circ \pi \circ \dotsb \circ
\pi^{\text{new}} \circ \pi$
\--- allows one to encrypt multiple messages by using keys from
$\underline{\aK}$ to replace lost $\theta$. Note that since the
entropy of the key decreases slightly when this is done, the error
$\eps_{\adv}$ will increase slightly with each run (if one does not
change the parameters to compensate).

\appendix
\appendixpage
\phantomsection
\label{app}
%\pdfbookmark[1]{Appendices}{app}
\addcontentsline{toc}{section}{Appendices}

\section{Notation and basic concepts}
\label{app:basic}

We assume the reader is familiar with basic quantum information
theory, e.g., textbooks such as \cite{NC00,Wat16}. In this appendix we
explain the notation, and introduce the distance and entropy measures
that we use in this work.

\subsection{Quantum states, maps, and norms}
\label{app:basic.states}

We use standard notation for quantum information theory. $\hilbert$
denotes a Hilbert space. In all sections except
\secref{sec:min-entropy.box} and \appendixref{app:sys.boxes}, where
causal boxes are used, Hilbert spaces are finite dimensional. In
finite dimensions, $\lo{}$ denotes the set of linear operators on
$\hilbert$, $\no{}$ is the set of normalized positive operators \---
density operators \--- and $\sno{}$ is the set of subnormalized
positive operators, i.e., if $\rho \in \sno{}$, then
$0 \leq \tr \rho \leq 1$. In infinite dimensions, $\tcop{\hilbert}$
denotes the set of trace class
operators.\footnote{$V \in \tcop{\hilbert}$ if
  $\trnorm{V} = \sum_i \bra{i} \sqrt{\hconj{V}V} \ket{i} < \infty$,
  where $\{\ket{i}\}$ is an orthonormal basis of $\hilbert$. A density
  operator is a non\-/negative self\-/adjoint operator
  $\rho \in \tcop{\hilbert}$ with $\tr \rho = 1$.}

We write $\hilbert_{AB} = \hilbert_A \tensor \hilbert_B$ for a
bipartite quantum system and $\rho_{AB} \in \sno{AB}$ for a bipartite
(subnormalized) quantum state. $\rho_A = \trace[B]{\rho_{AB}}$ and
$\rho_B = \trace[A]{\rho_{AB}}$ denote the corresponding reduced
density operators.  Let $\cE : \lo{A} \to \lo{B}$ be a completely
positive, trace\-/preserving (CPTP) map.\footnote{In the case of
  infinite dimensional systems, a CPTP map acts on trace class
  operators, $\cE : \tcop{\hilbert_A} \to \tcop{\hilbert_B}$.} When it
is applied to a state $\rho \in \sno{AC}$, we write $\cE(\rho)$ as
shorthand for $\left(\cE \tensor \id_C\right) (\rho)$, where $\id_C$
is the identity on system $C$.

The only norm we need in this work is the trace norm (or Schatten
$1$\-/norm), defined as $\trnorm{A} \coloneqq \tr \sqrt{\hconj{A}A}$.

\subsection{Distance measures}
\label{app:basic.distance}

The trace distance between two states $\rho$ and $\sigma$ is given by 
\[D(\rho,\sigma) \coloneqq \frac{1}{2} \trnorm{\rho - \sigma}.\]
This corresponds to the maximum advantage one has in distinguishing
between the two states, i.e., if given $\rho$ or $\sigma$ chosen
uniformly at random one has to guess which one we hold, then the
probability of guessing correctly is \cite{Wat16}
\[p = \frac{1}{2} + \frac{1}{2} D(\rho,\sigma).\]

Another widely used
measure is the fidelity, defined as
\[ F(\rho,\sigma) \coloneqq \trace{\sqrt{\rho^{1/2}\sigma\rho^{1/2}}}.\]

When dealing with subnormalized states, we need to generalize these
measures to retain their properties. The following distance notions
are treated in detail in \cite{TCR10}, and we refer to that work for
more information.

For any two subnormalized states $\rho,\sigma \in \sno{}$, we define
the generalized trace distance as
\[\bar{D}(\rho,\sigma) \coloneqq D(\rho,\sigma) +
\frac{1}{2}|\tr \rho - \tr \sigma|,\] and the generalized fidelity as
\[\bar{F}(\rho,\sigma) \coloneqq F(\rho,\sigma) +
\sqrt{(1-\tr \rho)(1 - \tr \sigma)}.\]

The (generalized) fidelity has a useful property, known as Uhlmann's
theorem (see \cite{NC00,TCR10}), which states that for any two states
$\rho$ and $\sigma$, there exist purifications of these states which
have the same fidelity. The \emph{purified distance} is defined based
on the fidelity, so as to have the same property~\cite{TCR10}:
 \[P(\rho,\sigma) \coloneqq \sqrt{1-\bar{F}^2(\rho,\sigma)}.\]

This metric coincides with the generalized distance for pure states,
and is larger otherwise.
\begin{lem}[\protect{\cite[Lemma 6]{TCR10}}]
\label{lem:purified}
Let $\rho,\sigma \in \sno{}$. Then
\[\bar{D}(\rho,\sigma) \leq P(\rho,\sigma) \leq
\sqrt{2\bar{D}(\rho,\sigma)}.\]\end{lem}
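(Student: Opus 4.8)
The plan is to unfold the construction statement through \defref{def:security}: since the claimed error is $0$, it suffices by \eqnref{eq:security.atomic} to exhibit, for every atomic $\sH \in \aH^k_{\min}$ together with an always\-/succeeding key $\sK \in \underline{\aK}$ producing $n$\-/bit keys, an ideal atomic resource $\sH' \in \underline{\aH}^{k'}_{\min}$, a key resource $\sK' \in \aK$, and a simulator $\sigma_E$ with $\pi^{\text{new}}_B \pi^{\text{new}}_A (\sK \| \sH) \close{0} (\sK' \| \sH') \sigma_E$. Because the target error is $0$, this ``$\close{0}$'' must be an exact identity of systems, so the whole argument splits into (i) building $\sH'$ with the right entropy and (ii) matching every interface on the nose. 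First I would define $\sH'$ to run $\sH$ internally, output $\theta$ verbatim whenever $\sH$ succeeds, and replace the abort by a freshly sampled uniform $n$\-/bit key $\tau_\Theta$ whenever $\sH$ would output $\bot$; in addition $\sH'$ exposes at its $E$\-/interface a single flag recording whether the internal run aborted.

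The entropy bound is the core computation. Writing the state produced by $\sH$ as $\rho_{\Theta E} = \proj{\bot} \tensor \mu_E + \sigma_{\Theta E}$ with $\Hmin[\sigma]{\Theta|E} \geq k$, the state produced by $\sH'$ is $\rho'_{\Theta E} = \tau_\Theta \tensor \mu_E + \sigma_{\Theta E}$, now with no $\bot$ branch. Using the operational identity $\Hmin{\Theta|E} = -\log \pguess{\Theta|E}$ and subadditivity of the conditional guessing probability over the two summands, I would bound $\pguess[\rho']{\Theta|E} \leq \pguess[\tau \tensor \mu]{\Theta|E} + \pguess[\sigma]{\Theta|E} \leq 2^{-n} + 2^{-k}$, where the first term equals $2^{-n}\tr\mu_E \leq 2^{-n}$ because $\tau_\Theta$ is uniform and independent of $\mu_E$. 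This yields $\Hmin[\rho']{\Theta|E} \geq -\log(2^{-n}+2^{-k}) = k'$, and since $\sH'$ never aborts it lies in $\underline{\aH}^{k'}_{\min}$.

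Next I would specify the simulator and verify exact indistinguishability. The simulator $\sigma_E$ transparently relays the distinguisher's interaction with the $E$\-/part of $\sH'$, reads the abort flag, and drives the switch on the ideal key resource $\sK' \in \aK$: ``produce key'' when $\sH$ succeeded and ``no key'' when it aborted. It then remains to compare outer outputs. When $\sH$ succeeds, the real system outputs $(k,\theta)$ from $\underline{\aK}$ and $\aH^k_{\min}$, while the ideal system outputs the fresh uniform key of $\sK'$ and the $\theta$ passed through by $\sH'$; when $\sH$ aborts, the real system outputs $(\bot,k)$ while the ideal outputs $\bot$ from $\sK'$ and the replacement uniform key from $\sH'$. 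In both branches the marginal uniform keys are independent of $E$, and Eve's side information together with the switch/abort bit coincide, so the joint states are literally equal and the distinguishing advantage is $0$.

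The main obstacle I anticipate is not the entropy arithmetic but the exact ($\eps=0$) bookkeeping of the ``role swap'': the key that $\underline{\aK}$ supplies must be shown to occupy the min\-/entropy slot $\underline{\aH}^{k'}_{\min}$ precisely when the constructed $\aK$ reports $\bot$, and conversely, with the switch on $\aK$ driven by the abort flag so that the request\-/ and switch\-/scheduling of these resources (as discussed in \secref{sec:cc.security} and \secref{sec:min-entropy.spec}) is respected on both sides. Forcing this match to hold exactly, rather than up to a small error, is exactly what motivates the careful definition of $\sH'$'s extra flag and of the simulator's switch control.
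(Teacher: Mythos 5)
Your proposal does not address the stated lemma at all. \lemref{lem:purified} is a purely state\-/level inequality relating the generalized trace distance $\bar{D}$ and the purified distance $P$ for subnormalized states $\rho,\sigma \in \sno{}$; it involves no protocol, resource, or simulator, and the paper does not prove it itself but imports it from \cite[Lemma~6]{TCR10}. What you have written is instead a reconstruction of the proof of \lemref{lem:continuing} \--- the key\-/replacement construction $\underline{\aK} \| \aH^k_{\min} \xrightarrow{\pi^{\text{new}},0} \aK \| \underline{\aH}^{k'}_{\min}$ \--- down to the same entropy computation $\pguess[\rho']{\Theta|E} \leq 2^{-n} + 2^{-k}$ that appears in the paper's proof of that lemma. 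However sound that argument may be for \lemref{lem:continuing}, none of it establishes the claimed chain $\bar{D}(\rho,\sigma) \leq P(\rho,\sigma) \leq \sqrt{2\bar{D}(\rho,\sigma)}$, so as a proof of \lemref{lem:purified} the proposal has a complete gap.

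A correct argument works directly from the definitions $\bar{F}(\rho,\sigma) = F(\rho,\sigma) + \sqrt{(1-\tr \rho)(1-\tr \sigma)}$ and $P(\rho,\sigma) = \sqrt{1-\bar{F}^2(\rho,\sigma)}$ in \appendixref{app:basic.distance}. Embed the subnormalized states into normalized ones on an enlarged space, $\hat{\rho} \coloneqq \rho \oplus (1-\tr\rho)\proj{\bot}$ and $\hat{\sigma} \coloneqq \sigma \oplus (1-\tr\sigma)\proj{\bot}$; block\-/diagonality gives $D(\hat{\rho},\hat{\sigma}) = \bar{D}(\rho,\sigma)$ and $F(\hat{\rho},\hat{\sigma}) = \bar{F}(\rho,\sigma)$, so the standard Fuchs\--van de Graaf inequalities yield the generalized form
\[ 1 - \bar{F}(\rho,\sigma) \leq \bar{D}(\rho,\sigma) \leq \sqrt{1-\bar{F}^2(\rho,\sigma)}. \]
The upper bound here is precisely $\bar{D}(\rho,\sigma) \leq P(\rho,\sigma)$, and for the other direction one uses $\bar{F} \leq 1$ together with the lower bound:
\[ P(\rho,\sigma) = \sqrt{\left(1-\bar{F}(\rho,\sigma)\right)\left(1+\bar{F}(\rho,\sigma)\right)} \leq \sqrt{2\left(1-\bar{F}(\rho,\sigma)\right)} \leq \sqrt{2\bar{D}(\rho,\sigma)}. \]
This is the route taken in \cite{TCR10}, and nothing resembling it appears in your proposal.
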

The purified distance is used to define smooth min\-/entropy in the
following section.

\subsection{Min-entropy}
\label{app:basic.min-entropy}

The \emph{smooth conditional min\-/entropy} that we use throughout
this work to define the randomness of a quantum system was first
proposed by Renner~\cite{Ren05}. It represents the optimal measure for
randomness extraction in the sense that it is always possible to
extract that amount of almost uniform randomness from a source, but
never more. Before defining this notion, we first state a
\emph{non-smooth} version.

\begin{deff}[conditional min-entropy~\cite{Ren05}]
  \label{def:min-entropy}
  Let $\rho_{AB} \in \sno{AB}$. The
  \emph{min\-/entropy} of $A$ conditioned on $B$ is defined as
  \begin{equation*}
%    \label{eq:min-entropy}
    \Hmin[\rho]{A|B} \coloneqq \max \{\lambda
    \in \reals :  \exists \sigma_B \in \no{B}\text{\ s.t.\ } 2^{-\lambda} \1_A \tensor \sigma_B \geq \rho_{AB}\},
  \end{equation*}
  where $\1_A$ denotes the identity operator and $A \geq B$ if and
  only if $A-B$ is positive semi\-/definite ($A - B \geq 0$).
\end{deff}

We will often drop the subscript $\rho$ when there is no doubt about what
underlying state is meant.

This definition has a simple operational interpretation when the
first system is classical, which is the case we consider. K\"onig et
al.~\cite{KRS09} showed that for a state $\rho_{XB} = \sum_{x \in
  \cX} p_x \proj{x} \tensor \rho^x_B$ classical on
$X$, \begin{equation} \label{eq:hmin=pguess} \Hmin[\rho]{X|B} = - \log
  \pguess[\rho]{X|B},\end{equation} where $\pguess{X|B}$ is the
maximum probability of guessing $X$ given $B$, namely
\begin{equation*}
%  \label{eq:pguess} 
  \pguess[\rho]{X|B} \coloneqq
  \max_{\{\Gamma_x\}_{x \in \cX}} \left( \sum_{x \in \cX}
    p_x \trace{\Gamma_x\rho^x_B} \right),
\end{equation*}
where the maximum is taken over all positive\-/operator valued
measures (POVMs) $\{\Gamma_x\}_{x \in \cX}$ on $B$ (i.e., $\Gamma_x$
is positive and $\sum_x \Gamma_x = I$).  If the system $B$ is empty,
then the min\-/entropy of $X$ reduces to the R\'enyi entropy of order
infinity, $\Hmin{X} = - \log \max_{x \in \cX} p_x$ (often written
$H_\infty(X)$). In this case the connection to the guessing
probability is particularly obvious: when no side information is
available, the best guess we can make is simply the value $x \in \cX$
with highest probability.

The \emph{smooth} min\-/entropy then consists in maximizing the
min\-/entropy over all subnormalized states $\delta$-close in the
purified distance to the actual state $\rho_{XB}$ of the system
considered. Thus by introducing an extra error $\delta$, we have a
state with potentially much more entropy.

\begin{deff}[smooth min-entropy~\cite{Ren05,TCR10}]
  \label{def:smooth-min-entropy}
  Let $\delta \geq 0$ and $\rho_{AB} \in \sno{AB}$, then the
  \emph{$\delta$-smooth min\-/entropy} of $A$ conditioned on $B$ is defined as
  \begin{equation*}
%    \label{eq:smooth-min-entropy}
    \HminSmooth[\rho]{\delta}{A|B} \coloneqq \max_{\sigma_{AB} :
      P(\sigma,\rho) \leq \delta} \Hmin[\sigma]{A|B}.
  \end{equation*}
\end{deff}

\section{Formalizing Information-Processing Systems}
\label{app:sys}

\subsection{Quantum Combs}
\label{app:sys.combs}

A \emph{quantum comb}~\cite{GW07,Gut12,CDP09} (see also
\cite{Har11,Har12,Har15}) models an interactive quantum
information\-/processing system that receives an input, sends an
output, receives an input, sends an output, etc, and terminates after
a fixed number of steps. Such a system is drawn in
\figref{fig:comb}. One trivial way to model such systems is to
explicitly denote their internal memory, e.g., let $\hilbert_M$ be the
space of the internal memory, then a system is given by a sequence of
CPTP maps:
\[ \cE_i : \lo{X_iM} \to \lo{MY_i}.\]
Let the internal memory start in some initial state $\zero_M$.  Upon
receiving an input $\rho_{X_1} \in \lo{X_1}$ one can evaluate the
output and new memory state as
$\sigma_{MY_1} = \cE_1(\rho_{X_1} \tensor \proj{0}_M)$. Upon receiving
the next input $\sigma_{X_2}$, the new memory state and output are
$\tau_{MY_2} = \cE_2(\sigma_{X_2M})$, etc.

\begin{figure}[tb]
\begin{centering}
\begin{tikzpicture}[scale=1.2,
wire/.style={->,>=stealth,thick},
circle1/.style={draw,circle,fill,thick,inner sep=1.5pt}]

\node[circle1] (u1) at (0,0) {};
\node[circle1] (u2) at (2,0) {};
\node[circle1] (u3) at (4,0) {};
\node[circle1] (u4) at (6,0) {};

\node (v11) at (-.75,-1.1) {};
\node (v12) at (.75,-1.1) {};
\node (v21) at (1.25,-1.1) {};
\node (v22) at (2.75,-1.1) {};
\node (v31) at (3.25,-1.1) {};
\node (v32) at (4.75,-1.1) {};
\node (v41) at (5.25,-1.1) {};
\node (v42) at (6.75,-1.1) {};

\draw[wire] (u1) to (u2);
\draw[wire] (u2) to (u3);
\draw[wire] (u3) to (u4);

\draw[wire] (v11) to (u1);
\draw[wire] (v21) to (u2);
\draw[wire] (v31) to (u3);
\draw[wire] (v41) to (u4);
\draw[wire] (u1) to (v12);
\draw[wire] (u2) to (v22);
\draw[wire] (u3) to (v32);
\draw[wire] (u4) to (v42);

\draw[dashed,thick] (-.35,-1.2) to ++(0,1.6) to ++(6.7,0) to
++(0,-1.6) to ++(-.7,0) to ++(0,.8) to ++(-1.3,0) to ++(0,-.8) to
++(-.7,0) to ++(0,.8) to ++(-1.3,0) to ++(0,-.8) to ++(-.7,0) to
++(0,.8) to ++(-1.3,0) to ++(0,-.8) -- cycle;
\end{tikzpicture}

\end{centering}
\caption[Quantum comb]{\label{fig:comb}A single
  information\-/processing system modeled as a comb. The nodes
  represent an operation and the arrows capture a quantum state. Each
  tooth of the comb corresponds to a pair of an input and an output
  message.}
\end{figure}
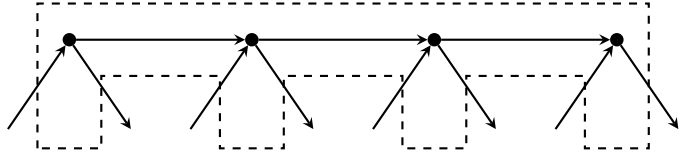

A more compact representation of a comb which omits the internal
memory is given by a single CPTP map
\[ \cE : \lo{X_1 \dotsb X_n} \to \lo{Y_1 \dotsb Y_n},\]
which, for all $i \in [n]$ and all
$\rho_{X_1 \dotsb X_n}, \sigma_{X_1 \dotsb X_n} \in \lo{X_1 \dotsb
  X_n}$
such that $\rho_{X_1 \dotsb X_i} = \sigma_{X_1 \dotsb X_i}$ satisfies
the relation
\[ \ktrace[Y_{i+1} \dotsb Y_n]{\cE\left(\rho_{X_1 \dotsb X_n}\right)} = \ktrace[Y_{i+1}
  \dotsb Y_n]{\cE\left(\sigma_{X_1 \dotsb X_n}\right)},\]
i.e., if the inputs are identical up to position $i$, then the outputs
must be identical up to position $i$ as well. This can be seen as a
causality condition, namely that an input after $i$ cannot influence
an output before $i$.

A very convenient representation of combs is given by the
\cjrep~\cite{Wat16} of the map $\cE$, namely the operator
$\sR_{Y_1\dotsb Y_nX_1\dotsb X_n} \in \lo{Y_1\dotsb Y_nX_1\dotsb X_n}$
given by
\[ \sR_{Y_1\dotsb Y_nX_1\dotsb X_n} \coloneqq \sum_{i,j} \cE\left(\ketbra{i}{j}\right) \tensor
\ketbra{i}{j}.\] The causality condition then becomes
\[ \trace[Y_{i+1} \dotsb Y_n]{\sR_{Y_1\dotsb Y_nX_1\dotsb X_n}} =
\sR_{Y_1\dotsb Y_iX_1\dotsb X_i} \tensor I_{X_{i+1} \dotsb X_n}.\]

\begin{rem}[Sequential scheduling]
\label{rem:scheduling}
If the output value in register $Y_i$ corresponds to one message being
sent to one (random) party \--- where the name of the party might be a
\emph{classical} part of the message\--- then the composition of
combs obtained by routing messages to the correct parties is alway a
new well\-/defined comb.\footnote{This follows because in such a
  network of systems, there is always ever only one active party, the
  one that has just received a message. Thus, the next message to be
  output and the next active party is always uniquely defined.} The
systems used in this work in all sections except
\secref{sec:min-entropy.box} follow this rule, and are thus modeled as
quantum combs. In \appendixref{app:sys.boxes} we discuss some settings
that do not use sequential scheduling, and thus need a more complex
model of systems to be captured.
\end{rem}

\subsection{Causal Boxes}
\label{app:sys.boxes}

The quantum combs introduced in \appendixref{app:sys.combs} are
well\-/suited for modeling finite systems with sequential scheduling,
which is the case for all the concrete protocols analyzed in this
work. There are however situations which require a more developed
model of systems. Consider the example drawn in \figref{fig:ndsystem}:
two players, Alice and Bob, each send a message to a third player,
Charlie, who outputs the first message he receives and ignores the
second. Each of the systems is a well\-/defined comb. Alice and Bob
just output a single message. When Charlie receives the first message,
$m = (v,p)$ \--- value $v$ from player $p$ \--- he outputs $v$ and
ignores all further inputs. But the composition of all three systems
(depicted as a dashed box in \figref{fig:ndsystem}) is not defined: it
is a system with no input and one output, but this output is
undetermined.

\begin{figure}[tb]
\begin{centering}

\begin{tikzpicture}
  \node[draw,thick,minimum height=1cm,minimum width=1.5cm] (A) at (0,1) {Alice};
  \node[draw,thick,minimum height=1cm,minimum width=1.5cm] (B) at (0,-1) {Bob};
  \node[draw,thick,minimum height=1cm,minimum width=1.5cm] (C) at (3,0)  {Charlie};
  \draw[->,>=stealth,thick] (A) to node[auto,sloped,pos=.02] {\tt Alice!} (C);
  \draw[->,>=stealth,thick] (B) to node[auto,sloped,pos=.85] {\tt Bob!} (C);
  \draw[->,>=stealth,thick] (C) to node[auto] {\tt Alice}
  node[auto,swap] {or {\tt Bob}?} (5.4,0);
   \node[fit=(A)(B)(C),dashed,draw,thick] {};
\end{tikzpicture}

\end{centering}
\caption[Order dependent system]{\label{fig:ndsystem}Alice and Bob
  both send messages to Charlie, who outputs the first message he
  receives. Although each system can be described by a comb, the
  composition of the three, depicted as the dashed box, is undefined.}
\end{figure}
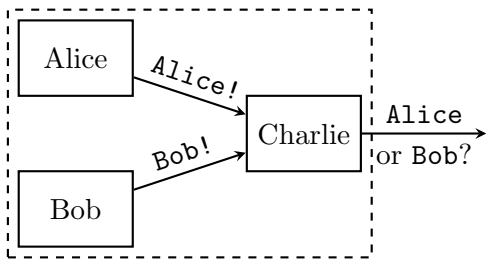

The composition of these three systems is undefined, because Charlie's
output depends on the order of the messages he receives, but this is
not specified by the individual systems of Alice and Bob. The
\emph{causal box} framework~\cite{PMMRT17} was developed to model such
systems. It achieves this by assigning a tag $t \in \T$ to values
$v \in \cV$ that are output (and input) by quantum
information\-/processing systems. $\T$ is a partially ordered set
(e.g., $\T = \rationals$), and $t \in \T$, which can be thought of as
a time, denotes the order of $v$ with respect to other messages. This
also allows superpositions of causal structures to be modeled by
allowing a message to be in a superposition of different positions,
e.g., $\ket{t_1} + \ket{t_2}$. Additionally, when the cardinality of
$\T$ is infinite, the resulting causal boxes can process an unbounded
number of messages, e.g., a beacon which outputs a qubit every second is
valid causal box, but cannot be modeled as a quantum comb.

More precisely, a quantum message that is either input to or output
from a causal box is an element of a Hilbert space with an orthonormal
basis given by $\{\ket{v,t}\}_{v \in \cV, t \in \T}$. For a finite
$\cV$ and infinite $\T$, this Hilbert space corresponds to
\begin{equation} \label{eq:hilbertspace}
  \Ltwo{\T}{\C^{|\cV|}}\,,\end{equation}
where
$\LtwoOp(\T) = \{ (x_t)_{t \in \T} : x_t \in \C, \norm{x} <
\infty\}$
is the sequence space with bounded $2$\=/norm with
$\norm{x} = \sqrt{\braket{x}{x}}$ and
%\begin{equation*} %\label{eq:scalarproduct}
$\braket{x}{y} = \sum_{t \in \T} \overline{x_t}y_t$. %\end{equation*}

An arrow used in figures such as \figref{fig:min-entropy.box}
captures an unbounded number of messages being output or input. We
refer to such an object as a wire (which may connect two systems), and
its Hilbert space is given by a Fock space,
\begin{equation} \label{eq:generalfockspace} \fock{\hilbert} \coloneqq
  \bigoplus_{n = 0}^\infty \vee^n \hilbert\,, \end{equation} where
$\vee^n \hilbert$ denotes the symmetric subspace of
$\hilbert^{\tensor n}$, and $\hilbert^{\tensor 0}$ is the one
dimensional space containing the vacuum state $\vacuum$.  Plugging
\eqnref{eq:hilbertspace} into \eqnref{eq:generalfockspace} for a $d$
dimensional value ($|\cV| = d$), we find that the message space of a
wire is
\begin{equation} \label{eq:quditfockspace}
\fock{\LtwoTC} = \bigoplus_{n = 0}^\infty \vee^n \left(\LtwoTC\right)\,.
\end{equation}
The orthogonal subspaces $\vee^n \left(\LtwoTC\right)$ for
$n \in \naturals_0$ capture $n$ messages being sent on a wire. The
restriction to the symmetric space guarantees that there is no order
amongst the messages other than what might be defined from their state,
e.g., their position in $\T$.

Let $A$ denote a wire that carries $d_A$\=/dimensional
messages. Abusing language, we will refer to this as a
$d_A$\=/dimensional wire.\footnote{The Hilbert space of the wire is in
  fact infinite dimensional.} We write $\cF^{\T}_A$ for the
corresponding state space, namely
\[\cF^{\T}_A \coloneqq \fock{\Ltwo{\T}{\C^{d_A}}}\,.\] Note that for any
Hilbert spaces $\hilbert_A$ and $\hilbert_B$,
\begin{equation} \label{eq:fockisomorphism}
\fock{\hilbert_A} \tensor \fock{\hilbert_B} \cong \fock{\hilbert_A \oplus
    \hilbert_B}\,,\end{equation}
where the isomorphism preserves the
meaning associated with the bases of the Fock spaces, i.e., a tensor
product of two vacuum states on the left in
\eqnref{eq:fockisomorphism} is mapped to a vacuum state on the right,
a tensor product of a vacuum state and one message on the left is
mapped to a single message with the same value and position on the
right, etc.\ (see \cite{PMMRT17} for the exact isomorphism). This
allows the tensor product of two wires with dimensions $d_A$ and $d_B$
to be written as one wire with dimension $d_A + d_B$,
\[ \cF^{\T}_A \tensor \cF^{\T}_B \cong \cF^{\T}_{AB}.\]
Similarly, for any $\cP \subset \T$, a wire may be split in two parts
corresponding to messages in $\cP$ and $\tilde{\cP} \coloneqq \cT
\setminus \cP$, respectively:
\[\cF^{\T}_{A} \cong \cF^{\cP}_A \tensor \cF^{\tilde{\cP}}_A.\]
We use this in particular to trace out messages that are not before
some position $t
\in \T$ by taking $\cP = \{ p \in \cT : p \leq t\}$, e.g., \[
\rho^{\leq t}_A = \trace[\nleq t]{\rho_A}.\]

Let $\cF^\T_X$ and $\cF^\T_Y$ denote the Hilbert spaces of an input
wire $X$ and output wire $Y$. And let $\tcop{\cF^\T_X}$ and
$\tcop{\cF^\T_Y}$ be the corresponding sets of trace class
operators. For the specical case of a totally ordered set
$\T$,\footnote{The case for a partial order on $\T$ can be found in
  \cite{PMMRT17}.} a causal box is defined as a set of mutually consistent
CPTP maps
\[ \sR = \left\{ \cE^{\leq t} : \tcop{\cF^\T_X} \to \tcop{\cF^{\leq
    t}_Y}\right\}_{t \in \T},\] i.e., for $t \leq u$,
\[ \cE^{\leq t} = {\tr_{> t}} \circ \cE^{\leq u}.\]
In some cases the map $\cE = \lim_{t \to \infty} \cE^{\leq t}$ is
well\-/defined, in which case a causal box can be defined by this
limit instead of by the sequence.

As for quantum combs, causal boxes must satisfy a notion of causality,
so that an input before position $t$ cannot influence an output after
position $t$. The exact definition is not needed in this work, so we
omit it, and refer the interested reader to \cite{PMMRT17}. Similarly,
causal boxes can be represented using the \cj isomorphism, which we
omit here as well.

Causal boxes may be connected in arbitrary ways, i.e., any output wire
of one box can be ``plugged into'' an input wire of the same dimension
from a different box, or may be looped back and connected to one if
its own inputs. This always results in a new well\-/defined causal
box, even if loops are present. It does not create any causality
conflicts since messages are ordered and an output can only depend on
inputs that arrived before.

\begin{rem}[From combs to causal boxes]
\label{rem:combs2boxes}
  Quantum combs may be seen as special cases of causal boxes, which do
  not specify the positions $t \in \T$ of the outputs, only the
  (local) order with respect to the other outputs it generates. One
  can easily ``upgrade'' a quantum comb to a causal box by assigning
  some (fixed) processing time $\delta_i$ to produce the output $Y_i$
  after receiving the input $X_i$, in which case it inherits all the
  properties (e.g., closure under composition) of causal
  boxes. Alternatively, a comb can be modeled as a specification of
  causal boxes, namely those that produce the same outputs in the same
  order, but at undetermined times.
\end{rem}

\section{Quantum Key Distribution}
\label{app:distil.qkd}

A QKD protocol typically has three phases.\footnote{A detailed review
  of QKD can be found in \cite{SBCDLP09}.} In the first, the players
exchange some quantum states \--- either one player (Alice) generates
them and sends them to the other (Bob), or an untrusted third party
(Eve) prepares states that are sent to both players. We assume for
simplicity that the players measure the quantum states upon reception,
but the same analysis holds for protocols that require quantum memory
as well. In \figref{fig:qkd.real} we illustrated the case where the
players have access to an (insecure) quantum channel that they use to
send quantum states from Alice to Bob.  In the second phase of the
protocol, the players compare some of the measurement results to
estimate the amount of noise on the channel. At the end of this phase,
they obtain a bound on the entropy of the remaining undisclosed bit
strings conditioned on the adversary's information.  In the final
phase, the players run some (classical) post\-/processing protocols on
the strings they hold to extract a secret key. More precisely, they
first need to correct errors between the strings held by Alice and
Bob. Then they run a privacy amplification step to extract a secret
key from their strings.

In \secref{sec:distil} we used these error correction and privacy
amplification procedures to get a secret key from any min\-/entropy
resource. Here, we show that a standard QKD security proof, but with
the post\-/processing omitted, constructs such a min\-/entropy
resource. More precisely, let
$\pi^{\dis}_{AB} = (\pi^\dis_A,\pi^\dis_B)$ be a protocol that
distributes quantum states using an insecure channel $\aQ$, then uses
an authentic channel $\aA^c$ to compare the measurement results, and
either aborts or produces two (random) strings $X$ and $Y$. Let Eve
have access to the quantum channel and apply any operation allowed by
quantum mechanics to the states being sent, and let her obtain a
transcript of the messages sent on the classical authentic channel. We
prove below that if we can bound the information Eve has about $X$,
then $\pi^\dis_{AB}$ constructs a min\-/entropy resource.

Proving that we can bound the information Eve has about $X$ is the
difficult part of QKD security proofs. Here we only show that if this
can be done, then the protocol can be written as a constructive
statement in the AC framework. An overview of how one can actually
bound Eve's information can be found in \cite{SBCDLP09}, and
detailed security proofs for BB84 and BBM92 that compute these bounds
are given in \cite{TL15}.

\begin{lem}
\label{lem:distil.min-entropy}
Let $\pi^{\dis}_{AB} = (\pi^\dis_A,\pi^\dis_B)$ be a protocol as
described above. Suppose that one can prove that the subnormalized
state $\sigma_{XYE}$ resulting from Alice and Bob not aborting is such
that $\HminSmooth[\sigma]{\delta}{X|E} \geq k$. And let
$\overline{\aH}^{k,\delta}_{\min}$ be a $(k,\delta)$\-/min\-/entropy
resource where the output $Y$ at Bob's interface is arbitrary (but is
always $\bot$ if Alice's is $\bot$). Then $\pi^{\dis}_{AB}$
(perfectly) constructs $\overline{\aH}^{k,\delta}_{\min}$ from
$\aQ \| \aA^c$,
\[
  \aQ \| \aA^c \xrightarrow{\pi^{\dis}_{AB},0} \overline{\aH}^{k,\delta}_{\min}.
\]
\end{lem}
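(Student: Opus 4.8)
The plan is to unfold \defref{def:security} in the case $\eps = 0$, where it asks for
\[ \pi^\dis_B \pi^\dis_A \left(\aQ \| \aA^c\right) \subset \left(\overline{\aH}^{k,\delta}_{\min}\right)^* . \]
First I would record that the min\-/entropy specification is closed under converters applied at Eve's interface: if $\sH \in \overline{\aH}^{k,\delta}_{\min}$ and $\sigma \in \Sigma$, then any environment $\sS'$ attached to $\sH \sigma_E$ can be absorbed into the environment $\sigma_E \sS'$ attached to $\sH$, producing the same joint state, so the defining condition of \defref{def:min-entropy.comb} is inherited and $\sH \sigma_E \in \overline{\aH}^{k,\delta}_{\min}$. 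Since $\id \in \Sigma$, this yields $\left(\overline{\aH}^{k,\delta}_{\min}\right)^* = \overline{\aH}^{k,\delta}_{\min}$, and it is enough to show that every atomic resource in $\pi^\dis_B \pi^\dis_A (\aQ \| \aA^c)$ already lies in $\overline{\aH}^{k,\delta}_{\min}$, taking the identity as simulator.

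Next I would fix atomic resources $\sQ \in \aQ$ and $\sA \in \aA^c$ and set $\sH \coloneqq \pi^\dis_B \pi^\dis_A(\sQ \| \sA)$; every element of the specification $\pi^\dis_B \pi^\dis_A (\aQ \| \aA^c)$ has this form. By the closure properties of converters and of parallel composition (\secsref{sec:cc.converters} and \ref{sec:cc.composition}), $\sH$ is again an atomic resource with the interface structure required by \defref{def:min-entropy.comb}: a classical value $X$ appears at Alice's interface, a classical value $Y$ at Bob's, and the whole adversarial interaction \--- Eve's control of the messages in $\aQ$ together with the transcript copies she obtains from $\aA^c$ \--- happens at Eve's interface. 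It therefore remains to verify the two defining conditions of $\overline{\aH}^{k,\delta}_{\min}$ for $\sH$.

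The first condition, that for every environment $\sS$ the joint state decomposes as $\rho_{XYE} = \proj{\bot} \tensor \tau_{YE} + \sigma_{XYE}$ with $Y = \bot$ whenever $X = \bot$, follows by conditioning on the abort flag, together with the fact that the abort decision of $\pi^\dis$ is taken jointly from the comparison broadcast on $\aA^c$, so Alice and Bob always abort together. The second, and only substantial, condition is the entropy bound. Here the key observation is that an environment $\sS$ for $\sH$ at Eve's interface is exactly a QKD adversary: it injects and possibly replaces the quantum messages routed through $\aQ$ and reads the transcript of $\aA^c$. Consequently the non\-/abort state $\sigma_{XYE}$ it produces with $\sH$ is precisely the state $\sigma_{XYE}$ appearing in the hypothesis for the corresponding attack, which by assumption satisfies $\HminSmooth[\sigma]{\delta}{X|E} \geq k$. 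This gives $\sH \in \overline{\aH}^{k,\delta}_{\min}$ and, with the first paragraph, the construction with error $0$.

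The main obstacle is conceptual rather than computational: one must faithfully identify the universally quantified environment $\sS$ of \defref{def:min-entropy.comb} with an arbitrary adversary against $\pi^\dis$, and settle the interface and scheduling bookkeeping \--- which interface issues the ``start'' signal, the request arrows governing when $X$ and $Y$ are output, and the fact that the $\aA^c$\-/transcript is exactly the classical side information available to $\sS$. All genuine cryptographic content, namely the bound $\HminSmooth[\sigma]{\delta}{X|E} \geq k$, is imported as a hypothesis, so once this correspondence is in place the remaining verification is purely structural.
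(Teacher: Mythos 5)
Your proof is correct and takes essentially the same approach as the paper's: both reduce the claim to showing $\pi^\dis_B \pi^\dis_A \left( \aQ \| \aA^c \right) \subset \overline{\aH}^{k,\delta}_{\min}$ and then identify the universally quantified environment $\sS$ from the min\-/entropy definition with an arbitrary adversary attacking $\pi^\dis$, so that the hypothesized bound $\HminSmooth[\sigma]{\delta}{X|E} \geq k$ supplies the entropy condition. The paper packages this identification as a short proof by contradiction, whereas you argue directly and additionally spell out bookkeeping the paper leaves implicit (the $*$\-/closure of the specification, the interface structure, and the joint\-/abort condition), which is harmless extra detail rather than a different argument.
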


\begin{proof}
  We will show that
  $\pi^{\dis}_{AB} \left( \aQ \| \aA^c \right) \subset
  \overline{\aH}^{k,\delta}_{\min}$.
  By contradiction, suppose there exists
  $\sH \in \pi^{\dis}_{AB} \left( \aQ \| \aA^c \right)$ such that
  $\sH \notin \overline{\aH}^{k,\delta}_{\min}$. This means there must exist an
  $\sS$ such that the output
  $\rho_{XE} = \proj{\bot} \tensor \tau_{E} + \sigma_{XE}$ after
  interacting with $\sH$ has $\HminSmooth[\sigma]{\delta}{X|E} <
  k$.
  By running such a system $\sS$, Eve would thus obtain more
  information about the string $X$ than allowed.
\end{proof}

If no adversary is eavesdropping on the quantum channel $\aQ$, but
instead it is only subject to natural noise, then one can make
stronger statements than \lemref{lem:distil.min-entropy} in which $Y$ is
not arbitrary, but a bound on the number of errors between $X$ and $Y$
is known, and the probability of aborting is also bounded. Such
statements are necessary for proving the robustness of the protocols,
i.e., the probability that they terminate with a shared secret key
when only natural noise is present (see~\cite{PR14} for a formal
treatment of robustness in QKD.)

For example, instead of an insecure channel $\aQ$, let the players
share a noisy channel specification $\aC$ such as the depolarizing
channels depicted in \figref{fig:channel.noisy}. Then one can often
prove statements such as
\begin{equation}
\label{eq:distil.qkd}
  \aC \| \aA^c \xrightarrow{\pi^{\dis}_{AB},\eps_\noise} \overline{\aR}^{k,\delta}_{\min},
\end{equation}
where
$\overline{\aR}^{k,\delta}_{\min} \subset
\overline{\aH}^{k,\delta}_{\min}$
is a specification of min\-/entropy resources that never aborts and
outputs strings $X$ and $Y$ with specific correlations (e.g., no more
than $t$ bit flips).

Composing \corref{cor:distil.comp}, \lemref{lem:distil.min-entropy},
and \eqnref{eq:distil.qkd}, we recover the standard QKD security
statement~\cite{PR14}.

\begin{cor}
\label{cor:qkd}
Let $\pi^{\qkd} = \pi^\pa \circ \pi^\ec \circ \pi^\dis$ consist of
the composition of the protocols described here and
\secref{sec:distil}, then
\begin{gather*}
%\label{eq:qkd.cor}
 \aC \| \aA^c \xrightarrow{\pi^{\qkd},\eps_\noise+\eps_\pa+2\delta} \underline{\aK}^m,
\intertext{and}
 \aQ \| \aA^c \xrightarrow{\pi^{\qkd},\eps_\verif+\eps_\pa+2\delta} \aK^m,
\end{gather*}
where $\aA^c = \aA^{c_1} \| \aA^{c_2} \| \aA^{c_3}$, and
$\aA^{c_i}$ are the authentic channels used by the different parts of
the protocol.
\end{cor}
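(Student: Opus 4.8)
The plan is to obtain both statements purely by composition, feeding the distribution-phase constructions into the key-distillation constructions of \corref{cor:distil.comp} and letting \thmref{thm:security} add the errors; no new entropy estimate or simulator is required beyond those already established. Since $\pi^{\qkd} = \pi^\pa \circ \pi^\ec \circ \pi^\dis$ and the three sub-protocols each consume their own authentic channel, I would first split $\aA^c = \aA^{c_1} \| \aA^{c_2} \| \aA^{c_3}$, where $\aA^{c_1}$ and $\aA^{c_2}$ are the channels used by $\pi^\ec$ and $\pi^\pa$ (exactly as in \corref{cor:distil.comp}) and $\aA^{c_3}$ is the channel used by the comparison step of $\pi^\dis$.

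For the adversarial statement, the first ingredient is \lemref{lem:distil.min-entropy}, giving $\aQ \| \aA^{c_3} \xrightarrow{\pi^\dis,0} \overline{\aH}^{k,\delta}_{\min}$. To make this composable with the distillation step, I would tensor it with the trivial identity construction on the remaining channels and invoke the parallel-composition half of \thmref{thm:security}, obtaining
\[
  \aQ \| \aA^{c_3} \| \aA^{c_1} \| \aA^{c_2} \xrightarrow{\pi^\dis | \id,\,0} \overline{\aH}^{k,\delta}_{\min} \| \aA^{c_1} \| \aA^{c_2}.
\]
The second ingredient is the adversarial half of \corref{cor:distil.comp}, namely $\overline{\aH}^{k,\delta}_{\min} \| \aA^{c_1} \| \aA^{c_2} \xrightarrow{\pi^\pa \circ \pi^\ec,\,\eps_\verif+\eps_\pa+2\delta} \aK^m$. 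Chaining the two through the serial-composition half of \thmref{thm:security} yields $\aQ \| \aA^c \xrightarrow{\pi^{\qkd},\,\eps_\verif+\eps_\pa+2\delta} \aK^m$, since the errors $0$ and $\eps_\verif+\eps_\pa+2\delta$ simply add and $(\pi^\pa \circ \pi^\ec)\circ\pi^\dis = \pi^{\qkd}$.

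The noise statement follows by the identical argument with two substitutions: I replace \lemref{lem:distil.min-entropy} by \eqnref{eq:distil.qkd}, which gives $\aC \| \aA^{c_3} \xrightarrow{\pi^\dis,\eps_\noise} \overline{\aR}^{k,\delta}_{\min}$, and I replace the adversarial half of \corref{cor:distil.comp} by its noise half $\overline{\aR}^{k,\delta}_{\min} \| \aA^{c_1} \| \aA^{c_2} \xrightarrow{\pi^\pa \circ \pi^\ec,\,\eps_\pa+2\delta} \underline{\aK}^m$. The serial composition now adds $\eps_\noise$ and $\eps_\pa+2\delta$, giving the claimed total $\eps_\noise+\eps_\pa+2\delta$.

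The only point requiring genuine care — and the step I would treat as the main obstacle — is the bookkeeping of the three authentic channels together with checking that the hypotheses of \thmref{thm:security} actually hold: the distinguishing pseudo-metric must be contractive (for the serial steps) and context-insensitive (for the parallel step that tensors in the channels untouched by $\pi^\dis$). Both properties hold for the distinguishing advantage on quantum combs, as recorded in \secref{sec:cc.instants}, so once the channel partition is fixed the argument is a mechanical chaining of composability, with the errors accumulating additively exactly as stated.
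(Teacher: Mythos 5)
Your proposal is correct and takes essentially the same route as the paper: \corref{cor:qkd} is presented there as an immediate consequence of composing \lemref{lem:distil.min-entropy} (resp.\ \eqnref{eq:distil.qkd}) with the two halves of \corref{cor:distil.comp} via \thmref{thm:security}, which is precisely your argument. Your explicit splitting $\aA^c = \aA^{c_1} \| \aA^{c_2} \| \aA^{c_3}$, the parallel composition with the identity construction, and the check of the contractivity/context-insensitivity hypotheses merely spell out the bookkeeping that the paper leaves implicit.
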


% \begin{rem}
% \label{rem:qkd}
% As pointed out in \remref{rem:distil.ec}, if instead of using authentic
% channels, the players use secure channels $\aS^c$, then no loss of
% entropy occurs during error correction. This directly results in a
% longer secret key being generated, i.e., we get
% \[  \aQ \| \aS^c \xrightarrow{\pi^{\qkd},\nu} \aK^{m+r+t},\]
% instead of \eqnref{eq:qkd.cor}, where as extractor one chooses a
% function $\Ext : \{0,1\}^n \times \{0,1\}^d \to \{0,1\}^{m+r+t}$.
% \end{rem}

\section{Technical Lemmas}
\label{app:tech}

\subsection{Min-entropy inequalities}
\label{app:tech.min-entropy-inequalities}

The following inequality shows that if an additional $r$ bit string
$Z$ is given to the adversary, then she gets $r$ bits of information.

\begin{lem}[\protect{\cite[Lemma 11]{WTHR11}}]
\label{lem:tech.min-entropy-inequalities.chain}
Let $\rho \in \sno{ABZ}$ be any subnormalized state where $Z$ is a
classical system over the alphabet $\cZ$. Then
  \[\HminSmooth[\rho]{\delta}{A|BZ} \geq \HminSmooth[\rho]{\delta}{A|B} - \log
  |\cZ|.\]
\end{lem}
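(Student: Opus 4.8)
The plan is to first establish the non-smooth version of the inequality for states that are classical on $Z$, and then lift it to the smooth version by exhibiting a well-chosen smoothing state.

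For the non-smooth step, set $k := \Hmin[\rho]{A|B}$. By \defref{def:min-entropy} there is a $\sigma_B \in \no{B}$ with $2^{-k}\,\1_A \tensor \sigma_B \geq \rho_{AB}$. Since $Z$ is classical I may write $\rho_{ABZ} = \sum_{z \in \cZ} \rho^z_{AB} \tensor \proj{z}$ with each $\rho^z_{AB} \geq 0$ and $\sum_z \rho^z_{AB} = \rho_{AB}$, so each block obeys $\rho^z_{AB} \leq \rho_{AB} \leq 2^{-k}\,\1_A \tensor \sigma_B$. Choosing the normalized state $\sigma_{BZ} := \sigma_B \tensor \frac{1}{|\cZ|}\1_Z$ and summing the block inequalities over $z$ gives $2^{-(k-\log|\cZ|)}\,\1_A \tensor \sigma_{BZ} = \sum_z 2^{-k}\,\1_A \tensor \sigma_B \tensor \proj{z} \geq \rho_{ABZ}$, whence $\Hmin[\rho]{A|BZ} \geq \Hmin[\rho]{A|B} - \log|\cZ|$. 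The only place classicality of $Z$ enters is this block decomposition.

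For the smooth case, let $\hat\rho_{AB}$ achieve $\HminSmooth[\rho]{\delta}{A|B} =: k$, so $P(\hat\rho_{AB},\rho_{AB}) \leq \delta$ and $\Hmin[\hat\rho]{A|B} = k$. The goal is to produce a state on $ABZ$ that is $\delta$-close to $\rho_{ABZ}$, is classical on $Z$ (so the non-smooth step applies), and still has $A|B$ min-entropy $k$. I would first invoke Uhlmann's theorem in the generalized-fidelity form of \cite{TCR10} to lift $\hat\rho_{AB}$ to an extension $\hat\rho_{ABZ}$ (so $\tr_Z \hat\rho_{ABZ} = \hat\rho_{AB}$) with $P(\hat\rho_{ABZ},\rho_{ABZ}) \leq \delta$. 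Then I apply the completely dephasing channel $\mathcal{P}_Z$ on $Z$, setting $\hat\rho'_{ABZ} := \mathcal{P}_Z(\hat\rho_{ABZ})$. Because $\rho_{ABZ}$ is already classical on $Z$ it is a fixed point of $\mathcal{P}_Z$, so monotonicity of the purified distance under CPTP maps gives $P(\hat\rho'_{ABZ},\rho_{ABZ}) \leq \delta$; moreover dephasing $Z$ leaves the $AB$-marginal untouched, so $\tr_Z \hat\rho'_{ABZ} = \hat\rho_{AB}$ and hence $\Hmin[\hat\rho']{A|B} = k$. Now $\hat\rho'_{ABZ}$ is classical on $Z$, so the non-smooth chain rule yields $\Hmin[\hat\rho']{A|BZ} \geq k - \log|\cZ|$, and since $\hat\rho'_{ABZ}$ lies in the $\delta$-ball around $\rho_{ABZ}$, the definition of smooth min-entropy gives $\HminSmooth[\rho]{\delta}{A|BZ} \geq k - \log|\cZ|$, as required.

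The main obstacle is precisely this smoothing step: the optimal smoothing state for $A|B$ lives only on $AB$, and a naive extension to $ABZ$ need be neither classical on $Z$ nor controlled in distance to $\rho_{ABZ}$. The lift-then-pinch combination resolves exactly this tension — Uhlmann's theorem supplies a genuine extension with controlled distance, while the dephasing restores classicality on $Z$ without disturbing either the distance (since $\rho_{ABZ}$ is $\mathcal{P}_Z$-invariant) or the $AB$-marginal. I would take care to confirm that the lifting and monotonicity statements of \cite{TCR10} are formulated for subnormalized states, which they are, so that no renormalization issues arise.
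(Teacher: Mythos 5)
Your proof is correct, but there is nothing in the paper to compare it against: the paper does not prove this lemma, it imports it verbatim by citation from \cite{WTHR11}. What you have done is reconstruct a complete, self-contained proof, and both steps check out. The non-smooth step is sound: classicality of $Z$ gives the block decomposition $\rho_{ABZ}=\sum_z \rho^z_{AB}\tensor\proj{z}$ with each $\rho^z_{AB}\leq\rho_{AB}\leq 2^{-k}\,\1_A\tensor\sigma_B$, and testing against the normalized operator $\sigma_B\tensor\frac{1}{|\cZ|}\1_Z$ costs exactly the factor $|\cZ|$, the operator inequality holding blockwise because the blocks live on orthogonal subspaces. The smoothing step is where a naive argument would fail (the optimal smoothing state for $A|B$ has no canonical extension to $ABZ$, let alone one that is classical on $Z$), and your lift-then-pinch resolution is exactly right: the Uhlmann-type extension property of the purified distance~\cite{TCR10} supplies an extension $\hat{\rho}_{ABZ}$ of the optimal $\hat{\rho}_{AB}$ with $P(\hat{\rho}_{ABZ},\rho_{ABZ})\leq\delta$, and since $\rho_{ABZ}$ is a fixed point of the dephasing map $\mathcal{P}_Z$, monotonicity of $P$ under CPTP maps preserves the distance bound while restoring classicality on $Z$ and leaving the $AB$ marginal \--- and hence $\Hmin[\hat{\rho}]{A|B}$ \--- untouched. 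Your closing caveat is also correctly resolved: the relevant statements in \cite{TCR10} (generalized fidelity, Uhlmann property, monotonicity) are all formulated for subnormalized states, as required by the definition of $\HminSmooth{\delta}{\cdot|\cdot}$ used here. This is essentially the standard argument behind the cited result, so your proof has the merit of making the paper's dependence on \cite{WTHR11} self-contained.
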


The next inequality shows that the min\-/entropy of a
state conditioned on an event is bounded by the min\-/entropy before
this conditioning.

\begin{lem}
\label{lem:tech.min-entropy-inequalities.event}
Let $\rho \in \sno{ABZ}$ be any subnormalized state with a binary
classical register $Z$. This state may be written as \[
\rho_{ABZ} = \rho^0_{AB} \tensor \proj{0} + \rho^1_{AB}
\tensor \proj{1}.\]  Then
\[ \HminSmooth[\rho^0]{\delta}{A|B} \geq
\HminSmooth[\rho]{\delta}{A|B}\ .\]
\end{lem}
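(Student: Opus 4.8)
The plan is to show that conditioning on the event $Z=0$ amounts to replacing $\rho_{AB}$ by the smaller operator $\rho^0_{AB}$, and that passing to a PSD-smaller state can only raise the smooth min-entropy. The whole argument is driven by the non-smooth version, which I would establish first. Since $\rho^1_{AB}\ge 0$ we have $\rho^0_{AB}\le\rho_{AB}$. If $\sigma_B\in\no{B}$ and $\lambda\in\reals$ satisfy $2^{-\lambda}\1_A\tensor\sigma_B\ge\rho_{AB}$, then the same pair satisfies $2^{-\lambda}\1_A\tensor\sigma_B\ge\rho^0_{AB}$; hence every $\lambda$ feasible for $\rho_{AB}$ in \defref{def:min-entropy} is feasible for $\rho^0_{AB}$, giving $\Hmin[\rho^0]{A|B}\ge\Hmin[\rho]{A|B}$. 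More generally, $\omega\le\omega'$ implies $\Hmin[\omega]{A|B}\ge\Hmin[\omega']{A|B}$, a fact I will reuse after smoothing.

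For the smooth statement, let $\bar\rho_{AB}$ be a state attaining the maximum in \defref{def:smooth-min-entropy} for $\rho_{AB}$, so that $P(\bar\rho_{AB},\rho_{AB})\le\delta$ and $\Hmin[\bar\rho]{A|B}=\HminSmooth[\rho]{\delta}{A|B}=:k$. I would realize the conditioning as one fixed operation acting on a purification. Purify $\rho_{ABZ}=\rho^0_{AB}\tensor\proj{0}+\rho^1_{AB}\tensor\proj{1}$ to $\ket{\rho}_{ABZR}$; this is also a purification of $\rho_{AB}$ with reference $E=ZR$, and projecting the reference register $Z$ onto $\ket{0}$ recovers the conditioned sub-state, $\rho^0_{AB}=\trace[ZR]{(\1_{AB}\tensor\proj{0}_Z\tensor\1_R)\proj{\rho}}$. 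Let $\mathcal M(\cdot)\coloneqq\trace[ZR]{(\1_{AB}\tensor\proj{0}_Z\tensor\1_R)(\cdot)}$ denote this map; since the projector $\Pi_E=\proj{0}_Z\tensor\1_R$ acts only on $E$, one has $\mathcal M(\cdot)=\trace[E]{(\1\tensor\Pi_E)(\cdot)(\1\tensor\Pi_E)}$, which is completely positive and trace-non-increasing. By Uhlmann's theorem (the defining property of the purified distance, see \appendixref{app:basic.distance} and \cite{TCR10}) there is a purification $\ket{\bar\rho}_{ABZR}$ of $\bar\rho_{AB}$ on the same spaces with $P(\ket{\bar\rho},\ket{\rho})=P(\bar\rho_{AB},\rho_{AB})\le\delta$. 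Define the candidate $\bar\rho^0_{AB}\coloneqq\mathcal M(\proj{\bar\rho})$.

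Two facts then finish the argument. First, $\bar\rho^0_{AB}\le\bar\rho_{AB}$: because the projector acts only on the reference $E$ that is traced out, $\bar\rho_{AB}-\bar\rho^0_{AB}=\trace[E]{(\1_{AB}\tensor(\1_E-\Pi_E))\proj{\bar\rho}}\ge 0$, using $\1_E-\Pi_E\ge 0$ and that partial-tracing a nonnegative operator sandwiched by $(\1\tensor(\1_E-\Pi_E)^{1/2})$ stays nonnegative. By the non-smooth monotonicity above this gives $\Hmin[\bar\rho^0]{A|B}\ge\Hmin[\bar\rho]{A|B}=k$. Second, applying the same map $\mathcal M$ to both purifications and invoking monotonicity of the purified distance under trace-non-increasing CP maps \cite{TCR10} yields
\[ P(\bar\rho^0_{AB},\rho^0_{AB})=P\bigl(\mathcal M(\proj{\bar\rho}),\mathcal M(\proj{\rho})\bigr)\le P(\ket{\bar\rho},\ket{\rho})\le\delta. \]
Thus $\bar\rho^0_{AB}$ lies in the $\delta$-ball around $\rho^0_{AB}$ and has min-entropy at least $k$, whence $\HminSmooth[\rho^0]{\delta}{A|B}\ge\Hmin[\bar\rho^0]{A|B}\ge k=\HminSmooth[\rho]{\delta}{A|B}$.

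The main obstacle is the closeness bound $P(\bar\rho^0_{AB},\rho^0_{AB})\le\delta$: it hinges on (i) choosing the purification of $\rho_{ABZ}$ so that conditioning on $Z=0$ is literally a projection on the reference system, and (ii) the monotonicity of the purified distance under trace-non-increasing (not merely trace-preserving) CP maps. I would double-check that the latter is stated in the form I need in \cite{TCR10}; should only CPTP monotonicity be available, the same map $\mathcal M$ can be recast as a computational-basis measurement of $Z$ that records its outcome in a fresh classical register, followed by post-selection on that register, which reduces the post-selection step to an elementary block-wise computation of the generalized fidelity for states that are classical on the recorded register.
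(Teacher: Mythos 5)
Your proof is correct, and it takes a genuinely different route from the paper's. The paper handles this lemma by citation alone: it invokes \cite[Lemma~10]{TL15}, a general statement on conditioning the smooth min-entropy on a classical event, instantiated with $X$ empty, $Y=Z$, and $\Omega$ the identity, and then lifts that lemma's restriction $\delta \in [0,\sqrt{\tr\rho})$ via the convention $\Hmin[\rho]{A|B} = +\infty$ when $\tr\rho = 0$. Your argument is instead self-contained: anti-monotonicity of the min-entropy under the operator order (any $\sigma_B$ feasible for $\rho_{AB}$ remains feasible for $\rho^0_{AB} \leq \rho_{AB}$), plus transporting the optimal smoothing state $\bar\rho_{AB}$ into the $\delta$-ball around $\rho^0_{AB}$ by realizing the conditioning as a single trace-non-increasing CP map on a purification and invoking Uhlmann. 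The two concerns you flag at the end are both unproblematic: (i) no special purification is needed, since for \emph{any} purification $\proj{\rho}$ of $\rho_{ABZ}$ one has $\trace[ZR]{(\1_{AB} \tensor \proj{0}_Z \tensor \1_R)\proj{\rho}} = \bra{0}_Z \rho_{ABZ} \ket{0}_Z = \rho^0_{AB}$, because the projector on $Z$ commutes with tracing out $R$; and (ii) monotonicity of $P$ under trace-non-increasing (not merely trace-preserving) CP maps is exactly what \cite{TCR10} provides --- it follows from CPTP monotonicity of the fidelity by embedding each subnormalized $\omega$ as the normalized $\omega \oplus (1-\tr\omega)$, which is precisely how the generalized fidelity is designed --- so your fallback via block-wise fidelity computations is unnecessary. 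The one clause you should add: choose the purifying register $R$ large enough that $\bar\rho_{AB}$ also admits a purification on $ABZR$ (purifying systems need not be minimal, so this costs nothing). As for what each approach buys: the paper's reduction is one line and delegates all technicalities to \cite{TL15}, whereas yours is elementary, uses only the properties of the purified distance already collected in \appendixref{app:basic.distance}, and exposes the mechanism --- smoothing survives conditioning because conditioning is a trace-non-increasing CP map acting on the purifying system --- in effect re-deriving the special case of \cite[Lemma~10]{TL15} that the paper actually needs.
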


\begin{proof}
  Follows from \cite[Lemma 10]{TL15} by taking $X$ to be empty,
  $Y = Z$ and $\Omega : \cY \to \{0,1\}$ is the identity. We also
  remove the condition $\delta \in [0,\sqrt{\tr \rho})$ by defining
  $\Hmin[\rho]{A|B} = +\infty$ if
  $\tr \rho = 0$.
\end{proof}

\subsection{Extractors}
\label{sec:tech.extractors}

Extractors are usually defined on normalized states. We show here than
any extractor for normalized states is also an extractor for
subnormalized states. This proof follows the steps of an equivalent
proof for multi\-/source extractors in the Markov model from
\cite[Lemma~37]{AFPS16}, but is adapted to seeded extractors.

\begin{lem}
\label{lem:tech.extractors.sub}
If $\Ext$ is a quantum\-/proof $(k,\eps)$\-/strong extractor for
normalized states, then it is a quantum\-/proof $(k+1,2\eps)$\-/strong extractor for
subnormalized states.
\end{lem}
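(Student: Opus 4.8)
The plan is to reduce the subnormalized statement to the normalized one by \emph{completing} the input to a normalized state whose conditional min\-/entropy is only one bit smaller, where the completion is chosen to be a uniform (product) source so that its own extraction error is itself controlled by $\eps$ rather than by its (large) weight. Concretely, let $\rho_{XE} \in \sno{XE}$ be classical on $X$ with $\Hmin[\rho]{X|E} \ge k+1$, set $p \coloneqq \tr \rho$, and let $\sigma_E \in \no{E}$ be a state witnessing the min\-/entropy bound, i.e.\ $\rho_{XE} \le 2^{-(k+1)} \1_X \tensor \sigma_E$. I would define the normalized completion
\[ \rho'_{XE} \coloneqq \rho_{XE} + (1-p)\, \tau_X \tensor \sigma_E, \]
with $\tau_X = 2^{-n}\1_X$ fully mixed, so that $\tr \rho' = 1$ and $\rho'$ is classical on $X$. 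The key bookkeeping step is that, combining the witness inequality with $(1-p)\tau_X \tensor \sigma_E = (1-p)2^{-n}\1_X \tensor \sigma_E$, one gets $\rho'_{XE} \le \big(2^{-(k+1)} + (1-p)2^{-n}\big)\1_X\tensor\sigma_E \le 2^{-k}\1_X\tensor\sigma_E$, hence $\Hmin[\rho']{X|E} \ge k$; the last step uses $(1-p) \le 2^{\,n-k-1}$, which holds whenever $k < n$, the only regime in which a $(k,\eps)$\-/extractor is non\-/degenerate. This is the source of the one\-/bit loss (the hypothesis requires $k+1$, the completion retains only $k$).

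Next I would let $\Lambda$ denote the linear map $\rho_{XE} \mapsto \rho_{\Ext(X,Z)ZE}$ that adjoins a uniform seed $Z$ and replaces $X$ by $\Ext(X,Z)$, so that \defref{def:extractor} for normalized states reads $\tfrac12\trnorm{\Lambda(\omega) - \tau_K\tensor\tau_Z\tensor\omega_E} \le \eps$ for every normalized $\omega$ with $\Hmin[\omega]{X|E} \ge k$, where $\omega_E$ is the $E$\-/marginal. Applying this to $\rho'$ gives
\[ \tfrac12\trnorm{\Lambda(\rho') - \tau_K\tensor\tau_Z\tensor\rho'_E} \le \eps, \]
and applying it to the normalized uniform source $\tau_X\tensor\sigma_E$ (which has conditional min\-/entropy $n \ge k$) and multiplying by $1-p$ gives, for the fill $\xi \coloneqq (1-p)\tau_X\tensor\sigma_E$ with $E$\-/marginal $\xi_E = (1-p)\sigma_E$,
\[ \tfrac12\trnorm{\Lambda(\xi) - \tau_K\tensor\tau_Z\tensor\xi_E} \le (1-p)\eps. \]

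Finally, using $\rho = \rho' - \xi$, linearity of $\Lambda$ and of the partial trace, and $\rho_E = \rho'_E - \xi_E$, I would write
\[ \Lambda(\rho) - \tau_K\tensor\tau_Z\tensor\rho_E = \big(\Lambda(\rho') - \tau_K\tensor\tau_Z\tensor\rho'_E\big) - \big(\Lambda(\xi) - \tau_K\tensor\tau_Z\tensor\xi_E\big), \]
and conclude by the triangle inequality that $\tfrac12\trnorm{\Lambda(\rho) - \tau_K\tensor\tau_Z\tensor\rho_E} \le \eps + (1-p)\eps \le 2\eps$, which is exactly the $(k+1,2\eps)$ guarantee for the subnormalized $\rho$. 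The main obstacle, and the one genuinely clever point, is the choice of completion: a naive fill would only allow bounding its contribution by its weight $1-p$, which is useless, whereas taking the fill to be a \emph{uniform} source makes it a legitimate input to the same extractor, so its error is again $\le\eps$ and cancels against the ideal target. The secondary subtlety is the operator\-/inequality accounting that forces the single\-/bit min\-/entropy loss and confines the argument to $k < n$.
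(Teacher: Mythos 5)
Your proof is correct and takes essentially the same route as the paper's: both arguments complete the subnormalized state to a normalized one by adding a fill that is uniform on $X$ (so that the fill's own extraction error is again bounded by $\eps$ via the extractor hypothesis), check that the completion loses only one bit of min\-/entropy, and then cancel the fill by linearity and the triangle inequality to get $2\eps$. The only differences are bookkeeping: the paper uses the fill $\tau_X \tensor \frac{1-p}{p}\rho_E$ together with a classical flag register and the guessing\-/probability characterization of min\-/entropy, whereas you use the operator\-/inequality witness $\sigma_E$ with no flag \--- and you make explicit the $k < n$ caveat that the paper's bound $2^{-k-1} + (1-p)2^{-n} \leq 2^{-k}$ also silently requires.
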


\begin{proof}
  Let $\rho_{XE} \in \sno{XE}$ be a subnormalized state with
  $\Hmin[\rho]{X|E} \geq k+1$, and let $p = \trace{\rho_{XE}}$. We
  define $\tilde{\rho}_E \coloneqq \frac{1-p}{p} \rho_E$ and the normalized state
  \begin{equation}
  \label{eq:tech.extractors.sub}
  \sigma_{XEP} \coloneqq \rho_{XE} \tensor \proj{0}_P +
    \tau_{X} \tensor \tilde{\rho}_E \tensor \proj{1}_P,
  \end{equation}
  where $\tau_{X}$ is the fully mixed state.

  This state satisfies a slightly modified min\-/entropy condition:
  \begin{align*}
    \pguess[\sigma]{X|EP} & = \pguess[\rho]{X|E} +
                            \pguess[\tau \tensor \tilde{\rho}]{X|E} \\
                          & \leq 2^{-k-1} +(1-p) 2^{-n} \leq 2^{-k},
  \end{align*}
  where $n$ is the length of the string $X$. Hence
  $\Hmin[\sigma]{X|EP} \geq k$. And because $\Ext$ is an extractor
  for normalized states it follows that
  \[
  \frac{1}{2}\trnorm{\sigma_{\Ext(X,Z)ZEP} - \tau_K \tensor \tau_Z
    \tensor \sigma_{EP}} \leq \eps.
  \]

  Plugging in \eqnref{eq:tech.extractors.sub} and tracing out $P$, we get
  \[
    \frac{1}{2}\trnorm{\rho_{\Ext(X,Z)ZE} - \tau_{\Ext(X,Z)Z} \tensor
    \tilde{\rho}_E - \tau_K \tensor \tau_Z \tensor \rho_E +
    \tau_K \tensor \tau_Z \tensor  \tilde{\rho}_E} \leq \eps.
  \]
  Thus starting from the expression
  $\trnorm{\rho_{\Ext(X,Z)ZE} - \tau_K \tensor \tau_Z \tensor \rho_E}$
  and then adding and subtracting the term
  $\tau_{\Ext(X,Z)Z} \tensor \tilde{\rho}_E - \tau_K \tensor \tau_Z \tensor \tilde{\rho}_E$
  as well as applying the triangle inequality leaves us with
  \begin{align*}
    & \frac{1}{2}\trnorm{\rho_{\Ext(X,Z)ZE} - \tau_K \tensor \tau_Z \tensor
      \rho_E} \\
    & \qquad \qquad \qquad \qquad \leq \eps +\frac{1}{2}\trnorm{\tau_{\Ext(X,Z)Z} \tensor \tilde{\rho}_E -
      \tau_K \tensor \tau_Z \tensor \tilde{\rho}_E} \\
    & \qquad \qquad \qquad \qquad \leq \eps +\frac{1}{2}\trace{\tilde{\rho}_{E}}\trnorm{\tau_{\Ext(X,Z)Z} -
      \tau_K \tensor \tau_Z} \leq 2\eps.	\qedhere
\end{align*}
\end{proof}

The following lemma shows that any extractor defined for subnormalized
states can be used to extract from states with a bound on the smooth
min\-/entropy instead of on the min\-/entropy with a small adjustment
to the error parameter. Note that the original lemma from \cite[Lemma
3.5]{DPVR12} omitted to specify that the extractor has to be defined
for subnormalized states for the proof to go through.

\begin{lem}[\protect{\cite[Lemma 3.5]{DPVR12}}]
\label{lem:tech.extractors.smooth}
If $\Ext$ is a quantum\-/proof $(k,\eps)$\-/strong extractor for
subnormalized states, then for any subnormalized
$\rho_{XE} \in \sno{XE}$ with classical $X$ and
$\HminSmooth[\rho]{\delta}{X|E} \geq k$, and a uniform $Z$,
\[\frac{1}{2} \trnorm{\rho_{\Ext(X,Z)ZE} - \tau_K \tensor \tau_Z \tensor
  \rho_E} \leq \eps + 2 \delta,\]
where $\tau_K$ is the fully mixed state.
\end{lem}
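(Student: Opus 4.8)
The plan is to reduce the smooth statement to the non-smooth extractor guarantee by passing to the subnormalized state that witnesses the smooth min-entropy. By \defref{def:smooth-min-entropy}, the hypothesis $\HminSmooth[\rho]{\delta}{X|E} \geq k$ provides a subnormalized state $\tilde\rho_{XE}$ with $P(\tilde\rho,\rho) \leq \delta$ and $\Hmin[\tilde\rho]{X|E} \geq k$. The crucial observation is that $\tilde\rho$ is in general only \emph{subnormalized}, which is precisely why one must assume that $\Ext$ is an extractor for subnormalized states; this is the point where \cite[Lemma 3.5]{DPVR12} needs the correction flagged above. I would also argue that $\tilde\rho$ may be taken classical on $X$ (as required by \defref{def:extractor}): dephasing the $X$ register is a CPTP map, so it does not decrease $\Hmin{X|E}$ by data processing and does not increase $P(\cdot,\rho)$ by monotonicity of the purified distance, while it leaves $\rho$ — already classical on $X$ — unchanged.

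Applying the extractor property of \defref{def:extractor} to $\tilde\rho$ then yields $\frac{1}{2}\trnorm{\tilde\rho_{\Ext(X,Z)ZE} - \tau_K \tensor \tau_Z \tensor \tilde\rho_E} \leq \eps$. What remains is to swap $\tilde\rho$ for $\rho$ on both sides at a total cost of $2\delta$. First I would convert the purified-distance bound into a trace-distance bound: using \lemref{lem:purified} together with $\bar{D} \geq D$ gives $\frac{1}{2}\trnorm{\tilde\rho_{XE}-\rho_{XE}} = D(\tilde\rho,\rho) \leq \bar{D}(\tilde\rho,\rho) \leq P(\tilde\rho,\rho) \leq \delta$.

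A three-term triangle inequality then finishes the argument. The first term, $\frac{1}{2}\trnorm{\rho_{\Ext(X,Z)ZE} - \tilde\rho_{\Ext(X,Z)ZE}}$, is at most $\delta$: appending a uniform $Z$ and computing $\Ext(X,Z)$ is a single CPTP map applied to $\rho_{XE}$ and $\tilde\rho_{XE}$, and the trace distance is non-increasing under CPTP maps (including on subnormalized states). The middle term is the $\eps$ from the extractor. The third term is $\frac{1}{2}\trnorm{\tau_K \tensor \tau_Z \tensor (\tilde\rho_E - \rho_E)} = \frac{1}{2}\trnorm{\tilde\rho_E - \rho_E} \leq \delta$, using $\trnorm{A \tensor B} = \trnorm{A}\trnorm{B}$ with $\trnorm{\tau_K} = \trnorm{\tau_Z} = 1$ and monotonicity of the trace distance under the partial trace over $X$. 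Summing gives $\eps + 2\delta$. The only genuinely delicate point is the bookkeeping around subnormalization: one must keep every intermediate state (in particular $\tilde\rho$) subnormalized so that the extractor hypothesis actually applies, which is the subtlety this lemma is designed to expose.
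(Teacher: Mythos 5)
Your proof is correct and takes essentially the same route as the proof of the cited result \cite[Lemma 3.5]{DPVR12} on which the paper relies: pass to the subnormalized state witnessing the smooth min\-/entropy, apply the extractor guarantee for subnormalized states to it, and pay $2\delta$ through two applications of the triangle inequality, using $D \leq \bar{D} \leq P$ to convert the purified\-/distance bound into a trace\-/distance bound. Your dephasing argument ensuring the smoothing witness can be taken classical on $X$ is a legitimate extra point of care (the extractor definition requires classicality, which the smoothing does not automatically provide), and your identification of where subnormalization forces the hypothesis ``extractor for subnormalized states'' matches exactly the correction this paper makes to the original statement.
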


The final extractor lemma that we need states that the composition of
two extractors is also an extractor, and is also taken from
\cite{DPVR12}.

\begin{lem}[\protect{\cite[Lemma A.4]{DPVR12}}]
\label{lem:tech.extractors.composition}
Let $\Ext_1: \{0,1\}^n \times \{0,1\}^{d_1} \to \{0,1\}^{m_1}$ and
$\Ext_2: \{0,1\}^n \times \{0,1\}^{d_2} \to \{0,1\}^{m_2}$ be
quantum-proof $(k,\eps_1)$- and $(k-m_1,\eps_2)$-strong extractors for
subnormalized states. Then the composition of the two, namely
  \begin{align*}
    \Ext_3 : & \{0,1\}^n \times \{0,1\}^{d_1+d_2} \to
    \{0,1\}^{m_1+m_2} \\
    & (x,y_1\|y_2) \mapsto \Ext_1(x,y_1)\|\Ext_2(x,y_2),
  \end{align*}
  is a quantum-proof $(k,\eps_1+\eps_2)$-strong extractor for
  subnormalized states.
\end{lem}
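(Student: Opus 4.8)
The plan is to peel off the two extractors one at a time and combine their errors with a triangle inequality. Fix a subnormalized state $\rho_{XE} \in \sno{XE}$ classical on $X$ with $\Hmin[\rho]{X|E} \geq k$, and let $Y_1,Y_2$ be independent uniform seeds of lengths $d_1,d_2$. Writing $K_1 = \Ext_1(X,Y_1)$ and $K_2 = \Ext_2(X,Y_2)$, we have $\Ext_3(X,Y_1\|Y_2) = K_1\|K_2$, and the goal is to bound $\frac{1}{2}\trnorm{\rho_{K_1 K_2 Y_1 Y_2 E} - \tau_{K_1 K_2} \tensor \tau_{Y_1 Y_2} \tensor \rho_E}$ by $\eps_1 + \eps_2$. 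The intermediate state interpolating between the real and ideal one is $\tau_{K_2} \tensor \tau_{Y_2} \tensor \rho_{K_1 Y_1 E}$.

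First I would apply $\Ext_2$ to $X$, treating $(K_1,Y_1,E)$ as the side information. The preliminary step here is a min-entropy accounting: since $Y_1$ is uniform and independent of $XE$, conditioning on it does not change the min-entropy, and since $K_1$ is classical on $\{0,1\}^{m_1}$, \lemref{lem:tech.min-entropy-inequalities.chain} (with $\delta = 0$) gives $\Hmin[\rho]{X|K_1 Y_1 E} \geq \Hmin[\rho]{X|E} - m_1 \geq k - m_1$. Because $\Ext_2$ is a $(k-m_1,\eps_2)$-strong extractor for subnormalized states, this yields $\frac{1}{2}\trnorm{\rho_{K_2 Y_2 K_1 Y_1 E} - \tau_{K_2} \tensor \tau_{Y_2} \tensor \rho_{K_1 Y_1 E}} \leq \eps_2$.

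Next I would apply $\Ext_1$ to $X$ with side information $(Y_1,E)$. Since $\Ext_1$ is a $(k,\eps_1)$-strong extractor and $\Hmin[\rho]{X|E} \geq k$, one gets $\frac{1}{2}\trnorm{\rho_{K_1 Y_1 E} - \tau_{K_1} \tensor \tau_{Y_1} \tensor \rho_E} \leq \eps_1$, and tensoring both sides with the fixed state $\tau_{K_2} \tensor \tau_{Y_2}$ leaves the trace norm unchanged. Combining the two estimates by the triangle inequality, and rewriting $\tau_{K_1} \tensor \tau_{K_2} = \tau_{K_1 K_2}$ and $\tau_{Y_1} \tensor \tau_{Y_2} = \tau_{Y_1 Y_2}$, produces the claimed bound $\eps_1 + \eps_2$.

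The one step that requires care — and the place where the hypothesis that $\Ext_2$ is a $(k-m_1,\eps_2)$-extractor (rather than a $(k,\eps_2)$-extractor) is consumed — is the min-entropy accounting: one must verify that revealing the first output $K_1$ together with the independent seed $Y_1$ costs at most $m_1$ bits of conditional min-entropy. This is immediate from the chain rule, but it is the conceptual heart of why the parameters line up. Everything else is routine manipulation of the triangle inequality and the fact that tensoring with a fixed state does not affect the trace norm, so I do not expect any genuine obstacle beyond correctly ordering the two extractor applications.
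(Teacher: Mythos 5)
Your proof is correct, and it is essentially the same argument as the one behind this lemma — the paper itself does not prove it but imports it verbatim from \cite[Lemma A.4]{DPVR12}, whose proof proceeds exactly as you do: reveal $K_1$ and $Y_1$, charge $m_1$ bits of min-entropy via the chain rule (\lemref{lem:tech.min-entropy-inequalities.chain}, with the observation that the independent uniform seed $Y_1$ costs nothing), apply $\Ext_2$ at level $k-m_1$ with side information $K_1Y_1E$, apply $\Ext_1$ with side information $E$, and combine by the triangle inequality. Your handling of the delicate points (the seed $Y_2$ being independent of $K_1Y_1E$, the chain rule and independence argument holding for subnormalized states, and tensoring with $\tau_{K_2}\tensor\tau_{Y_2}$ preserving the trace norm) is sound, so there is nothing to add.
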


\subsection{Key-Privacy and uniformity}

The lemmas in this section are all from \cite{FS17a}. The first states
that a specific kind of extractor can provides $\nu$\-/key\-/privacy
(\defref{def:key-privacy}) and uniformity (\defref{def:uniformity}).

\begin{lem}[\protect{\cite[Proposition~2]{FS17a}}]
\label{lem:key-privacy}
Let $\Ext : \{0,1\}^{n} \times \{0,1\}^{r-m} \to \{0,1\}^{m}$ be a
quantum\-/proof $(k,\eps)$\-/strong extractors for subnormalized
states for any $k$ and $\eps = \frac{\nu}{2}\sqrt{2^{-k+m}}$. And
let $h : \{0,1\}^{n} \times \{0,1\}^{r} \to \{0,1\}^{m}$ be
defined as
$h(x,\ell_1 \| \ell_2) \coloneqq \Ext(x,\ell_1) \xor \ell_2$. Then
$h$ provides $\nu$\-/key\-/privacy.
\end{lem}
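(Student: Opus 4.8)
The plan is to reduce the key\-/privacy bound to a single invocation of the extractor property of $\Ext$ (\defref{def:extractor}), by reading the key $L = L_1 \| L_2$ as a seed $L_1 \in \{0,1\}^{r-m}$ for $\Ext$ together with a one\-/time pad $L_2 \in \{0,1\}^{m}$, and by treating $TE$ as the side information against which the seed extracts. Writing $K \coloneqq \Ext(X,L_1)$, the tag is $T = K \xor L_2$, and the algebraic backbone of the argument is that the reversible, classically\-/controlled map $\Gamma$ sending register contents $(K,T) \mapsto (K \xor T, T)$ turns $(K,L_1,T,E)$ into $(L_2,L_1,T,E) = (L,T,E)$, since $K \xor T = L_2$ holds deterministically on the real state. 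So it will suffice to show that $(K,L_1)$ is close to uniform and decoupled from $TE$, and then push this through $\Gamma$.

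First I would establish the key independence statement: \emph{$L_1$ is uniform and independent of the joint system $(X,T,E)$}. This is where all three hypotheses of \defref{def:key-privacy} must be used. From $\rho_{LX} = \tau_L \tensor \rho_X$, the register $L = L_1 L_2$ is uniform and independent of $X$; because $L_2$ acts as a one\-/time pad on $T = \Ext(X,L_1) \xor L_2$, conditioning on any fixed value of $(X,L_1)$ leaves $T$ uniform, whence $\rho_{L_1 X T} = \tau_{L_1} \tensor \rho_X \tensor \tau_T$, i.e.\ $L_1$ is independent of $(X,T)$ and $T$ is uniform. Finally, the Markov condition $L \leftrightarrow XT \leftrightarrow E$ says that (the classical) $E$ depends on $L_1$ only through $(X,T)$; combining this with the product form just obtained yields $\rho_{L_1 XTE} = \tau_{L_1} \tensor \rho_{XTE}$, so that $L_1$ is an admissible (uniform, independent) seed relative to the source $X$ and side information $TE$.

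With this in hand I would apply the extractor with source $X$ (classical), seed $L_1$, side information $TE$, and parameter $k = \Hmin{X|TE}$. Since $\Ext$ is a quantum\-/proof $(k,\eps)$\-/strong extractor for subnormalized states for every $k$, with $\eps = \frac{\nu}{2}\sqrt{2^{-k+m}}$, \defref{def:extractor} gives $\frac{1}{2}\trnorm{\rho_{K L_1 TE} - \tau_K \tensor \tau_{L_1} \tensor \rho_{TE}} \leq \frac{\nu}{2}\sqrt{2^{-\Hmin{X|TE}+m}}$. Applying the CPTP map $\Gamma$ to both arguments and using that the trace norm is non\-/increasing under CPTP maps, I get $\Gamma(\rho_{K L_1 TE}) = \rho_{LTE}$ and $\Gamma(\tau_K \tensor \tau_{L_1} \tensor \rho_{TE}) = \tau_L \tensor \rho_{TE}$, the latter because $K$ uniform and independent of $T$ makes $K \xor T$ uniform and independent of $TE$. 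Hence $\trnorm{\rho_{LTE} - \tau_L \tensor \rho_{TE}} \leq \nu\sqrt{2^{-\Hmin{X|TE}+m}}$, which is exactly the claimed $\nu$\-/key\-/privacy.

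I expect the independence step to be the main obstacle. It is tempting but wrong to treat $T$ as merely extra independent randomness: $T$ genuinely depends on both $L_1$ and $X$, and the tag is precisely what an adversary sees. The point is that the one\-/time pad $L_2$ decouples $T$ from $L_1$ \emph{marginally} (giving $\rho_{L_1 XT}$ a product form), while the Markov assumption is exactly what allows $E$ to be folded into the side information without reintroducing a dependence on $L_1$. Once the seed $L_1$ is certified independent of $(X,TE)$, the remaining steps — invoking \defref{def:extractor} with $k = \Hmin{X|TE}$ and applying the data\-/processing inequality for the trace norm through $\Gamma$ — are routine.
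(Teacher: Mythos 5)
Your proof is correct. Note that the paper itself gives no proof of this lemma \--- it is imported verbatim as Proposition~2 of \cite{FS17a} \--- so there is no in\-/paper argument to compare against; your derivation is a sound, self\-/contained reconstruction along the natural (and, in essence, the original Fehr\-/Salvail) lines. The three steps all check out: (i) the one\-/time pad $L_2$ makes the marginal $\rho_{L_1 XT}$ a product $\tau_{L_1} \tensor \rho_X \tensor \tau_T$, and the Markov condition $L \leftrightarrow XT \leftrightarrow E$ (applied after tracing out $L_2$) lets you fold $E$ in, giving $\rho_{L_1 XTE} = \tau_{L_1} \tensor \rho_{XTE}$, which is exactly what is needed to treat $L_1$ as a legitimate seed against side information $TE$; (ii) invoking \defref{def:extractor} with $k = \Hmin{X|TE}$ is licensed because the hypothesis holds for \emph{every} $k$, and the factor bookkeeping works out ($\frac{1}{2}\trnorm{\cdot} \leq \frac{\nu}{2}\sqrt{2^{-k+m}}$ from the extractor versus $\trnorm{\cdot} \leq \nu\sqrt{2^{-k+m}}$ in \defref{def:key-privacy}); (iii) the map $(K,T) \mapsto (K \xor T, T)$ is a permutation of classical basis states, hence CPTP, it sends the real state to $\rho_{LTE}$ (since $K \xor T = L_2$ holds pointwise on the support) and the ideal state to $\tau_L \tensor \rho_{TE}$, so data processing transfers the bound. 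One immaterial slip: your parenthetical ``(the classical) $E$'' is inaccurate \--- \defref{def:key-privacy} allows quantum side information \--- but nothing in your argument uses classicality of $E$; the conditional\-/independence reading of the Markov chain you invoke is precisely what the condition means when the conditioning registers $X,T$ are classical, and your computation goes through unchanged for quantum $E$ and for subnormalized states.
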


The next lemma shows that the composition of two functions that
provide key\-/privacy results in a new function also providing
key\-/privacy. The same holds for uniformity.

% \begin{lem}[\protect{\cite[Proposition~3]{FS17a}}]
% \label{lem:key-privacy.parallel}
% Let $h_1 : \cX \times \cL_1 \to \cT_1$ and
% $h_2 : \cX \times \cL_2 \to \cT_2$ provide $\nu_1$ and
% $\nu_2$\-/key\-/privacy, respectively. Then
% $h : \cX \times \left(\cL_1 \times \cL_2 \right) \to \left(\cT_1
%   \times \cT_2 \right)$
% with $h(x,\ell_1\|\ell_2) \coloneqq h_1(x,\ell_1) \| h_2(x,\ell_2)$
% provides $\nu_1+\nu_2$\-/key\-/privacy.
% \end{lem}

\begin{lem}[\protect{\cite[Proposition~4]{FS17a}}]
\label{lem:key-privacy.sequarallel}
Let $h_1 : \cX \times \cL_1 \to \cT_1$ and
$h_2 : \left(\cX \times \cT_1\right) \times \cL_2 \to \cT_2$ be two
functions. We define 
$h : \cX \times \left(\cL_1 \times \cL_2 \right) \to \left(\cT_1
  \times \cT_2 \right)$
with $h(x,\ell_1\|\ell_2) \coloneqq t \| h_2(x \| t,\ell_2)$ where
$t \coloneqq h_1(x,\ell_1)$. If $h_1$ and $h_2$ provide
$\nu_1$ and $\nu_2$\-/key\-/privacy, respectively, then $h$
 provides $\nu_1+\nu_2$\-/key\-/privacy. And if $h_1$ and $h_2$ are
 both uniform, then $h$ is uniform.
\end{lem}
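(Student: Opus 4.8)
The plan is to establish the key\-/privacy claim by a two\-/step hybrid that decouples the two keys one at a time, and to obtain uniformity by a direct computation. Write $L = L_1\|L_2$, $T = T_1\|T_2$ and $m = m_1+m_2$, and let $\rho_{L_1L_2XT_1T_2E}$ be any state meeting the hypotheses of \defref{def:key-privacy} for $h$: $\rho_{LX} = \tau_L\tensor\rho_X$, $T_1 = h_1(X,L_1)$, $T_2 = h_2(X\|T_1,L_2)$, and $L\leftrightarrow XT\leftrightarrow E$ Markov. First I would record the structural facts needed below. Since $h_1$ reads only $L_1$ and $h_2$ reads only $L_2$, and $L_1,L_2$ are independent and uniform, conditioning on the classical registers $(X,T_1,T_2)$ makes $L_1$, $L_2$ and $E$ mutually independent; in particular $L_2\leftrightarrow XT_1T_2\leftrightarrow L_1E$ and $L_1\leftrightarrow XT_1\leftrightarrow T_2E$ are Markov chains, and $L_2$ is uniform and independent of the pair $X\|T_1$ (the source seen by $h_2$).

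The heart of the argument is the triangle inequality
\[
\trnorm{\rho_{LTE}-\tau_L\tensor\rho_{TE}} \le \trnorm{\rho_{L_1L_2TE}-\tau_{L_2}\tensor\rho_{L_1TE}} + \trnorm{\rho_{L_1TE}-\tau_{L_1}\tensor\rho_{TE}},
\]
where the second summand uses that tensoring with the fixed normalized state $\tau_{L_2}$ leaves the trace norm unchanged. The second term decouples $L_1$ and is the \emph{clean} one: applying the $\nu_1$\-/key\-/privacy of $h_1$ with source $X$, output $T_1$ and side information $T_2E$ (the chain $L_1\leftrightarrow XT_1\leftrightarrow T_2E$ was checked above) bounds it by $\nu_1\sqrt{2^{-\Hmin[\rho]{X|T_1T_2E}+m_1}}$, which is at most $\nu_1\sqrt{2^{-\Hmin[\rho]{X|T_1T_2E}+m}}$. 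The first term decouples $L_2$: I apply the $\nu_2$\-/key\-/privacy of $h_2$, whose source is the pair $X\|T_1$, with output $T_2$ and side information $L_1T_1E$, bounding it by $\nu_2\sqrt{2^{-\Hmin[\rho]{XT_1|L_1T_1T_2E}+m_2}} = \nu_2\sqrt{2^{-\Hmin[\rho]{X|L_1T_1T_2E}+m_2}}$, the equality holding because $T_1$ occurs in both the source and the conditioning.

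The step I expect to be the main obstacle is that this last min\-/entropy is conditioned on the \emph{other} key $L_1$, so to reach the target exponent $m = m_1+m_2$ I must show
\[
\Hmin[\rho]{X|L_1T_1T_2E} \ge \Hmin[\rho]{X|T_1T_2E} - m_1.
\]
Dropping $L_1$ by \lemref{lem:tech.min-entropy-inequalities.chain} would cost $\log|\cL_1|\ge m_1$ and is too lossy; the correct intuition is that $L_1$ is independent of $X$ and influences the remaining registers only through the $m_1$\-/bit value $T_1$. I would prove the inequality through the operator form of \defref{def:min-entropy}: given an optimal $\sigma_{T_1T_2E}$ with $2^{-\Hmin[\rho]{X|T_1T_2E}}\,\1_X\tensor\sigma_{T_1T_2E}\ge\rho_{XT_1T_2E}$, the candidate $\tau_{L_1}\tensor\sigma_{T_1T_2E}$ (with the exponent relaxed by $m_1$) is feasible for $\Hmin[\rho]{X|L_1T_1T_2E}$; since everything is block\-/diagonal in the classical registers $L_1,T_1$, the verification reduces, block by block, to the requirement that for every $x$ the fraction of keys $\ell_1$ with $h_1(x,\ell_1)=t_1$ is at least $2^{-m_1}$ --- which is exactly the \emph{uniformity} of $h_1$ (\defref{def:uniformity}), available in all of the paper's use\-/cases. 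Feeding the resulting bound into the hybrid yields $(\nu_1+\nu_2)\sqrt{2^{-\Hmin[\rho]{X|T_1T_2E}+m}}$, establishing key\-/privacy. Uniformity of $h$ is then immediate and independent of the above: for uniform independent $L_1,L_2$ and any fixed $x$, uniformity of $h_1$ makes $T_1 = h_1(x,L_1)$ uniform on $\cT_1$, and for each value $t_1$ uniformity of $h_2$ (with the still\-/uniform $L_2$) makes $T_2 = h_2(x\|t_1,L_2)$ uniform on $\cT_2$, so $T = T_1\|T_2$ is uniform on $\cT_1\times\cT_2$.
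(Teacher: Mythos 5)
You should first note that the paper does not actually prove this lemma: it is imported verbatim from Fehr--Salvail (\cite[Proposition~4]{FS17a}) in the technical appendix, so there is no in-paper proof to compare against, and your proposal has to stand on its own. Most of it does. The triangle-inequality hybrid $\trnorm{\rho_{LTE}-\tau_L\tensor\rho_{TE}} \le \trnorm{\rho_{L_1L_2TE}-\tau_{L_2}\tensor\rho_{L_1TE}} + \trnorm{\rho_{L_1TE}-\tau_{L_1}\tensor\rho_{TE}}$ is sound; your verification of the two Markov chains $L_1\leftrightarrow XT_1\leftrightarrow T_2E$ and $L_2\leftrightarrow XT_1T_2\leftrightarrow L_1E$ from the conditional mutual independence of $L_1,L_2,E$ given $XT_1T_2$ is correct; the trick of putting $T_1$ into both the source $X\|T_1$ and the side information of $h_2$ so that the trace-norm expressions match, together with $\Hmin{XT_1|L_1T_1T_2E}=\Hmin{X|L_1T_1T_2E}$, is legitimate (nothing in \defref{def:key-privacy} forbids the side information from containing a copy of part of the source); and the uniformity claim for $h$ is immediate as you say.

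The genuine gap is exactly where you flagged it, and it is not merely a technical loose end: your proof of $\Hmin{X|L_1T_1T_2E} \ge \Hmin{X|T_1T_2E} - m_1$ reduces, via the operator form of \defref{def:min-entropy}, to the condition $\Pr_{L_1}[h_1(x,L_1)=t_1]\ge 2^{-m_1}$ for every realizable pair $(x,t_1)$ --- i.e., to the \emph{uniformity} of $h_1$. But the lemma as printed asserts the key-privacy composition from $\nu_1$- and $\nu_2$-key-privacy \emph{alone}; uniformity appears only as the hypothesis of the separate second claim. Key-privacy does not imply uniformity (a constant function provides $0$-key-privacy and is maximally non-uniform), and without uniformity your feasibility check genuinely breaks: if some output value $t_1$ is hit by only a $2^{-m_1-c}$ fraction of keys for some $x$, the candidate operator $2^{-\lambda+m_1}\,\1_X\tensor\tau_{L_1}\tensor\sigma_{T_1T_2E}$ fails to dominate the corresponding block of $\rho_{XL_1T_1T_2E}$, and the same obstruction reappears (with $L_2$ and $m_2$ in place of $L_1$ and $m_1$) if you run the hybrid in the other order. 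So as written your argument establishes the lemma only under the additional hypothesis that $h_1$ is uniform. This weaker statement does suffice for every use in this paper --- the functions $\sss$, $\mac$, and hence $\chi_y$ are all of the form $\Ext(x,\ell_1)\xor\ell_2$ and therefore uniform --- and the extra hypothesis may well match the hypotheses of the original proposition in \cite{FS17a}; but relative to the statement you were asked to prove, either you must supply an argument for the conditioning step that does not use uniformity, or you must acknowledge that the statement needs (and, in all of the paper's applications, has) the uniformity of $h_1$ as an assumption of the key-privacy claim as well.
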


\subsection{Guessing Games}
\label{app:guessing}

The lemmas in this section are also from \cite{FS17a}. They are
concerned with a setting in which different players sharing a quantum
state are trying to guess the outcome of a measurement performed by
one of them.

In the following we denote $n$ EPR
pairs by $\Phi^+_{AB}$, and $P^\theta_x = H^\theta \proj{x} H^\theta$
denotes a projector, which we use to measure qubits in either the
computational or diagonal basis. Note that by measuring half of EPR
pairs with these projectors, the other half results in the state
$H^\theta \proj{x} H^\theta$, i.e., \[ \ktrace[A]{\left(P^\theta_x \tensor
  I_B \right) \Phi^+_{AB} \left(P^\theta_x \tensor  I_B \right)} =
\frac{1}{2^n} H^\theta \proj{x} H^\theta.\]
Given two registers $\Theta$ and $A$, where $\Theta$ is classical and
$A$ is quantum, we denote the map which performs
this measurement on $A$ according to $\Theta$ and writes the result in a register $X$ as
$\cM^{\BB}_{\Theta A,X}$, i.e., 
\[ \cM^{\BB}_{\Theta A,X} \left( \rho_{\Theta A} \right) = \sum_{\theta,x} \tr_A
\proj{x} \tensor \left[ \left( \proj{\theta} \tensor P^\theta_x
\right) \rho_{\Theta A B} \left( \proj{\theta} \tensor P^\theta_x
\right)\right].\]

The first lemma considers a two player setting, where one player obtains
the measurement outcome $X$ and the second player wants to guess it.

\begin{lem}[\protect{\cite[Corollary~2]{FS17a}}]
\label{lem:guessing.one}
  Let $\rho_{\Theta E}$ be any cq-state, where the strings
  $\theta \in \cC \subset \{0,1\}^n$ are taken from a code $\cC$ with
  minimal distance $d$, let $\cE : \lo{E} \to \lo{AB}$ be any CPTP map
  where $\lo{A}$ is an $n$ qubit Hilbert space, and let
  $\sigma_{X\Theta B} \coloneqq \cM^{\BB}_{\Theta A,X} \circ \cE
  \left(\rho_{\Theta E}\right)$. Then
\[\pguess[\sigma]{X|B} \leq \pguess[\rho]{\Theta | E} \left(1 +
  \frac{|\cC|}{2^{d/2}} \right).\]
\end{lem}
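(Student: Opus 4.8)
The plan is to reduce the statement to a single operator inequality and then to an off-diagonal norm bound governed by the code distance. First I would fix an optimal POVM $\{\Gamma_x\}_x$ on $B$ achieving $\pguess[\sigma]{X|B}$ and expand the definition of $\sigma_{X\Theta B} = \cM^{\BB}_{\Theta A,X}\circ\cE(\rho_{\Theta E})$. Writing $\rho_{\Theta E} = \sum_{\theta\in\cC} q_\theta \proj{\theta}\tensor\rho^\theta_E$ and $\xi^\theta_{AB}\coloneqq\cE(\rho^\theta_E)$, and using $P^\theta_x = H^\theta\proj{x}H^\theta$ together with $\tr_A(P^\theta_x\,\xi^\theta_{AB}\,P^\theta_x) = \tr_A((P^\theta_x\tensor I_B)\xi^\theta_{AB})$, the guessing probability collapses to
\[ \pguess[\sigma]{X|B} = \sum_{\theta\in\cC} q_\theta \tr\!\left[F^\theta\,\xi^\theta_{AB}\right], \qquad F^\theta\coloneqq\sum_x P^\theta_x\tensor\Gamma_x. \]
Since $\sum_x P^\theta_x = I_A$ and $\Gamma_x\le I_B$, each $F^\theta$ satisfies $0\le F^\theta\le I$: the $A$-measurement in the (unknown) basis $\theta$ has been folded together with the adversary's measurement into a single effect on $AB$.

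Next I would isolate the quantity that must be controlled. Suppose one can show that $\sum_{\theta\in\cC} F^\theta \le (1+|\cC|\,2^{-d/2})\, I$. Then setting $N_\theta\coloneqq F^\theta/(1+|\cC|2^{-d/2})$ gives positive operators on $AB$ with $\sum_\theta N_\theta\le I$, which can be completed to a POVM that guesses $\Theta$ from $AB$. Because $\cE$ acts only on $E$ and leaves $\Theta$ untouched, data processing gives $\pguess[\xi]{\Theta|AB}\le\pguess[\rho]{\Theta|E}$, and hence
\[ \pguess[\sigma]{X|B} = \left(1+\tfrac{|\cC|}{2^{d/2}}\right)\!\sum_\theta q_\theta\tr[N_\theta\xi^\theta_{AB}] \le \left(1+\tfrac{|\cC|}{2^{d/2}}\right)\pguess[\xi]{\Theta|AB} \le \left(1+\tfrac{|\cC|}{2^{d/2}}\right)\pguess[\rho]{\Theta|E}, \]
which is exactly the claim. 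So everything reduces to the displayed operator inequality.

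To prove that inequality I would write $F^\theta = M_\theta^\dagger M_\theta$ with the Hermitian operator $M_\theta\coloneqq\sum_x P^\theta_x\tensor\sqrt{\Gamma_x}$ (using $P^\theta_x P^\theta_{x'} = \delta_{xx'}P^\theta_x$), and collect the $M_\theta$ into $\mathbf M\coloneqq\sum_\theta\ket{\theta}_C\tensor M_\theta$, so that $\|\sum_\theta F^\theta\| = \|\mathbf M^\dagger\mathbf M\| = \|\mathbf M\mathbf M^\dagger\|$, where $\mathbf M\mathbf M^\dagger$ is the block operator with blocks $M_\theta M_{\theta'}^\dagger$. A Schur-test bound for block operators (together with the symmetry $\|M_\theta M_{\theta'}^\dagger\| = \|M_{\theta'}M_\theta^\dagger\|$) then gives $\|\sum_\theta F^\theta\|\le\max_\theta\sum_{\theta'}\|M_\theta M_{\theta'}^\dagger\|$. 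The diagonal blocks are $M_\theta M_\theta^\dagger = F^\theta\le I$, so it remains to establish the off-diagonal bound $\|M_\theta M_{\theta'}^\dagger\|\le 2^{-d/2}$ for $\theta\ne\theta'$ in $\cC$; summing then yields $1+(|\cC|-1)2^{-d/2}\le 1+|\cC|2^{-d/2}$.

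Establishing this off-diagonal bound is the main obstacle. Its two ingredients are: (i) the matrix elements $\braket{\theta,x}{\theta',x'} = \bra{x}H^\theta H^{\theta'}\ket{x'}$ vanish unless $x,x'$ agree on the positions where $\theta,\theta'$ coincide, and otherwise have magnitude $2^{-w(\theta,\theta')/2}\le 2^{-d/2}$, since $H^\theta H^{\theta'}$ is a Hadamard transform on the $w(\theta,\theta')\ge d$ disagreement positions; and (ii) the completeness relation $\sum_x\Gamma_x = I_B$. The delicate point is that (i) alone does not suffice: bounding the block norm by the triangle inequality, or by submultiplicativity using only $F^\theta\le I$, merely gives $\|M_\theta M_{\theta'}^\dagger\|\le 1$, because the coefficient matrix $H^\theta H^{\theta'}$ is unitary. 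One must instead exploit the sign cancellations of the Hadamard transform jointly with the POVM completeness — as is transparent in the projective case $\Gamma_x = \proj{x}_B$, where the blocks act on mutually orthogonal subspaces of $B$ and the norm is the \emph{maximum} of the single entries $|\braket{\theta,x}{\theta',x}|\le 2^{-d/2}$ rather than their sum. Turning this cancellation into a bound valid for an arbitrary POVM is precisely the guessing-game estimate of Fehr and Salvail, to which I would defer for the remaining calculation.
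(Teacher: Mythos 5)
The paper itself offers no proof of this lemma: it is imported verbatim from Fehr and Salvail (the appendix says explicitly that the lemmas of that section are from \cite{FS17a}), so the comparison here is against your attempt standing on its own. Judged that way, your proposal has a genuine gap, and it sits exactly where the mathematical content of the lemma lies. Your reduction is correct: writing $\pguess[\sigma]{X|B}=\sum_\theta q_\theta\tr\bigl[F^\theta\xi^\theta_{AB}\bigr]$ with $F^\theta=\sum_x P^\theta_x\tensor\Gamma_x$, rescaling $\{F^\theta\}$ into a sub\-/POVM, and invoking data processing correctly isolates the operator inequality $\sum_{\theta\in\cC}F^\theta\le\bigl(1+|\cC|2^{-d/2}\bigr)\1$ as the crux. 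But your route to that inequality \--- the pairwise bound $\|M_\theta M_{\theta'}^\dagger\|\le 2^{-d/2}$ for $M_\theta=\sum_x P^\theta_x\tensor\sqrt{\Gamma_x}$ \--- is never established. You observe yourself that the Hadamard matrix\-/element bound alone cannot give it and that POVM completeness must enter in an essential way, and you then defer the claim to ``the guessing\-/game estimate of Fehr and Salvail.'' That deferral is circular: the statement being proven \emph{is} their guessing\-/game estimate (Corollary~2 of \cite{FS17a}), and your pairwise bound is not the statement of any of their results quoted in this paper, nor do you show that it is what their proof establishes. It is not even clear that the pairwise bound is true for general POVMs \--- it holds in the projective case and when the $\Gamma_x$ commute, but nothing you wrote controls the non\-/commuting case. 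As it stands, the code distance $d$ never enters your argument through a proven step.

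The gap is avoidable, because the operator inequality follows from a much simpler manipulation that needs no pairwise product bounds: swap the two sums,
\[ \sum_{\theta\in\cC}F^\theta=\sum_x R_x\tensor\Gamma_x,\qquad R_x\coloneqq\sum_{\theta\in\cC}H^\theta\proj{x}H^\theta. \]
Each $R_x$ is a sum of $|\cC|$ rank\-/one projectors onto the unit vectors $H^\theta\ket{x}$, so $\|R_x\|$ equals the norm of their Gram matrix, whose off\-/diagonal entries $\bra{x}H^{\theta\oplus\theta'}\ket{x}$ have magnitude $2^{-w(\theta,\theta')/2}\le 2^{-d/2}$; the maximal\-/row\-/sum bound for Hermitian matrices gives $\|R_x\|\le 1+(|\cC|-1)2^{-d/2}$. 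Then, since $R_x\le\|R_x\|\,\1_A$ and $\Gamma_x\ge 0$,
\[ \sum_x R_x\tensor\Gamma_x\le\Bigl(\max_x\|R_x\|\Bigr)\,\1_A\tensor\sum_x\Gamma_x\le\bigl(1+|\cC|2^{-d/2}\bigr)\1, \]
where POVM completeness is used exactly once, in the last step. Plugging this into your own reduction finishes the proof, and shows that the pairwise claim you were stuck on is both stronger than needed and beside the point.
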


The second lemma considers a three player setting. The first player
obtains the measurement outcome $X$, the second one has to get an
outcome $X'$ that is close to $X$ and the third one wants to guess
$X$. Note that in this setting, the all players have access to
$\Theta$.

\begin{lem}[\protect{\cite[Corollary~3, Remark~6]{FS17a}}]
\label{lem:guessing.two}
Let $\rho_{\Theta E}$ be any cq-state, where the strings
$\theta \in \cC \subset \{0,1\}^n$ are taken from a code $\cC$ with
minimal distance $d$. Let $\cE : \lo{E} \to \lo{ABC}$ be any CPTP map
where $\lo{A}$ and $\lo{B}$ are an $n$ qubit Hilbert spaces. Let
$\sigma_{XX'\Theta C} \coloneqq \cM^{\BB}_{\Theta A,X} \circ
\cM^{\BB}_{\Theta B,X'} \circ \cE \left(\rho_{\Theta E}\right)$,
i.e., both $A$ and $B$ are measured according to $\Theta$ and the
results are written in $X$ and $X'$, respectively. Finally, let
$\sigma^1_{XX'\Theta C}$ be the projection of $\sigma_{XX'\Theta C}$ on
the space with $w(X,X') \leq \varphi n$. Then
\[\pguess[\sigma^1]{X|\Theta C} \leq \pguess[\rho]{\Theta | E} \left(1 +
  \frac{|\cC|2^{h(\varphi)n}}{2^{d}} \right).\]
\end{lem}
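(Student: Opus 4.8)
The plan is to bound the guessing probability by \emph{turning any strategy that guesses $X$ into a strategy that guesses $\Theta$ from $E$}, at the cost of the multiplicative factor $1+|\cC|2^{h(\varphi)n}/2^d$. Write the input as $\rho_{\Theta E} = \sum_\theta p_\theta \proj{\theta}\tensor\rho^\theta_E$ and set $\rho^\theta_{ABC} = \cE(\rho^\theta_E)$. Since the adversary holds the classical register $\Theta$, an optimal guess of $X$ is a $\theta$-indexed POVM $\{M^\theta_x\}_x$ on $C$, so that $\pguess[\sigma^1]{X|\Theta C} = \max_{\{M^\theta_x\}}\sum_\theta p_\theta\tr\big[\Lambda^\theta\,\rho^\theta_{ABC}\big]$, where the \emph{success operator} is $\Lambda^\theta \coloneqq \sum_x P^\theta_x\tensor Q^\theta_x\tensor M^\theta_x$ with $Q^\theta_x \coloneqq \sum_{x'\,:\,w(x,x')\leq\varphi n} P^\theta_{x'}$. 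Here the factor $P^\theta_x$ records the $A$-outcome $X=x$, the projector $Q^\theta_x$ implements the truncation $w(X,X')\leq\varphi n$ on the $B$-outcome while tracing out $X'$, and $M^\theta_x$ is the guess.

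Next I push the success operator through the channel, $\tr[\Lambda^\theta\rho^\theta_{ABC}] = \tr[\cE^\dagger(\Lambda^\theta)\rho^\theta_E]$, which reduces everything to a single \emph{operator inequality}: $\sum_\theta \Lambda^\theta \leq c\,\1_{ABC}$ with $c = 1 + |\cC|2^{h(\varphi)n}/2^d$, valid for every admissible $\{M^\theta_x\}$ (using only $M^\theta_x\geq 0$ and $\sum_x M^\theta_x\leq\1_C$). Granting this, since $\cE^\dagger$ is positive and unital it gives $\sum_\theta\cE^\dagger(\Lambda^\theta)\leq c\,\1_E$, so $\{c^{-1}\cE^\dagger(\Lambda^\theta)\}_\theta$ is a sub-POVM on $E$; using it to guess $\Theta$ yields $\pguess[\rho]{\Theta|E}\geq c^{-1}\sum_\theta p_\theta\tr[\cE^\dagger(\Lambda^\theta)\rho^\theta_E] = c^{-1}\pguess[\sigma^1]{X|\Theta C}$, which is exactly the claim.

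The hard part is the operator inequality, and this is where the code distance and the Hamming-ball volume enter. I would prove it by bounding $\bra{\psi}(\sum_\theta\Lambda^\theta)\ket{\psi}$ for an arbitrary unit vector $\ket{\psi}$ on $ABC$ (equivalently, working with amplitudes on a purification, mirroring the proof of \lemref{lem:guessing.one}). The \emph{diagonal} contribution $\theta'=\theta$ is controlled by the orthogonality of the projectors $\{P^\theta_x\}_x$ together with $\sum_x M^\theta_x\leq\1_C$, and yields the leading term $1$. The \emph{off-diagonal} contributions $\theta'\neq\theta$ are bounded by Cauchy–Schwarz and reduce to the overlap of BB84 projectors in two distinct code-word bases: on each qubit where $\theta$ and $\theta'$ differ the overlap is $2^{-1/2}$, so a single factor $P^\theta P^{\theta'}$ contributes $2^{-w(\theta,\theta')/2}\leq 2^{-d/2}$ — precisely the estimate underlying \lemref{lem:guessing.one}. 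The two new features are that \emph{both} registers $A$ and $B$ are measured in the same basis, so this overlap appears squared, $2^{-w(\theta,\theta')}\leq 2^{-d}$, and that $Q^\theta_x$ has rank at most $2^{h(\varphi)n}$, so summing over the $x'$ inside the Hamming ball contributes a further factor $2^{h(\varphi)n}$. Summing the cross terms over the $|\cC|$ competing code words produces $|\cC|2^{h(\varphi)n}/2^d$, giving $c = 1 + |\cC|2^{h(\varphi)n}/2^d$.

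I expect the bookkeeping of the cross terms to be the main obstacle: one must check that the two-fold basis mismatch and the Hamming-ball truncation combine \emph{multiplicatively} rather than interfering. The most transparent way to make the squared overlap manifest is to pass to the EPR-pair picture used to define $\cM^{\BB}$ — replacing $A$ and $B$ by halves of maximally entangled pairs $\Phi^+$ measured according to $\Theta$ — so that the $A$- and $B$-overlaps factorize cleanly per position. If the direct operator-norm split proves delicate, the fallback is to purify $\rho^\theta_E$ and run the identical overlap/Cauchy–Schwarz argument at the level of amplitudes, reusing verbatim the minimum-distance estimate of \lemref{lem:guessing.one} but with the extra measured register and the Hamming-ball projector inserted.
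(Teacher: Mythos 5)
The paper does not prove this lemma itself --- it imports it from Fehr--Salvail \cite[Corollary~3, Remark~6]{FS17a} --- so your proposal must be measured against that proof. Your reduction in the first two paragraphs is correct and is indeed the natural (and FS-style) route: a $\Theta$-dependent guessing POVM $\{M^\theta_x\}_x$ on $C$ gives the success operator $\Lambda^\theta = \sum_x P^\theta_x \tensor Q^\theta_x \tensor M^\theta_x$, the adjoint $\cE^\dagger$ is positive and unital, and the whole lemma is equivalent to the operator inequality $\sum_\theta \Lambda^\theta \leq c\,\1$ (take $\Theta$ uniform and $E$ trivial to see the converse direction). Your diagonal bound $\norm{\Lambda^\theta} \leq 1$ is also right, since the $P^\theta_x$ are orthogonal and each block $Q^\theta_x \tensor M^\theta_x$ has norm at most $1$.

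The genuine gap is the mechanism you propose for the cross terms. You claim the $A$- and $B$-register basis overlaps ``appear squared,'' i.e.\ combine multiplicatively to $2^{-w(\theta,\theta')} \leq 2^{-d}$, with the Hamming ball contributing a factor $2^{h(\varphi)n}$. This factorization is false. Set $R^\theta \coloneqq \sum_x P^\theta_x \tensor Q^\theta_x$ on $AB$, which is what remains of $\Lambda^\theta$ if one relaxes $M^\theta_x \leq \1_C$ termwise. Then $\bigl\| R^\theta R^{\theta'} \bigr\| = 1$ for \emph{all} $\theta \neq \theta'$: the EPR vector $\Phi^{+\tensor n}$ satisfies $R^\theta \Phi^{+\tensor n} = \Phi^{+\tensor n}$ for every $\theta$, because $\Phi^+$ is invariant under $H \tensor H$, so measuring both halves in the same basis yields $X = X'$ exactly (per qubit, the ``difference-$e$ in basis $\theta$'' projectors are spanned by Bell states, joint eigenvectors of the commuting $Z\tensor Z$ and $X\tensor X$). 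A state can thus be perfectly correlated in all codeword bases simultaneously, and the per-position overlap picture you suggest (passing to the EPR-pair representation so that ``$A$- and $B$-overlaps factorize cleanly'') makes this failure mode maximal rather than avoiding it. Consequently, any estimate of $\bigl\|\sqrt{\Lambda^\theta}\sqrt{\Lambda^{\theta'}}\bigr\|$ that separates the $C$ register from $AB$ before exploiting $\sum_x M^\theta_x \leq \1_C$ can only give the trivial bound $1$, hence $\sum_\theta \Lambda^\theta \leq |\cC|\,\1$ and nothing better. The true origin of the $2^{-d}$ is a monogamy effect: the truncation $w(X,X') \leq \varphi n$ forces $B$ to be EPR-like correlated with $A$, and it is exactly this that prevents $C$ from predicting the $A$-outcome in more than essentially one basis; capturing it requires carrying the sub-POVM constraint on $C$ \emph{jointly} with the $AB$ structure through the cross-term estimate (this is where the actual proof in \cite{FS17a}, via their norm lemma $\|\sum_i A_i\| \leq \max_i \sum_j \|\sqrt{A_i}\sqrt{A_j}\|$ applied to the full operators including the $M^\theta_x$ and the rank-one structure of $P^\theta_x$, does its real work). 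Your own caveat --- that one must check the basis mismatch and the ball truncation ``combine multiplicatively rather than interfering'' --- identifies precisely the step that fails: they interfere, maximally, on Bell states, so the proposal as sketched does not establish the bound.
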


\section*{Acknowledgments}
\pdfbookmark[1]{Acknowledgments}{sec:ack}
%\addcontentsline{toc}{section}{Acknowledgments}

CP if grateful to Serge Fehr for lively conversations on composable
security, and for a lot of help understanding his paper. He would also
like to thank Ueli Maurer for discussions on Constructive
Cryptography, and Renato Renner for commenting on an initial draft of
this work and proposing many of the ideas used in this paper.

CP is partially supported by the US Air Force Office of Scientific
Research (AFOSR) via grant~FA9550-16-1-0245, the Swiss National
Science Foundation (via the National Centre of Competence in Research
`Quantum Science and Technology'), and the Zurich Information Security
and Privacy Center. %It represents the views of the authors.

% \bibliographystyle{eprintunsrt} %\bibliographystyle{tocplainedited} %\bibliographystyle{eprintalpha}
% \bibliography{quantum,classical} %,Akinori

% Define empty bibhead if not already defined
\providecommand{\bibhead}[1]{}
% Define tocrefpdfbookmark if not already defined
\expandafter\ifx\csname pdfbookmark\endcsname\relax%
  \providecommand{\tocrefpdfbookmark}{}
\else\providecommand{\tocrefpdfbookmark}{%
   \phantomsection%
   \addcontentsline{toc}{section}{\refname}}%
\fi

\tocrefpdfbookmark

\end{document}